\def\dOi{13(1:12)2017}
\def\glet{\global\let}
\theoremstyle{plain}
\let\subfigure\subfloat
\def\TT{\textit{tt}}
\def\FF{\textit{ff}}
\renewcommand{\ie}{{\it i.e.~}}
\definecolor{OliveGreen}{rgb}{.0, 0.36, 0.16}
\def\D{\mathcal D}
\def\C{\mathcal C}
\def\ensfonc{\Sigma}
\def\ensvar{\mathcal X}
\def\signature{(\ensfonc,\ensvar)}
\def\sfa#1#2{{\vphantom{#2}_{#1}\kern-.05em #2}}
\def\langtermesaux(#1,#2){T(#1, #2)}
\def\langtermes#1{\expandafter\langtermesaux#1}
\def\langtermesclos#1{T(#1)}
\DeclareMathOperator\pos{Pos}
\def\ssterme#1#2{{#1}_{|#2}}
\def\rempterme#1#2#3{{#1}{\left[#3\right]}_{#2}}
\def\automate{\mathcal A}
\def\bautomate{\mathcal B}
\def\langrec#1{\mathop{\raisebox{.15pt}{$\mathscr L$}}(#1)}
\def\langrecs#1#2{\mathop{\raisebox{.15pt}{$\mathscr L$}}(#1, #2)}
\def\move{\mathrel{\mathop\rightarrowtail}}
\def\PC{\mathcal{C\kern-0.4ex P}}
\def\PCx#1{\PC\kern-0.4ex_{#1}}
\def\automatefixin{{\automate_{\mathrm{in}*}}}
\def\afixrin{{\automate_{\mathrm{rin}*}}}
\newcommand{\autoInit}{\automate_{init}}
\def\ctxttriv{\square}
\def\Rin{R_{\mathrm{in}}}
\def\Rrin{R_{\mathrm{rin}}}
\DeclareMathOperator{\lirr}{I\textsc{rr}}
\DeclareMathOperator\airr{\mathcal{A}\kern-.0ex\mathchoice{\scriptstyle}{\scriptstyle}{\scriptscriptstyle}{\scriptscriptstyle}\mathcal{I\kern-.2exR\kern-.4exR}}
\def\pred{p_{\mathrm{red}}}
\def\paire(#1,#2){\left< #1, #2\right>}
\def\projgauche#1{\mathop{\Pi_1}\left(#1\right)}
\def\projdroite#1{\mathop{\Pi_2}\left(#1\right)}
\def\colour#1{\mathfrak{#1}}
\def\asc#1#2{{#1}^{\notcolour{#2}}}
\def\notcolour#1{\cancel {\mathfrak{#1}}}
\def\classcol#1#2{\left[#2\right]_{\colour{#1}}}
\def\movecol#1{\mathrel{\mathop\rightarrowtail\limits^{\colour{#1}}}}
\def\sseps{^{\ssepsaux}}
\def\ssepsaux{\notcolour R}
\def\moveun#1{\mathrel{\mathop\rightarrowtail\limits_{#1}}}
\def\moveuncol#1#2{\mathrel{\mathop\rightarrowtail\limits_{#1}^{\colour{#2}}}}
\def\moveet#1{\mathrel{\mathop\rightarrowtail\limits^{\ast}_{#1}}}
\def\moveetcol#1#2{\mathrel{\mathop\rightarrowtail\limits^{\colour{#2},\ast}_{#1}}}
\def\moverepcol#1#2{\mathrel{\mathop{\ooalign{\raisebox{.375ex}{$\rightarrowtail$}\cr\raisebox{-.375ex}{$\leftarrowtail$}}}\limits_{#1}^{\colour{#2}}}}
\def\notmoverepcol#1#2{\mathrel{\cancel{\mathop{\ooalign{\raisebox{.375ex}{$\rightarrowtail$}\cr\raisebox{-.375ex}{$\leftarrowtail$}}}\limits_{#1}^{\colour{#2}}}}}
\def\moverepetcol#1#2{\mathrel{\mathop{\ooalign{\raisebox{.375ex}{$\rightarrowtail$}\cr\raisebox{-.375ex}{$\leftarrowtail$}}}\limits^{\colour{#2}, \ast}_{#1}}}
\def\quotrep#1{#1\kern-.25ex/\kern-.35ex{\mathord{\ooalign{\raisebox{.375ex}{\tiny$\rightarrowtail$}\cr\raisebox{-.2ex}{\tiny$\leftarrowtail$}}}}}
\def\quoteqv#1#2{#1\kern-.1ex/\kern-.35ex{\mathord{\raisebox{-.2ex}{\tiny $#2$}}}}
\def\quotcol#1#2{#1\kern-.1ex/\kern-.35ex{\mathord{\raisebox{-.2ex}{$\mathchoice{\scriptstyle}{\scriptstyle}{\scriptscriptstyle}{\scriptscriptstyle}\colour{#2}$}}}}
\def\normalisation#1{\mathop{N\kern-1pt o\kern-1pt r\kern-1pt m_{#1}}}
\def\normalisationaux#1#2{\mathop{N\kern-1.3pt o\kern-1.3pt r\kern-1.3pt m\kern-1pt A\kern-1.3pt u\kern-1.3pt x_{#1}^{#2}}}
\newcommand{\timbuk}{{\sf Timbuk}}
\def\Q{Q}
\def\F{\ensfonc}
\def\X{\ensvar}
\def\R{R}
\def\rw{\rightarrow}
\def\rwR{\rightarrow_R}
\def\TFQ{\langtermesclos{\ensfonc,{\Q}}}
\def\TF{\langtermesclos\ensfonc}
\def\TC{\langtermesclos\C}
\def\sep{\: | \:}
\def\A{\automate}
\def\B{\bautomate}
\def\Lange(#1){\langrec{{#1}\sseps}}
\def\Lang(#1){\langrec{#1}} 
\newenvironment{enumerate-ligne}{\par\setcounter{enumi}{0}\def\item{\ifhmode \unskip\hfill\else\leavevmode\hskip 0pt plus .45fill\fi\refstepcounter{enumi}\sbox \@tempboxa {\@mklab {\labelenumi}}\global \setbox \@labels \hbox {\unhbox \@labels  \hskip -\labelwidth \hskip -\labelsep \ifdim \wd \@tempboxa >\labelwidth \box \@tempboxa \else \hbox to\labelwidth {\unhbox \@tempboxa }\fi \hskip \labelsep }\box\@labels\ignorespaces}}{\hskip 0pt plus .45fill\hskip\labelwidth\hskip\labelsep\null}
\def\anditem{\unskip\hskip -.5em plus.5fill{and}\hskip 0pt plus-.5fill\null \item}
 \def\anditem{ and\item}
\begin{document}

\title[Reachability Analysis of Innermost Rewriting]{Reachability Analysis of
  Innermost Rewriting\rsuper*}
\titlecomment{{\lsuper*}This paper is an extended version of the paper~\cite{GenetS-RTA15}.}

\author[T.~Genet]{Thomas Genet}	
\address{IRISA, Campus de Beaulieu, 35042 Rennes Cedex, France}	
\email{\{Thomas.Genet, Yann.Salmon\}@irisa.fr}  

\author[Y. Salmon]{Yann Salmon}	
\address{\vspace{-18 pt}}	


\keywords{term rewriting systems, strategy, tree automata, functional program, static analysis}
\subjclass{I.2.3 Deduction and Theorem Proving, F.4.2 Grammars and Other Rewriting Systems, D.2.4 Software/Program Verification}


\begin{abstract}
\noindent
We consider the problem of inferring a grammar describing the output of a
functional program given a grammar describing its input.  Solutions to this
problem are helpful for detecting bugs or proving safety properties of
functional programs, and several rewriting tools exist for solving this problem.
However, known grammar inference techniques are not able to take evaluation
strategies of the program into account. This yields very imprecise results when
the evaluation strategy matters. In this work, we adapt the Tree Automata
Completion algorithm to approximate accurately the set of terms reachable by
rewriting under the innermost strategy. We formally prove that the proposed
technique is sound and precise w.r.t. innermost rewriting. We show that those
results can be extended to the leftmost and rightmost innermost case. The
algorithms for the general innermost case have been implemented in the Timbuk
reachability tool. Experiments show that it noticeably improves the accuracy of
static analysis for functional programs using the call-by-value evaluation strategy.
\end{abstract}

\maketitle

\section{Introduction and motivations}
If we define by a grammar the set of inputs of a functional program, is it
possible to infer the grammar of its output? 
Some strongly typed functional programming languages (like Haskell, OCaml, Scala and F\#)
have a type inference mechanism. This mechanism, among others, permits to
automatically detect some kinds of errors in the programs. In particular, when the
inferred type is not the expected one, this suggests that there may be a bug in
the function. To prove properties stronger than well typing of a program, it is
possible to define properties and, then, to prove them using a proof assistant
or an automatic theorem prover. However, defining those properties with logic
formulas (and do the proof) generally requires a strong expertise.

Here, we focus on a restricted family of properties: regular properties on the
structures manipulated by those programs. Using a grammar, we define the set of
data structures given as input to a function and we want to infer the grammar
that can be obtained as output (or an approximation). Like in the case of type
inference, the output grammar can suggest that the program contains a bug, or on
the opposite, that it satisfies a regular property.

The family of properties that can be shown in this way is restricted, but it
strictly generalizes standard typing as used in languages of the ML family\footnote{Standard types can easily be expressed
  as grammars. The opposite is not true. For instance, with a grammar one can
  distinguish between an empty and a non empty list.}.
There are other approaches where the type system is enriched by logic formulas
and arithmetic like~\cite{VazouRJ-ESOP2013,CastagnaNXHLP-POPL14},
but they generally require to annotate the output of the function for type
checking to succeed. The properties we consider here are 
intentionally simpler so as to limit as much as possible the need for
annotations. The objective is to define a {\em lightweight} formal verification
technique. The verification is {\em formal} because it {\em proves} that the
results have a particular form. But, the verification is {\em lightweight} for
two reasons. First, the proof is carried out automatically: no interaction with
a prover or a proof assistant is necessary. Second, it is not necessary to state
the property on the output of the function using complex logic formulas or an
enriched type system but, instead, only to observe and check the result of an
abstract computation.

With regards to the grammar inference technique itself, many works are devoted
to this topic in the functional programming
community~\cite{JonesA-TCS07,KobayashiTU-POPL10,OngR-POPL11}\footnote{Note that the objective
  of other papers like~\cite{BroadbentCHS-ICFP13,Kobayashi-ACM13} is different. They aim at
  predicting the control flow of a program rather than estimating the possible
  results of a function (data flow).} as well as in the
rewriting
community~\cite{Genet-RTA98,TakaiKS-RTA00,
BoichutCHK-IJFCS09,GenetR-JSC10,KochemsO-RTA11,Lisitsa-RTA12,BoichutCR-RTA13,Genet-WRLA14,Genet-JLAMP15}.  
In~\cite{Genet-WRLA14,Genet-JLAMP15}, starting from a term rewriting system (TRS for short) encoding a function and
a tree automaton recognising the inputs of a function, it is possible to
automatically produce a tree automaton {\em over-approximating} as precisely as
possible the outputs. Note that a similar reasoning can be done on higher-order
programs~\cite{JonesA-TCS07,GenetS-rep13} using a well-known encoding of higher order
functions into first-order TRS~\cite{Reynolds-IP69}. However, for the sake of 
simplicity, most examples used in this paper will be first order functions.
This is implemented in the \timbuk\ tool~\cite{timbuk}. Thus, we are 
close to building an {\em abstract interpreter}, evaluating a function on an
(unbounded) regular set of inputs, for a real programming language. However,
none of the aforementioned grammar inference techniques takes the evaluation
strategy into account, though every functional programming language has
one. As a consequence, those techniques produce very poor results as soon as the evaluation
strategy matters or, as we will see, as soon as the program is not terminating. 
This paper proposes a grammar inference technique for the innermost strategy:
\begin{itemize}
\item overcoming the precision problems of~\cite{JonesA-TCS07,OngR-POPL11}
and~\cite{Genet-RTA98,TakaiKS-RTA00,Takai-RTA04, 
BoichutCHK-IJFCS09,GenetR-JSC10,Lisitsa-RTA12,BoichutCR-RTA13,Genet-WRLA14,Genet-JLAMP15} on
the analysis of functional programs using call-by-value strategy
\item whose accuracy is not only shown on a practical point of view but also 
formally proved. This is another improvement w.r.t. all others grammar inference techniques
(except~\cite{GenetR-JSC10}).
\end{itemize}

\subsection{Towards an abstract OCaml interpreter}

In the following, we assume that we have an abstract OCaml interpreter. This
interpreter takes a regular expression as an input and outputs another regular
expression. In fact, all the computations
presented in this way have been performed with 
\timbuk\ (and latter with \timbuk STRAT), but on a TRS and a tree automaton
rather than on an OCaml function and a regular expression. We made this
choice to ease the understanding of input and output languages, since regular
expressions are far more easier to read and to understand than tree
automata. Assume that we have a notation, inspired by regular expressions, to
define regular languages of lists. Let us denote by {\tt [a*]} (resp. {\tt
  [a+]}) the language of lists having 0 (resp. 1) or more occurrences of symbol
{\tt a}.
We denote by {\tt [(a|b)*]} any list with 0 or more occurrences of {\tt a} and {\tt b} (in any
order). Now, in OCaml, we define a function deleting all the occurrences of an
element in a list. Here is a first (bugged) version of this function:

\begin{lstlisting}[escapechar=@,numbers=none]
let rec delete x ls= match ls with
  | [] -> []
  | h::t -> if h=x then t else h::(delete x t);;
\end{lstlisting}

\noindent
Of course, one can perform tests on this function using the usual OCaml interpreter:

\medskip
\noindent
{\small
\textcolor{OliveGreen}{\tt \# delete 2 [1;2;3];;} \\
\textcolor{OliveGreen}{\tt -:int list= [1;3]}
}

\medskip
\noindent
With an {\em abstract} OCaml interpreter dealing with grammars, we could
ask the following question: what is the set of the results obtained
by applying {\tt delete} to {\tt a} and to any list of
{\tt a} and {\tt b}?

\medskip
\noindent
{\small
\textcolor{OliveGreen}{\tt \# delete a [(a|b)*];;} \\
\textcolor{OliveGreen}{\tt -:abst list= [(a|b)*]}
}
\medskip

\noindent
The obtained result is not the expected one. Since all occurrences of {\tt a}
should have been removed, we expected the result {\tt [b*]}. Since the abstract
interpreter results into a grammar {\em over-approximating} the set of outputs, this
does not {\em show} that there is a bug, it only suggests it (like for type
inference). Indeed, in the definition of {\tt delete} there is a missing 
recursive call in the {\tt then} branch. If we correct this mistake, we get:

\medskip

\noindent
{\small
\textcolor{OliveGreen}{\tt \# delete a [(a|b)*];;}\\
\textcolor{OliveGreen}{\tt -:abst list= [b*]}
}

\medskip

\noindent
This result proves that {\tt delete} deletes all occurrences of an element in a
list. This is only one of the expected properties of {\tt delete}, but shown
automatically and without complex formalization. 
Here is, in \timbuk{} syntax, the TRS $R$ (representing {\tt delete}) and the
tree automaton $A0$ (representing {\tt [(a|b)*]}) that are given to
\timbuk{} to achieve the above proof.

\medskip
{\small
\begin{alltt}
\Ops delete:2 cons:2 nil:0 a:0 b:0 ite:3 true:0 false:0 eq:2
\Vars X Y Z

\TRS R
eq(a,a)->true          eq(a,b)->false       eq(b,a)->false       eq(b,b)->true               
delete(X,nil)->nil     ite(true,X,Y)->X     ite(false,X,Y)->Y
delete(X,cons(Y,Z))->ite(eq(X,Y),delete(X,Z),cons(Y,delete(X,Z)))

\Automaton A0  
\States qf qa qb qlb qlab qnil  
\FinalStates qf
\Transitions  delete(qa,qlab)->qf  a->qa  b->qb  nil->qlab
cons(qa,qlab)->qlab    cons(qb,qlab)->qlab
\end{alltt}
}

\medskip
\noindent
The resulting automaton computed by \timbuk\ is the following. It is not
minimal but its recognised language is equivalent to {\tt [b*]}.

\medskip
{\small 
\begin{alltt}
\States q0 q6 q8 
\FinalStates q6 
\Transitions cons(q8,q0)->q0  nil->q0  b->q8  cons(q8,q0)->q6  nil->q6
\end{alltt}
}  

\subsection{What is the problem with evaluation strategies?}

Let us consider the function {\tt sum(x)} which computes the sum
of the {\tt x} first natural numbers.

\medskip
{\small
\begin{lstlisting}[numbers=none]
let rec sumList x y=           let rec nth i (x::ls)= 
  (x+y)::(sumList (x+y) (y+1))    if i<=0 then x else nth (i-1) ls
let sum x= nth x (sumList 0 0)
\end{lstlisting}}

  This function is terminating with call-by-need (used in Haskell) but not with
  call-by-value strategy (used in OCaml). Hence, any call to {\tt sum} for any
  number {\tt i} will not terminate because of OCaml's evaluation strategy. Thus
  the result of the abstract interpreter on {\tt sum s*(0)} ({\em i.e.} {\tt
    sum} applied to any natural number {\tt 0}, {\tt s(0)}, \ldots) should be an
  empty grammar meaning that there is an empty set of results. However, if we
  use any of the techniques mentioned in the introduction to infer the output
  grammar, it will fail to show this. All those techniques compute reachable
  term grammars that do not take evaluation strategy into account. In
  particular, the inferred grammars will also contain all call-by-need
  evaluations. Thus, an abstract interpreter built on those techniques will
  produce a result of the form {\tt s*(0)}, which is a very rough
  approximation. In this paper, we propose to improve the accuracy of such
  approximations by defining a language inference technique taking the
  call-by-value evaluation strategy into account.

\subsection{Computing over-approximations of innermost reachable terms}

Call-by-value evaluation strategy of functional programs is strongly related to
innermost rewriting. The problem we are interested in is thus to compute (or to
over-approximate) the set of innermost reachable terms. 
For a TRS $R$ and a set of terms
$L_0\subseteq \langtermesclos{\ensfonc}$, the set of reachable terms is
$R^*(L_0)=\enstq{t \in \langtermesclos{\ensfonc}}{\exists s\in L_0, s \to^*_{R}
  t}$. This set can be computed for specific classes of $R$ but, in general, it
has to be approximated. 
Most of the techniques compute such approximations using tree automata (and not
grammars) as the core formalism to represent or approximate the (possibly)
infinite set of terms $R^*(L_0)$. Most 
of them also rely on a Knuth-Bendix completion-like algorithm to produce an
automaton $\automate^*$ recognising exactly, or over-approximating, the set of
reachable terms. As a result, these techniques can be referred to as {\em tree
  automata completion}
techniques~\cite{Genet-RTA98,TakaiKS-RTA00,BoichutCHK-IJFCS09,GenetR-JSC10,Lisitsa-RTA12}. 

Surprisingly, very little effort has been paid to computing or
over-approximating the set $R_{strat}^*(L_0)$, \ie set of reachable terms when $R$ is
applied with a strategy $strat$. To the best of our knowledge, Pierre R\'{e}ty
and Julie Vuotto's work~\cite{RetyV-RTA02} is the first one to have tackled this goal.  They give
some sufficient conditions on $L_0$ and $R$ for $R_{strat}^*(L_0)$ to be 
recognised by a tree automaton $\automate^*$, where $strat$ can be the innermost
or the outermost strategy. Innermost reachability for shallow TRSs was also
studied in~\cite{GodoyJacquemard-WRS08}. However, in both cases, the
restrictions on $R$ are  
strong and generally incompatible with functional programs seen as TRS. Moreover, the
proposed techniques are not able to over-approximate reachable terms when the TRSs
does not satisfy the restrictions.

In this paper, we concentrate on the innermost strategy and define a tree
automata completion algorithm over-approximating the set $\Rin^*(L_0)$
(innermost reachable terms) for any left-linear TRS $R$ and any regular set of
input terms $L_0$. As the completion algorithm of~\cite{GenetR-JSC10}, it is
parameterized by a set of term equations $E$ defining the precision of the
approximation. We prove the soundness of the algorithm: for all set of equation
$E$, if completion terminates then the resulting automaton $\automate^*$
recognises an over-approximation of $\Rin^*(L_0)$.  Then, we prove a precision
theorem: $\automate^*$ recognises no more terms than terms reachable by
innermost rewriting with $R$ modulo equations of $E$.  We also show how these
theorems can be extended to rightmost (or leftmost) innermost. Finally, we show on
several examples that, using innermost completion, we noticeably improve the
accuracy of the static analysis of functional programs.

This paper is an extended version of~\cite{GenetS-RTA15}. With regards to the
original paper, this paper contains the full proofs and the correctness and precision
theorems have been generalized to reachable and to irreducible reachable terms (normalized
forms). The completion technique and both correctness and precision theorems have been extended to the 
leftmost/rightmost innermost strategy. Finally, the paper includes several
detailed examples (including a higher-order one) that were not part of the
conference paper. This paper is organized as follows. Section~\ref{sect-notations} recalls some
basic notions about TRSs and tree automata.
Section~\ref{sect-innermost} exposes innermost completion.
Section~\ref{sect-th} states and proves the soundness of this method. Section~\ref{sect-precision} states the precision theorem.
Section~\ref{sect-fun} demonstrates how our new technique can effectively give more precise results on functional programs thanks to the tool \timbuk STRAT, an implementation of our method in the \timbuk{} reachability tool~\cite{timbuk}.
Section~\ref{equation} explains how equations can be inferred from the TRS to
analyze. Section~\ref{right} presents a direct extension of the innermost
completion technique to the leftmost and outermost cases.

\section{Preliminaries}\label{sect-notations}
 We use the same basic definitions and notions as in~\cite{BaaderN-book98}
 and~\cite{Terese} for TRS and as in~\cite{tata} for tree automata.

\subsection{Terms}
\begin{defi}[Signature]
 A signature is a set whose elements are called function symbols. Each function
 symbol has an arity, which is a natural integer. Function symbols of arity 0
 are called constants. Given a signature $\ensfonc$ and $k\in\N$, the set of its
 function symbols of arity $k$ is denoted by $\ensfonc_k$.
\end{defi}

\begin{defi}[Term, ground term, linearity]
 Given a signature $\ensfonc$ and a set $\ensvar$ whose elements are called variables and such that $\ensfonc\inter\ensvar=\emptyset$, we define the set of terms over $\ensfonc$ and $\ensvar$, $\langtermes\signature$, as the smallest set such that~:
 \begin{enumerate}
  \item $\ensvar\subset\langtermes\signature$ and
  \item $\forall k\in\N, \forall f\in\ensfonc_k, \forall t_1,\dots,t_k\in\langtermes\signature, f(t_1,\dots,t_k)\in\langtermes\signature$.
 \end{enumerate}

 Terms in which no variable appears, \ie terms in
 $\langtermes(\ensfonc,\emptyset)$, are called ground; the set of ground terms
 is denoted $\langtermesclos\ensfonc$. Terms in which any variable appears at
 most once are called linear.\footnote{In particular, any ground term is
   linear.} 
\end{defi}

\begin{defi}[Substitution]
 A substitution over $\langtermes\signature$ is an application from $\ensvar$ to
 $\langtermes\signature$. Any substitution is inductively extended to
 $\langtermes\signature$ by
 $\sigma(f(t_1,\dots,t_k))=f(\sigma(t_1),\dots,\sigma(t_k))$. Given a
 substitution $\sigma$ and a term $t$, we denote $\sigma(t)$ by $t\sigma$.
\end{defi}

\begin{defi}[Context]
 A context over $\langtermes\signature$ is a term in $\langtermes(\ensfonc\cup\ensvar,\{\ctxttriv\})$ in which the variable $\ctxttriv$ appears exactly once. A ground context over $\langtermes\signature$ is a context over $\langtermesclos\ensfonc$. The smallest possible context, $\ctxttriv$, is called the trivial context. Given a context $C$ and a term $t$, we denote $C[t]$ the term $C\sigma_t$, where $\sigma_t:\ctxttriv\mapsto t$.
\end{defi}

\begin{defi}[Position]
 Positions are finite words over the alphabet $\N$. The set of positions of term $t$, $\pos(t)$, is defined by induction over $t$:
 \begin{enumerate}
  \item for all constants $c$ and all variables $X$, $\pos(c)=\pos(X)=\{\Lambda\}$ and
  \item $\pos(f(t_1,\dots,t_k))=\{\Lambda\}\cup\bigcup_{i=1}^k \{i\}.\pos(t_i)$.
 \end{enumerate}
\end{defi}

\begin{defi}[Subterm-at-position, replacement-at-position]
 The position of the hole in context $C$, $\pos_\ctxttriv(C)$, is defined by induction on $C$:
 \begin{enumerate}
  \item $\pos_\ctxttriv(\ctxttriv)=\Lambda$
  \item $\pos_\ctxttriv(f(C_1,\dots,C_k))=i.\pos_\ctxttriv(C_i)$, where $i$ is the unique integer in $\Iff1k$ such that $C_i$ is a context.
 \end{enumerate}

 Given a term $u$ and $p\in\pos(u)$, there is a unique context $C$ and a unique
 term $v$ such that $\pos_\ctxttriv(C)=p$ and $u=C[v]$. The term $v$ is denoted
 by $\ssterme up$, and, given another term $t$, we denote $\rempterme upt=C[t]$.
\end{defi}

\subsection{Rewriting}
\begin{defi}[Rewriting rule, term rewriting system]
 A rewriting rule over $\signature$ is a couple
 $(\ell,r)\in\langtermes\signature\times\langtermes\signature$, denoted by $\ell\to r$, such that any variable appearing in $r$ also appears in $\ell$. A term rewriting system (TRS) over $\signature$ is a set of rewriting rules over $\signature$.
\end{defi}

\begin{defi}[Rewriting step, redex, reducible term, normal form, reflexive and
  transitive closure]\label{defn:reecriture}
 Given a signature $\signature$, a TRS $R$ over it and two terms $s,t\in\langtermesclos\ensfonc$, we say that $s$ can be rewritten into $t$ by $R$, and we note $s\to_R t$ if there exist a rule $\ell\to r\in R$, a ground context $C$ over $\langtermesclos\ensfonc$ and a substitution $\sigma$ over $\langtermes\signature$ such that $s=C[\ell\sigma]$ and $t=C[r\sigma]$.
 
 In this situation, the term $s$ is said to be reducible by $R$ and the subterm
 $\ell\sigma$ is called a redex of $s$. A term $s$ that is irreducible by $R$ is
 a $R$-normal form. The set of terms irreducible by $R$ is denoted $\lirr(R)$.
 We denote $\to_R^*$ the reflexive and transitive closure of $\to_R$.
\end{defi}

\begin{defi}[Set of reachable terms, normalized terms]\label{defn:ensaccessibles}
 Given a signature $\signature$, a TRS $R$ over it and a set of terms
 $L\subset\langtermesclos\ensfonc$, we denote
 $R(L)=\enstq{t\in\langtermesclos\ensfonc}{\exists s\in L, s\to_R t}$, 
 the set of reachable terms $R^*(L)=\enstq{t\in\langtermesclos\ensfonc}{\exists s\in L, s\to_R^* t}$, and
 the set of normalized terms $R^!(L)= R^*(L) \cap \lirr(R)$.
\end{defi}

\begin{defi}[Left-linearity]
 A TRS $R$ is said to be left-linear if for each rule $\ell\to r$ of $R$, the term $\ell$ is linear.
\end{defi}


\subsection{Equations}

\begin{defi}[Equivalence relation, congruence]
 A binary relation is an equivalence relation if it is reflexive, symmetric and transitive.
 An equivalence relation $\equiv$ over $\langtermesclos\ensfonc$ is a congruence
 if for all $k\in\N$, for all $f\in\ensfonc_k$, for all
 $t_1,\dots,t_k,s_1,\dots,s_k\in\langtermesclos\ensfonc$ such that $\forall
 i=1\ldots k$, $t_i\equiv s_i$, we have $f(t_1,\dots,t_k)\equiv f(s_1,\dots,
 s_k)$.
\end{defi}

\begin{defi}[Equation, $\equiv_E$]
  An equation over $\signature$ is a pair of terms
  $(s,t)\in\langtermes\signature\times\langtermes\signature$, denoted by
  $s=t$.  A set $E$ of equations over $\signature$ induces a congruence
  $\equiv_E$ over $\langtermesclos\ensfonc$ which is the smallest congruence
  over $\langtermesclos\ensfonc$ such that for all $s=t\in E$ and for all
  substitutions $\theta:\ensvar\to\langtermesclos\ensfonc$, $s\theta\equiv_E
  t\theta$. The equivalence classes of $\equiv_E$ are denoted with
  $[\cdot]_E$.
\end{defi}

\begin{defi}[Rewriting modulo $E$]\label{defn:ensacc-equation}
  Given a TRS $R$ and a set of equations $E$ both over $\signature$, we define
  the $R$ modulo $E$ rewriting relation, $\to_{R/E}$, as follows. For any
  $u,v\in\langtermesclos\ensfonc$, $u\to_{R/E} v$ if and only if there exist
  $u',v'\in\langtermesclos\ensfonc$ such that $u\equiv_E u'$, $u'\to_R v'$ and
  $v'\equiv_E v$. We define $\to_{R/E}^*$ as the reflexive and transitive
  closure of $\to_{R/E}$, and $(R/E)(L)$ and $(R/E)^*(L)$ in the same way as $R(L)$ and
  $R^*(L)$ where $\to_{R/E}$ replaces $\to_R$.
\end{defi}

\subsection{Tree automata}

 \begin{defi}[Tree automaton, delta-transition, epsilon-transition]
   An automaton over $\ensfonc$ is some $\automate=(\ensfonc,Q,Q_F,\Delta)$
   where $Q$ is a finite set of states (symbols of arity $0$ such that
   $\ensfonc\cap Q=\emptyset$), $Q_F$ is a subset of $Q$ whose elements
   are called final states and $\Delta$ a finite set of transitions. A
   delta-transition is of the form $f(q_1,\dots,q_k)\move q'$ where
   $f\in\ensfonc_k$ and $q_1,\dots,q_k,q'\in Q$. An epsilon-transition is of the
   form $q\move q'$ where $q,q'\in Q$. A configuration of $\automate$ is a term
   in $\langtermes(\ensfonc,Q)$.
 A configuration is elementary if each of its sub-configurations at depth 1 (if any) is a state.
\end{defi}

\noindent
If $\automate=(\ensfonc,Q,Q_F,\Delta)$, by notation abuse, we sometimes write
$q\in\A$ (resp. $s\rw q \in \A$) as a short-hand for $q\in\Q$ (resp. $s \rw q \in \Delta$). We also write $\A
\cup \{s \rw q\}$ for the automaton obtained from $\A$ by adding $q$ to $\Q$ and
$s \rw q$ to $\Delta$.

\begin{defi}
 Let $\automate=(\ensfonc,Q,Q_F,\Delta)$ be an automaton and let $c,c'$ be
 configurations of $\automate$. We say that $\automate$ recognises $c$ into $c'$
 in one step, and denoted by $c\moveun\automate c'$ if there is a transition
 $\tau\move\rho$ in $\automate$ and a context $C$ over $\langtermes(\ensfonc,Q)$
 such that $c=C[\tau]$ and $c'=C[\rho]$. We denote by $\moveet\automate$ the
 reflexive and transitive closure of $\moveun\automate$ and, for any $q\in Q$,
 $\langrecs\automate q=\enstq{t\in\langtermesclos\ensfonc}{t\moveet\automate
   q}$. We extend this definition to subsets of $Q$ and denote it by $\langrec\automate=\langrecs\automate{Q_F}$.
 A sequence of configurations $c_1,\dots,c_n$ such that $t\moveun\A
 c_1\moveun\A\dots\moveun\A c_n\moveun\A q$ is called a recognition path for $t$
 (into $q$) in $\A$. 
\end{defi}

 \begin{exa}\label{ex:completionInnermost}\label{ex:completionInnermost-1}
Let $\ensfonc$ be defined with $\ensfonc_0=\{n,0\}$,  $\ensfonc_1=\{s,a,f\}$,
$\ensfonc_2=\{c\}$ where $0$ is meant to represent integer zero, $s$ the
successor operation on integers, $a$ the predecessor (\enquote{antecessor})
operation, $n$ the empty list, $c$ the constructor of lists of integers and $f$
is intended to be the function on lists that filters out integer zero. 
Let \begin{align*}
R= & \{f(n) \to n, f(c(s(X),Y)) \to c(s(X), f(Y)), f(c(a(X),Y)) \to c(a(X),
f(Y)), \\ & f(c(0,Y)) \to f(Y), a(s(X)) \to X, s(a(X)) \to X \}. \end{align*}
Let $\automate_0$ be the tree automaton with final state
  $q_f$ and transitions $\{n\move q_n, 0\move q_0, s(q_0)\move q_s, a(q_s)\move q_a, c(q_a,q_n)\move q_c, f(q_c)\move q_f\}$. We have $\langrecs{\automate_0}{q_f}=\{f(c(a(s(0)),n))\}$ and $R(\langrecs{\automate_0}{q_f})=\{f(c(0,n)), c(a(s(0)), f(n))\}$.
 \end{exa}
 
\begin{rem}
  In tree automata, epsilon transitions may have \enquote{colors}, like $\colour R$ for
  transition $q\movecol R q'$.  We will use colors $\colour R$ and $\colour E$
  for transitions denoting either rewrite or equational steps.
\end{rem}

\begin{defi}
 Given an automaton $\A$ and a color $\colour R$, we denote by $\asc\automate R$ the automaton obtained from $\automate$ by removing all transitions colored with $\colour R$.
\end{defi}

\begin{defi}[Determinism, Completeness, Accessibility]
  An automaton is deterministic if it has no epsilon-transition and for all
  delta-transitions $\tau\move\rho$ and $\tau'\move\rho'$, if $\tau=\tau'$ then
  $\rho=\rho'$. An automaton is complete if each of its elementary
  configurations is the left-hand side of some of its transitions. A state $q$
  of automaton $\A$ is accessible if $\langrecs\A q\neq\emptyset$. An automaton
  is accessible if all of its states are.
\end{defi}

\begin{defi}[Equivalence relation on states and configurations]
 Given two states $q$, $q'$ of some automaton $\automate$ and a color $\colour
 E$, we note $q\moverepcol \automate E q'$ when we have both $q\moveuncol
 \automate Eq'$ and $q'\moveuncol \automate E q$. This relation is extended to a congruence relation over $\langtermes(\ensfonc,Q)$. The equivalence classes are noted with $\classcol E\cdot$.
\end{defi}

\begin{exa} 
Let $\automate$ be the tree automaton with transitions $a \move q_0$, $b \move
q_1$, $s(q_0) \move q_2$, $q_0 \moveuncol{\automate}{E}{q_1}$ and $q_1
\moveuncol{\automate}{E}{q_0}$. The equivalence class $\classcol E {q_0}$ contains
$q_0$ and $q_1$. The equivalence class $\classcol E {s(q_0)}$ contains
configurations $s(q_0)$ and $s(q_1)$.
\end{exa}
\begin{rem}
 $q\moverepetcol\automate Eq'$ is stronger than $(q\moveetcol\automate E q'
 \land q'\moveetcol\automate E q)$. $\moverepetcol\automate E$ is an equivalence
 relation over $Q_\automate$. 
\end{rem}

\begin{defi}\label{defn-quotrep}
 Let $\automate=(\ensfonc, Q,Q_F,\Delta)$ be an automaton and $\colour E$ a color. We note $\quotcol\automate E$ the automaton over $\ensfonc$ whose set of states is $\quotcol QE$, whose set of final states is $\quotcol{Q_F} E$ and whose set of transitions is {\small\[\enstq{f(\classcol E{q_1},\dots,\classcol E{q_k})\move\classcol E{q'}}{f(q_1,\dots,q_k)\move q'\in\Delta} \cup\enstq{\classcol E q\move\classcol E{q'}}{q\move q'\in\Delta\land \classcol Eq\neq\classcol E{q'}}.\]}
\end{defi}
\begin{rem}
 For any configurations $c,c'$ of $\automate$, we have $c\moveet\automate c'$ if and only if $\classcol Ec\moveet{\quotcol\automate E}\classcol E{c'}$. So the languages recognised by $\automate$ and $\quotcol\automate E$ are the same.
\end{rem}

\subsection{Pair automaton}

We now give notations used for pair automaton, the archetype of which is the product of two automata.

\begin{defi}[Pair automaton]
 An automaton $\automate=(\ensfonc,Q,Q_F,\Delta)$ is said to be a pair
 automaton if there exists some sets $Q_1$ and $Q_2$ such that $Q=Q_1\times Q_2$.
\end{defi}

\begin{defi}[Product automaton~\cite{tata}]
 Let $\automate=(\ensfonc,Q,Q_F,\Delta_\automate)$ and
 $\bautomate=(\ensfonc,P,P_F,\Delta_\bautomate)$ be two automata. The product
 automaton of $\automate$ and $\bautomate$ is
 $\automate\times\bautomate=(\ensfonc,Q\times P,Q_F\times P_F,\Delta)$ where 
{\small\[
\begin{array}{r@{}l}
  \Delta= & \{f(\paire(q_1,p_1),\dots,\paire(q_k,p_k))\move\paire(q',p')
\sep f(q_1,\dots,q_k)\move q'\in\Delta_\automate \: \land \:
f(p_1,\dots,p_k)\move p'\in\Delta_\bautomate\}\: \cup  \\ 
& \enstq{\paire(q,p)\move\paire(q',p)}{p\in P, \: q\move q'\in\Delta_\automate} \cup\enstq{\paire(q,p)\move\paire(q,p')}{q\in Q, \: p\move p'\in\Delta_\bautomate}.
\end{array}
\]}
\end{defi}


\begin{defi}[Projections]
 Let $\automate=(\ensfonc,Q,Q_F,\Delta)$ be a pair automaton, let $\tau\move\rho$ be one of its transitions and $\paire(q,p)$ be one of its states. We define $\projgauche{\paire(q,p)}=q$ and extend $\projgauche\cdot$ to configurations inductively: $\projgauche{f(\gamma_1,\dots,\gamma_k)}=f(\projgauche{\gamma_1},\dots,\projgauche{\gamma_k})$. We define $\projgauche{\tau\move\rho}=\projgauche\tau \move \projgauche\rho$.  We define $\projgauche\automate=(\ensfonc,\projgauche Q,\projgauche{Q_F},\projgauche\Delta)$. $\projdroite\cdot$ is defined on all these objects in the same way for the right component.
\end{defi}

\begin{rem}\label{rem-proj}
 Using $\projgauche\automate$ amounts to forgetting the precision given by the
 right component of the states. As a result,
 $\langrecs{\projgauche\automate}q\supset \bigcup_{p\in P}
 \langrecs\automate{\paire(q,p)}$.
\end{rem}

\subsection{Innermost strategy}
In general, a strategy over a TRS $R$ is a set of (computable) criteria to
describe a certain sub-relation of $\to_R$. In this paper, we will be interested
in innermost strategies. In these strategies, commonly used to execute
functional programs (\enquote{call-by-value}), terms are rewritten by always
contracting one of the lowest reducible subterms. If $s
\to_R t$ and rewriting occurs at a position $p$ of $s$, recall that $s|_p$ is called the {\em redex}.

\begin{defi}[Innermost strategy]
 Given a TRS $R$ and two terms $s, t$, we say that $s$ can be rewritten into $t$
 by $R$ with an innermost strategy, denoted by $s\to_{\Rin}t$, if $s\to_R t$
 and each strict subterm of the redex in $s$ is a $R$-normal form. We define
 $\Rin(L)$ and $\Rin^*(L)$ in the same way as $R(L)$, $R^*(L)$ where
 $\to_{\Rin}$ replaces $\to_R$.  
\end{defi}

\begin{exa}\label{ex:completionInnermost-2}
 We continue on Example~\ref{ex:completionInnermost}. We have
 $\Rin(\langrecs{\automate_0}{q_f})=\{f(c(0,n))\}$ because the rewriting step
 $f(c(a(s(0)), n)) \to_R c(a(s(0)), f (n))$ is not innermost since the subterm
 $a(s(0))$ of the redex $f(c(a(s(0)), n))$ is not in normal form.
\end{exa}

%

\noindent
To deal with innermost strategies, we have to discriminate normal forms. 
When $R$ is left-linear, it is possible to compute a tree automaton recognising
normal forms~\cite{RemyComon87}. This automaton can be computed in an efficient
way using~\cite{comon-IC00}.
\begin{thm}[\cite{RemyComon87}]\label{thm-airr}
 Let $R$ be a left-linear TRS. There is a deterministic and complete tree
 automaton $\airr(R)$ whose states are all final except one, denoted by $\pred$ and such that $\langrec{\airr(R)}=\lirr(R)$ and $\langrecs{\airr(R)}\pred=\langtermesclos\ensfonc\prive\lirr(R)$.
\end{thm}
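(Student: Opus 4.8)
The plan is to show that the set of reducible ground terms is a regular tree language whose closure properties force a single ``dead'' state, and then to take $\airr(R)$ to be the determinised, completed, minimal automaton recognising the complement.

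First I would establish regularity of the set of reducible terms $\mathrm{Red}(R) = \TF\prive\lirr(R)$. Because $R$ is left-linear, each left-hand side $\ell$ is a linear term, and the set of its ground instances $\enstq{\ell\sigma}{\sigma \text{ ground}}$ is a regular tree language: a linear pattern is recognised by an automaton that matches each variable to an arbitrary ground subterm. This is exactly where left-linearity is used; for a non-linear $\ell$ one would have to test equality of two distinct subterms, which is not regular in general. Taking the finite union over all rules yields a regular set of root-redexes, i.e. terms rewritable at position $\Lambda$. Finally, a ground term is reducible iff one of its subterms is a root-redex, so $\mathrm{Red}(R)$ is the closure of the root-redex language under arbitrary ground contexts; regular tree languages are closed under this superterm-closure, hence $\mathrm{Red}(R)$ is regular and so is its complement $\lirr(R)$.

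The heart of the argument is a structural observation: $\mathrm{Red}(R)$ is closed under superterms, i.e. $t\in\mathrm{Red}(R)$ implies $C[t]\in\mathrm{Red}(R)$ for every ground context $C$, since a redex occurring in $t$ still occurs in $C[t]$; equivalently, every subterm of a normal form is a normal form. I would feed this into the syntactic (Myhill--Nerode) congruence $\sim$ of the regular language $\lirr(R)$, defined by $s\sim t$ iff $C[s]\in\lirr(R) \Leftrightarrow C[t]\in\lirr(R)$ for all ground contexts $C$. If $s$ and $t$ are both reducible, then $C[s]$ and $C[t]$ are reducible for every $C$, so neither lies in $\lirr(R)$ and the equivalence holds vacuously; thus all reducible terms collapse into a single class. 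Taking $\airr(R)$ to be the minimal deterministic automaton $\TF/{\sim}$, this unique reducible class is the only non-accepting state — call it $\pred$ — and it acts as a sink: any elementary configuration one of whose arguments is $\pred$ denotes a reducible term, hence is mapped to $\pred$, which also yields completeness. All remaining states are classes of normal forms and are therefore final.

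The construction then delivers exactly the claimed object: $\airr(R)$ is deterministic and complete, every state but $\pred$ is final, $\langrec{\airr(R)} = \lirr(R)$ because the final states recognise precisely the normal forms, and $\langrecs{\airr(R)}{\pred} = \TF\prive\lirr(R)$ because $\pred$ recognises precisely the reducible terms. I expect the main obstacle to be justifying that the reducible terms really do merge into one state — that is, turning the informal ``once reducible, always reducible under contexts'' intuition into the clean statement that $\mathrm{Red}(R)$ forms a single syntactic-congruence class — together with the careful use of left-linearity needed to obtain regularity of the root-redex set in the first place.
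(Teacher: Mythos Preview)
The paper does not prove this theorem: it is stated with a citation to~\cite{RemyComon87} and used as a black box, so there is no in-paper proof to compare against. Your proposal is a correct and standard reconstruction of the result: regularity of the redex-instance set via left-linearity, closure under contexts to get $\mathrm{Red}(R)$ regular, and then the Myhill--Nerode quotient, using superterm-closure of $\mathrm{Red}(R)$ to collapse all reducible terms into a single non-final sink state.

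Two minor points worth tightening. First, the minimal automaton $\TF/{\sim}$ has a state $\pred$ only when $\mathrm{Red}(R)\neq\emptyset$; in the degenerate case where every ground term is irreducible you must add $\pred$ explicitly as an unreachable sink to match the literal statement $\langrecs{\airr(R)}{\pred}=\emptyset$. Second, completeness of the quotient automaton deserves one explicit sentence: it follows because every ground term lies in some $\sim$-class and $\sim$ is a congruence, so every elementary configuration $f([t_1],\dots,[t_k])$ has the well-defined target $[f(t_1,\dots,t_k)]$.
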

\begin{rem}
 Since $\airr(R)$ is deterministic, for any state $p\neq\pred$,
 $\langrecs{\airr(R)}p\subset\lirr(R)$. 
\end{rem}

\begin{rem}\label{rem-omit-pred}
  If a term $s$ is reducible, any term having $s$ as a subterm is also
  reducible. Thus any transition of $\airr(R)$ where $\pred$ appears in the
  left-hand side will necessarily have $\pred$ as its right-hand
  side. Thus, for brevity, these transitions will always be left implicit
  when describing the automaton $\airr(R)$ for some TRS $R$.
\end{rem}

\begin{exa}\label{ex:completionInnermost-3}
  In Example~\ref{ex:completionInnermost}, $\airr(R)$ needs, in addition to
  $\pred$, a state $p_{list}$ to recognise lists of integers, a state $p_a$ for
  terms of the form $a(\dots)$, a state $p_s$ for $s(\dots)$, a state $p_0$ for
  $0$ and a state $p_{var}$ to recognise terms that are not subterms of
  left-hand sides of $R$, but may participate in building a reducible term by
  being instances of variables in a left-hand side. We note $P=\{p_{list}, p_0,
  p_a, p_s, p_{var}\}$ and $P_{int}=\{p_0,p_a,p_s\}$. The interesting
  transitions are thus \[\begin{array}{lllll}
    0\move p_0 & \hspace*{1cm}&\textstyle\bigcup_{p\in P\prive\{p_a\}}\{s(p)\move p_s\} & \hspace*{1cm}&
    \textstyle\bigcup_{p\in P\prive\{p_s\}}\{a(p)\move p_a\} \\
    n\move p_{list} & & \textstyle\bigcup_{p\in P_{int}, p'\in P}\{c(p, p')\move
    p_{list}\} & & f(p_{list})\move\pred \\ a(p_s)\move\pred & &
    s(p_a)\move\pred. & &
  \end{array}\] Furthermore,
  as remarked above, any configuration that contains $\pred$ is recognised
  into $\pred$. Finally, some configurations are not covered by the previous
  cases: they are recognised into $p_{var}$.
\end{exa}

\section{Innermost equational completion}\label{sect-innermost}
Our first contribution is an adaptation of the classical equational completion of~\cite{GenetR-JSC10}, which is
an iterative process on automata. Starting from a tree automaton $\A_0$ it
iteratively computes tree automata $\A_1,\A_2,\ldots$ until a fixpoint automaton
$\A_*$ is found. Each
iteration comprises two parts: (exact) completion itself (Subsection~\ref{ssect-exact}), then equational
merging (Subsection~\ref{ssect-equations}). The former tends to incorporate descendants by $R$ of already
recognised terms into the recognised language; this leads to the creation of new
states. The latter tends to merge states in order to ease termination of the
overall process, at the cost of precision of the computed result. In the
completion procedure proposed here, some
transition added by equational completion will have colors $\colour R$ or
$\colour E$. We will use colors $\colour R$ and $\colour E$
  for transitions denoting either rewrite or equational steps; it is assumed
  that the transitions of the input automaton $\A_0$ do 
not have any color and that $\A_0$ does not have any epsilon-transition.

The equational completion of~\cite{GenetR-JSC10} is blind to strategies. To make
it innermost-strategy-aware, we equip each state of the studied automaton with a
state from the automaton $\airr(R)$ (see Theorem~\ref{thm-airr}) to keep track
of normal and reducible forms. 
Let $\autoInit$ be an automaton recognising the initial
language. Completion will start with $\automate_0 = \autoInit \times
\airr(R)$. Since the $\airr(R)$ component of this product automaton is
complete, the product enjoys the following property.

\begin{lem}\label{lem-proj-eq}
 If $\automate$ and $\bautomate$ are two tree automata, and $\bautomate$ is
 complete, then 
\[\langrecs{\projgauche{\automate\times\bautomate}}q = \bigcup_{p\in P}
 \langrecs{\automate\times\bautomate}{\paire(q,p)}.\]
\end{lem}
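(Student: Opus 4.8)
The plan is to prove the stated equality by two inclusions, observing at the outset that one of them costs nothing. The inclusion $\langrecs{\projgauche{\automate\times\bautomate}}q \supseteq \bigcup_{p\in P}\langrecs{\automate\times\bautomate}{\paire(q,p)}$ is precisely Remark~\ref{rem-proj} applied to the pair automaton $\automate\times\bautomate$, whose right-component state set is $P$: forgetting the right component never shrinks the recognised language, and this uses no hypothesis on $\bautomate$. Hence the entire content of the lemma, and the sole use of completeness of $\bautomate$, lies in the reverse inclusion $\subseteq$.

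For $\subseteq$ I would first record two elementary facts. The first is that $\langrecs{\projgauche{\automate\times\bautomate}}q \subseteq \langrecs\automate q$: by the product construction every transition of $\automate\times\bautomate$ projects either to a transition of $\automate$ (for delta- and left-epsilon-transitions) or to a trivial self-loop $q\move q$ (for right-epsilon-transitions), so any recognition path in $\projgauche{\automate\times\bautomate}$ becomes, after deleting self-loops, a recognition path in $\automate$. The second fact is that, since $\bautomate$ is complete, every ground term $t$ satisfies $t\moveet\bautomate p$ for at least one $p\in P$: this is a short induction on $t$, where at each symbol $f(t_1,\dots,t_k)$ one recognises the subterms into states $p_1,\dots,p_k$ by induction and then fires the delta-transition whose left-hand side is the elementary configuration $f(p_1,\dots,p_k)$, which exists by completeness.

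Given these, the $\subseteq$ direction is immediate: take $t\in\langrecs{\projgauche{\automate\times\bautomate}}q$; the first fact gives $t\moveet\automate q$ and the second gives some $p\in P$ with $t\moveet\bautomate p$, whereupon the defining intersection property of the product automaton (following~\cite{tata}) yields $t\moveet{\automate\times\bautomate}\paire(q,p)$, so $t$ lies in the right-hand union. I expect the recombination step $t\moveet\automate q \land t\moveet\bautomate p \Rightarrow t\moveet{\automate\times\bautomate}\paire(q,p)$ to be the only real obstacle, since the delta-steps of the two given paths need not be synchronised and the product's epsilon-transitions advance a single component at a time. I would settle it by a bottom-up induction on $t$: recognise each immediate subterm into a pair $\paire(q_i,p_i)$ by the induction hypothesis, fire the combined delta-transition $f(\paire(q_1,p_1),\dots,\paire(q_k,p_k))\move\paire(q'',p'')$ built from the matching transitions $f(q_1,\dots,q_k)\move q''$ of $\automate$ and $f(p_1,\dots,p_k)\move p''$ of $\bautomate$, and then realise the two residual epsilon-closures $q''\moveet\automate q$ and $p''\moveet\bautomate p$ independently through the asynchronous left- and right-epsilon-transitions of the product.
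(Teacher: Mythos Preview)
Your proof is correct and follows the same overall approach as the paper: the $\supseteq$ inclusion via Remark~\ref{rem-proj}, and for $\subseteq$ deducing $t\moveet\automate q$ from membership in the projected language, using completeness of $\bautomate$ to obtain $t\moveet\bautomate p$, and combining into a product run. The paper's proof is a two-line sketch that leaves both the passage from $\projgauche{\automate\times\bautomate}$ back to $\automate$ and the recombination step implicit, whereas you spell these out (including the induction for synchronising the two runs), so your version is strictly more complete.
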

\begin{proof}
Proving the inclusion of the right-hand side in the left-hand side uses
Remark~\ref{rem-proj}. For the other direction, let $t$ be a term belonging to 
$\langrecs{\projgauche{\automate\times\bautomate}}q$. We know that $t
\moveet\A q$. Besides, since $\bautomate$ is complete, we know that there exists
a state $p$ of $\bautomate$ such that $t \moveet\B p$. 
\end{proof}

\noindent
In the following, automata built by completion will enjoy consistency with $\airr(R)$, we now define.

\begin{defi}[Consistency with $\airr(R)$]
 A pair automaton $\automate$ is said to be consistent with $\airr(R)$ if, for any configuration $c$ and any state $\paire(q,p)$ of $\automate$, $\projdroite c$ is a configuration of $\airr(R)$ and $p$ is a state of $\airr(R)$, and if $c\moveet\A\paire(q,p)$ then $\projdroite c\moveet{\airr(R)} p$.
\end{defi}

\subsection{Exact completion}\label{ssect-exact}

The first step of equational completion incorporates descendants by $R$ of terms
recognised by $\A_i$ into $\A_{i+1}$. The principle is to search for critical
pairs between $\A_i$ and $R$. In classical completion, a critical pair is triple
$(\ell\to r, \sigma, q)$ such that $l\sigma \moveet{\automate_i} q$, $l\sigma
\rwR r\sigma$ and $r\sigma \not\moveet{\automate_i} q$. Such a critical pair
denotes a rewriting position of a term recognised by $\A_i$ such that the rewritten term
is not recognised by $\A_i$. For the innermost strategy, the critical pair
notion is slightly refined since it also needs that every subterm $t$ at
depth $1$ in $\ell\sigma$ is in normal form. This corresponds to the third case
of the following definition where
$t_i\moveet\automate\paire(q_i,p_i)$ and $p_i\neq p_{red}$
ensures that all $t_i$'s are irreducible. See Figure~\ref{fig:completion}.

\begin{defi}[Innermost critical pair]\label{def-cpn-eff}
Let $\A$ be a pair automaton. 
 A tuple $(\ell\to r, \sigma, \paire(q,p))$ where $\ell\to r\in R$,
 $\sigma:\ensvar\to Q_\automate$ and $\paire(q,p)\in Q_\automate$ is called a
 critical pair if
 \begin{enumerate}
  \item $\ell\sigma\moveet\automate\paire(q,p)$,
  \item there is no $p'$ such that $r\sigma\moveet\automate\paire(q,p')$ and
  \item let $f\in\F$. For $1 \leq i \leq n$, let $t_i$ be terms, and
    $\paire(q_i,p_i)$ be states such that $l\sigma = f(t_1,\ldots,t_n)$ and 
    $f(t_1,\ldots,t_n) \moveet\automate f(\paire(q_1,p_1), \ldots,
    \paire(q_n,p_n)) \moveet\automate \paire(q,p)$. For all $1 \leq i \leq n$,
    $p_i\neq \pred$.
 \end{enumerate}

\end{defi}

\begin{rem}
 Because a critical pair denotes a rewriting situation, the $p$ of Definition~\ref{def-cpn-eff} is necessarily $\pred$ as long as $\automate$ is consistent with $\airr(R)$.
\end{rem}

\begin{exa}\label{ex:completionInnermost-4}
In the situation of
Examples~\ref{ex:completionInnermost-1} and~\ref{ex:completionInnermost-3},
consider the rule $f(c(a(X),Y))\to c(a(X), f(Y))$, the substitution
$\sigma_1=\{X\mapsto \paire(q_s,p_s), Y\mapsto \paire(q_n,p_n)\}$ and the state
$\paire(q_f,\pred)$: this is not an innermost critical pair because the
recognition path is:

$f(c(a(\paire(q_s,p_s)),
\paire(q_n,p_n)))\move f(c(\paire(q_a,\pred),\paire(q_n,p_n)))\move f(\paire(q_c,\pred))\move
 \paire(
q_f,\pred)$ 

\noindent
where there is a $\pred$ at depth 1. This is due to the fact that
$a(\paire(q_s,p_s)) \move \paire(q_a,\pred)$ recognizes a term of the form
$a(s(0))$ which is reducible.
But there is an innermost critical pair in $\A_0$ with the rule $a(s(X))\to X$,
the substitution $\sigma_2=\{X\mapsto \paire(q_0,p_0)\}$ and the state
$\paire(q_a,\pred)$. The recognition path is here $a(s(\paire(q_0,p_0))) \move
a(\paire(q_s,p_s)) \move \paire(q_a,\pred)$.\, where at depth $1$ the term
$s(\paire(q_0,p_0))$ is recognized into state $\paire(q_s,p_s)$ and $p_s\neq \pred$.
\end{exa}

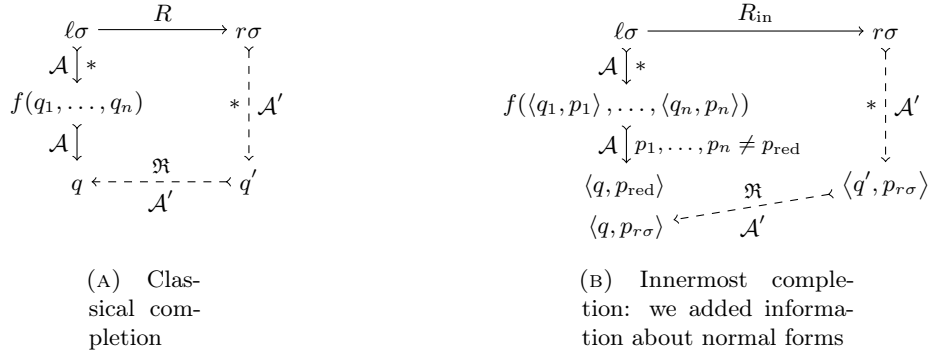
\begin{figure}[!ht]
\hskip 0pt plus .5fill
\setbox4=\hbox{%
\footnotesize
\begin{tikzpicture}
\matrix [column sep = {1cm}, row sep = {.5cm}]{
\node (ls) {$\ell\sigma$}; & \node (rs) {$r\sigma\vphantom{\ell}$}; \\
\node (qd1) {$f(q_1,\dots,q_n)$}; &\\ 
\node (q)  {$q\vphantom{'}$}; & \node (qp) {$q'$};\\[-.5cm]
\node (qprs) {\strut}; &\\
};

\path[->] (ls) edge node [above] {$R$} (rs);
\path[>->] (ls) edge node [left] {$\automate$} node [right] {\scriptsize$*$} (qd1);
\path[>->] (qd1) edge node [left] {$\automate$} (q);
\path[>->, dashed] (rs) edge node [right] {$\automate'$} node [left] {\scriptsize$*$} (qp);
\path[>->, dashed] (qp) edge node [below] {$\automate'$} node [above] {\scriptsize $\colour R$} (q);
\end{tikzpicture}}
\subfigure[Classical completion]{\box4}
\hfill
\setbox2=\hbox{%
\footnotesize
 \begin{tikzpicture}
\matrix [column sep = {1cm}, row sep = {.5cm}]{
\node (ls) {$\ell\sigma$}; & \node (rs) {$r\sigma\vphantom{\ell}$}; \\
\node (qd1) {$f(\paire(q_1,p_1), \dots,\paire(q_n,p_n))$}; &\\ 
\node (q)  {$\paire(q,\pred)\vphantom{'}$}; & \node (qp) {$\paire(q',p_{r\sigma})$};\\[-.5cm]
\node (qprs) {$\paire(q,p_{r\sigma})$}; & \\
};

\path[->] (ls) edge node [above] {$\Rin$} (rs);
\path[>->] (ls) edge node [left] {$\automate$} node [right] {\scriptsize$*$} (qd1);
\path[>->] (qd1) edge node [left] {$\automate$} node [right] {\scriptsize $p_1,\dots,p_n\neq\pred$} (q);
\path[>->, dashed] (rs) edge node [right] {$\automate'$} node [left] {\scriptsize$*$} (qp);
\path[>->, dashed] (qp) edge node [below] {$\automate'$} node [above] {\scriptsize $\colour R$} (qprs);
\end{tikzpicture}}
\subfigure[Innermost completion: we added information about normal forms]{\box2}
\hskip 0pt plus.5fill\null
\caption{Comparison of classical and innermost critical pairs}
\label{fig:completion}
\end{figure}

\noindent
Once a critical pair is found, the completion algorithm needs to resolve it: it
adds the necessary transitions for $r\sigma$ to be recognised by the completed
automaton.  Classical completion adds the necessary transitions so that $r\sigma
\moveet{\A'} q$, where $\A'$ is the completed automaton. In innermost completion
this is more complex. The state $q$ is, in fact, a pair of the form
$\paire(q,p_{red})$ and adding transitions so that $r\sigma \moveet{\A'}
\paire(q,p_{red})$ may jeopardise consistency of $\A'$ with $\airr$ if 
$r\sigma$ is not reducible. Thus the diagram is closed in a different way
preserving consistency with $\airr$ (see Figure~\ref{fig:completion}).  However,
like in classical completion, this can generally not be done in one step, as
$r\sigma$ might be a non-elementary configuration. We have to split the
configuration into elementary configurations and to introduce new states to
recognise them: this is what {\em normalisation} (denoted by
$\normalisation{\A}$) does. Given an automaton $\A$, a configuration $c$ and a
new state $\paire(q,p)$, we denote by 
$\normalisation{\A}(c, \paire(q,p))$ the set of transitions 
that we add to $\A$ to ensure that $c$ is recognised into $\paire(q,p)$. The
$\normalisation{\A}$ operation is parameterized by $\A$ because it
reuses transitions of $\A$ whenever it is possible and adds new transitions
and new states otherwise.

\begin{defi}[New state] Let $\A=(\ensfonc,\Q,\Q_F,\Delta)$ be a
  product automaton.
A new state (for $\A$) is a fresh symbol not occurring in $\ensfonc \cup \Q$.
\end{defi}

\noindent
We here define normalisation as a bottom-up process. In the recursive call, the
choice of the context $C[\,]$ may be non deterministic but all the possible
results are equivalent modulo a state renaming. Recall that the notation $\A \cup
\{f(q_1,\ldots,q_n) \rw q'\}$ is a short-hand denoting the automaton 
obtained from $\A$ by adding $q$ to its set of states $\Q$ and $s \rw q$ to its
set of transitions.

\begin{defi}[Normalisation]
\label{def:normalisation}
Let $\A$ be a product automaton with set of states $Q$, and such that
$\projdroite{\A}$ is the automaton $\airr(R)$ (for a TRS $R$). 
Let $C[\;] \in \TFQ \setminus \Q$ be a non empty context built on states of
$\A$. Let $f\in \ensfonc$ of arity $n$, $q_1, \ldots, q_n$ states of $\A$ and
$q$ a new state for $\A$. The normalisation function is inductively defined by:
\begin{enumerate}
\item $\normalisation{\A}(f(q_1, \ldots, q_n), q) = \{ f(q_1, \ldots, q_n) \rw
  q\}$

\item $\normalisation{\A}(C[f(q_1, \ldots, q_n)], q) =$ 
\begin{tabular}[t]{l}
    $ \{ f(q_1, \ldots, q_n) \rw q'\} \: \cup$ 
    $\normalisation{\A\cup\{f(q_1, \ldots, q_n)\rw q'\}} (C[q'], q)$
\end{tabular} 
\end{enumerate}

\medskip
\noindent
where $\left\{\begin{array}{l}q'\in\A \\ f(q_1, \ldots, q_n) \rw q' \in \A\end{array}\right.$ or 
$\left\{\begin{array}{l}
\mbox{$q'$ is a new state for $\A$, and} \\
\mbox{$f(\projdroite{q_1},\ldots,\projdroite{q_n}) \rw \projdroite{q'} \in \airr(R)$, and } \\
\mbox{$\forall q'' \in Q : f(q_1, \ldots, q_n) \rw q'' \not \in \A$.}
\end{array}\right.$
\end{defi}

\noindent
In the above definition, for any new state $q'=\paire(r,p)$ used to normalize a
subterm $f(q_1, \ldots, q_n)$, $r$ is the only part of the state that is
arbitrary. Indeed, $p$ is fixed by $\airr(R)$. The term
$f(\projdroite{q_1},\ldots, \projdroite{q_n})$ is necessarily the left-hand side
of a transition of $\airr(R)$ (it is complete) and $p$ is
the right-hand side of this transition. This is necessary for normalisation to
preserve consistency with $\airr(R)$.
\begin{exa}
  With a suitable signature, suppose that automaton $\A$ consists of the
  transitions $c\move\paire(q_1,p_c)$ and $f(\paire(q_1,p_c))\move\paire(q_2,
  p_{f(c)})$ and we want to normalise $f(g(\paire(q_2,p_{f(c)}), c))$ to the new
  state $\paire(q_N, p_{f(g(f(c), c))})$. We first have to normalise under $g$:
  $\paire(q_2,p_{f(c)})$ is already a state, so it does not need to be
  normalised; $c$ has to be normalised to a state: since $\A$ already has
  transition $c\move\paire(q_1,p_c)$, we add no new state and it remains to
  normalise $g(\paire(q_2,p_{f(c)}), \paire(q_1,p_c))$. Since $\A$ does not
  contain a transition for this configuration, we must add a new state
  $\paire(q', p_{g(f(c), c)})$ and the transition $g(\paire(q_2,p_{f(c)}),
  \paire(q_1,p_c))\move\paire(q', p_{g(f(c), c)})$. Finally, we add
  $f(\paire(q', p_{g(f(c), c)}))\move\paire(q_N, p_{f(g(f(c), c))})$. Note that
  due to consistency with $\airr(R)$, whenever we add a new transition
  $c'\move\paire(q',p')$, only the $q'$ is arbitrary: the $p'$ is always the
  state of $\airr(R)$ such that $\projdroite c\moveun{\airr(R)}p'$, in order to
  preserve consistency with $\airr(R)$.
\end{exa}

%

\noindent
Completion of a critical pair is done in two steps. The first set of operations
formalises \enquote{closing the square} (see  Figure~\ref{fig:completion}), \ie
if $l\sigma \moveet\A \paire(q,p_{red})$ then we add transitions $r\sigma \moveet{\A'}
\paire(q',p_{r\sigma}) \movecol R\paire(q,p_{r\sigma})$. The second step adds
the necessary transitions for any context $C[r\sigma]$ to be recognised in the
tree automaton if $C[l \sigma]$ was. Thus if the recognition path for
$C[l\sigma]$ is of the form $C[l\sigma] \moveet\A C[\paire(q,p_{red})] \moveet\A
\paire(q_c,p_{red})$, we add the necessary transitions for
$C[\paire(q,p_{r\sigma})]$ to be recognised into $\paire(q_c,p_c)$ where $p_c$
is the state of $\airr(\R)$ recognising $C[r\sigma]$.

\begin{defi}[Completion of an innermost critical pair]\label{defn-completion}
 A critical pair $(\ell\to r, \sigma, \paire(q,p))$ in automaton $\automate$ is
 completed by first computing $N=\normalisation{\automate}(r\sigma,
 \paire(q',p_{r\sigma}))$ where $q'$ is a new state and $p_{r\sigma}$ is the
 state such that 
 $\projdroite{r\sigma}\moveet{\airr(R)} p_{r\sigma}$, then adding to $\automate$
 the new states and the transitions appearing in $N$ as well as the transition
 $\paire(q',p_{r\sigma})\movecol R\paire(q,p_{r\sigma})$. If $r\sigma$ is a
 trivial configuration (\ie $r$ is just a variable, and thus
 $\projdroite{r\sigma}$  is a state), only transition $r\sigma\movecol
 R\paire(q,\projdroite{r\sigma})$ is added. 
 Afterwards, we execute the following supplementary operations. For any new
 transition $f(\dots,\paire(q,\pred),\dots)\move \paire(q'',p'')$, we add a
 transition $f(\dots,\paire(q,p_{r\sigma}),\dots)\move\paire(q'',p''')$ with
 $f(\dots,p_{r\sigma},\dots)\moveun{\airr(R)}p'''$. These new transitions are in
 turn recursively considered for the supplementary operations\footnote{Those
   supplementary operations add new pairs, but the element of each pair are not
   new. So, this necessarily terminates.}.
\end{defi} 

\begin{defi}[Innermost completion step]
 Let $PC$ be the set of all innermost critical pairs of $\A_i$. For $pc\in PC$,
 let $N_{pc}$ be the set of new states and transitions needed under
 Definition~\ref{defn-completion} to complete $pc$, and $\A \cup
 N_{pc}$ the automaton $\A$ completed by states and transitions of $N_{pc}$. Then $\A_{i+1}=\A_{i}\union\bigcup_{pc\in PC}N_{pc}$.
\end{defi}

\begin{lem}\label{lemme-projdroite-irr}
 Let $\automate$ be an automaton obtained from some $\autoInit\times{\airr(R)}$ after some steps of innermost completion. $\automate$ is consistent with ${\airr(R)}$.
\end{lem}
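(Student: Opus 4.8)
The plan is to prove the statement by induction on the number $n$ of innermost completion steps used to build $\automate$ from $\automate_0 = \autoInit\times\airr(R)$, reducing consistency — a statement about whole recognition paths — to a purely local invariant on single transitions. Precisely, I would maintain the invariant that every state of the current automaton has the form $\paire(q,p)$ with $p$ a state of $\airr(R)$, and that every transition $\tau\move\rho$ satisfies $\projdroite\tau\moveet{\airr(R)}\projdroite\rho$. For a delta-transition this just says that $\projdroite\tau\move\projdroite\rho$ is a transition of $\airr(R)$; for a colored epsilon-transition $\paire(q',p)\movecol R\paire(q,p)$ the two right components coincide, so the requirement holds by reflexivity of $\moveet{\airr(R)}$.

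For the base case, I would use the product construction directly: since $\airr(R)$ is deterministic (hence epsilon-free) and $\autoInit$ has no epsilon-transitions, $\automate_0$ has only delta-transitions, and each of them projects on the right exactly to a transition of $\airr(R)$, while the right component of every state is a state of $\airr(R)$. Thus the invariant holds for $\automate_0$.

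For the inductive step, assume the invariant for $\A_i$ and inspect every transition added while completing a critical pair, following Definition~\ref{defn-completion}. There are three sources. The transitions produced by $\normalisation{\A_i}(r\sigma,\paire(q',p_{r\sigma}))$ either are reused from $\A_i$ (covered by the induction hypothesis) or are freshly created, and in the latter case the side condition of Definition~\ref{def:normalisation} imposes $f(\projdroite{q_1},\dots,\projdroite{q_n})\move\projdroite{q'}\in\airr(R)$, which is exactly the invariant for the new delta-transition. The colored transition $\paire(q',p_{r\sigma})\movecol R\paire(q,p_{r\sigma})$ leaves the right component unchanged, hence satisfies the invariant by reflexivity. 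Finally, each supplementary transition $f(\dots,\paire(q,p_{r\sigma}),\dots)\move\paire(q'',p''')$ is added with $f(\dots,p_{r\sigma},\dots)\moveun{\airr(R)}p'''$, again a transition of $\airr(R)$. Since all three cases preserve the invariant and introduce only states whose right component is a state of $\airr(R)$, the invariant passes to $\A_{i+1}$.

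It remains to derive consistency from the invariant. Given $c\moveet\A\paire(q,p)$, I decompose the path into single steps $c=c_0\moveun\A\dots\moveun\A c_m=\paire(q,p)$. Each step $c_j\moveun\A c_{j+1}$ uses some transition $\tau\move\rho$ inside a context, i.e. $c_j = C[\tau]$ and $c_{j+1}=C[\rho]$; applying $\projdroite{\cdot}$ and using that it commutes with context insertion together with context-closure of $\moveet{\airr(R)}$, the invariant gives $\projdroite{c_j}\moveet{\airr(R)}\projdroite{c_{j+1}}$. Concatenating these yields $\projdroite c\moveet{\airr(R)}p$, which is consistency. The main obstacle I anticipate is the bookkeeping in the normalisation case: normalisation is a recursive, context-driven construction that introduces several auxiliary states, and the invariant must be verified for all of them — not merely the top-level state $\paire(q',p_{r\sigma})$ — which is where completeness and determinism of $\airr(R)$ are used to guarantee that each intermediate right component is uniquely determined and is genuinely the target of an $\airr(R)$-transition.
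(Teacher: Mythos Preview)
Your proposal is correct and follows essentially the same approach as the paper's proof, which is little more than a sketch: both argue that the base product is consistent by construction and that every kind of transition added during completion (normalisation, the $\colour R$-colored epsilon, and the supplementary transitions) respects the $\airr(R)$ structure on the right component. Your contribution is to make the argument rigorous by factoring it through a local, per-transition invariant and then lifting that to full recognition paths; this is exactly the kind of detail the paper leaves implicit.
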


\begin{proof}
  $\autoInit\times{\airr(R)}$ is consistent with ${\airr(R)}$ by construction.
  Adding a new transitions, during completion, also preserves the consistency because we choose $p'$ such that
  $r\sigma\moveet{\airr(R)} p'$. This is also the case for transitions built by
  normalisation because for any intermediate subterm $t$ normalized into
  $\paire(q,p)$, $p$ is chosen such that $\projdroite{t}
  \moveet{\airr(R)} p$. The same goes for the supplementary operations.
\end{proof}

\subsection{Equational simplification}\label{ssect-equations}

\begin{defi}[Situation of application of an equation]\label{defn-situation-application}
 Given an equation $s=t$, an automaton $\automate$, a substitution $\theta:\ensvar\to Q_\automate$ and states $\paire(q_1,p_1)$ and $\paire(q_2,p_2)$, we say that $(s=t,\theta, \paire(q_1,p_1), \paire(q_2,p_2))$ is a situation of application in $\automate$ if 
 \begin{enumerate-ligne}
  \item $s\theta\moveet\automate \paire(q_1,p_1)$,
  \item $t\theta\moveet\automate\paire(q_2,p_2)$,
  \item $\paire(q_1,p_1)\notmoverepcol\automate E\paire(q_2,p_2)$
  \anditem $p_1=p_2$.
 \end{enumerate-ligne}
\end{defi}
\noindent
Note that when $p_1\neq p_2$, this is not a situation of application for an
equation. This restriction avoids, in particular, to apply an 
equation between reducible and irreducible terms, and thus preserves consistency
of the completed automaton w.r.t. $\airr(R)$. Such terms will be recognised
by states having two distinct second components. On the opposite, when a
situation of application arises, we \enquote{apply} the 
equation, \ie add the necessary transitions to have
$\paire(q_1,p_1)\moverepcol\automate E\paire(q_2,p_2)$ and supplementary
transitions to lift this property to any embedding context.  We apply equations
until there are no more situation of application on the automaton (this is
guaranteed to happen because we add no new state in this part).
\begin{defi}[Application of an equation]\label{defn-situation-equation}
 Given $(s=t,\theta, \paire(q_1,p_1), \paire(q_2,p_1))$ a situation of application in $\automate$, applying the underlying equation in it consists in adding transitions $\paire(q_1,p_1)\movecol E\paire(q_2,p_1)$ and $\paire(q_2,p_1)\movecol E\paire(q_1,p_1)$ to $\automate$. We also add the supplementary transitions $\paire(q_1,p'_1)\movecol E\paire(q_2,p'_1)$ and $\paire(q_2,p'_1)\movecol E\paire(q_1,p'_1)$ where $\paire(q_1,p'_1)$ and $\paire(q_2,p'_1)$ occur in the automaton.
\end{defi}


\begin{lem}\label{lemme-eq-cons-irr}
 Applying an equation preserves consistency with $\airr(R)$.
\end{lem}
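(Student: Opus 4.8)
The plan is to exploit the single fact that equation application only ever introduces epsilon-transitions that leave the right (that is, the $\airr(R)$-) component of a state untouched. Indeed, condition~4 of Definition~\ref{defn-situation-application} forces the two states of a situation of application to share their second component, so that every transition introduced by Definition~\ref{defn-situation-equation} has the shape $\paire(q_1,p_1)\movecol E\paire(q_2,p_1)$ (and its symmetric), for a common second component. Under $\projdroite\cdot$ each such step therefore collapses to a stationary step from $p_1$ to $p_1$, and is thus invisible to $\airr(R)$. This observation is the heart of the argument; everything else is bookkeeping.

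Write $\automate$ for the (consistent) automaton before the equation is applied and $\automate'$ for the automaton obtained by adding the $\colour E$-transitions of Definition~\ref{defn-situation-equation}. I would establish consistency of $\automate'$ by induction on the length of a recognition path $c\moveet{\automate'}\paire(q,p)$, proving $\projdroite c\moveet{\airr(R)}p$. The base case is the empty path, where $c=\paire(q,p)$ and $\projdroite c=p\moveet{\airr(R)}p$ by reflexivity. For the inductive step, I peel off the first transition, writing $c=C[\tau]\moveun{\automate'}C[\rho]=c_1\moveet{\automate'}\paire(q,p)$ for some context $C$ and transition $\tau\move\rho$ (whose right-hand side $\rho$ is a state). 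Since $\projdroite\cdot$ distributes over the term structure and $\airr(R)$-recognition is closed under contexts, it suffices to show $\projdroite\tau\moveet{\airr(R)}\projdroite\rho$: this yields $\projdroite c\moveet{\airr(R)}\projdroite{c_1}$, the induction hypothesis applied to the shorter path $c_1\moveet{\automate'}\paire(q,p)$ gives $\projdroite{c_1}\moveet{\airr(R)}p$, and transitivity closes the step.

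It then remains to verify $\projdroite\tau\moveet{\airr(R)}\projdroite\rho$ by a case split on the transition used. If $\tau\move\rho$ is a transition already present in $\automate$, then $\tau\moveet\automate\rho$, and consistency of $\automate$ applied to the configuration $\tau$ and the state $\rho$ yields $\projdroite\tau\moveet{\airr(R)}\projdroite\rho$ directly; note that this treats old delta- and epsilon-transitions uniformly, so no appeal to determinism of $\airr(R)$ is needed. If instead $\tau\move\rho$ is one of the freshly added $\colour E$-transitions, then by the structural observation $\projdroite\tau=\projdroite\rho$, and $\projdroite\tau\moveet{\airr(R)}\projdroite\rho$ holds by reflexivity.

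The one genuinely load-bearing point, and the place that requires care, is the structural observation of the first paragraph: one must check that condition~4 ($p_1=p_2$) of Definition~\ref{defn-situation-application} is exactly what forces every transition produced by Definition~\ref{defn-situation-equation} — both the principal pair and the supplementary pairs $\paire(q_1,p'_1)\movecol E\paire(q_2,p'_1)$ lifting the equality to other embedding states — to relate two states with the \emph{same} second component. This is also the intuitive reason behind the $p_1=p_2$ restriction, which, as already noted after Definition~\ref{defn-situation-application}, prevents equating a reducible configuration with an irreducible one. Once this invariant is secured the induction is entirely routine, and parallels the argument already used for completion steps in Lemma~\ref{lemme-projdroite-irr}.
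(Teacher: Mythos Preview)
Your proof is correct and takes essentially the same approach as the paper's: the key observation that every newly added $\colour E$-transition preserves the second component, and hence projects to an identity step in $\airr(R)$, is identical, and both arguments proceed by induction along a recognition path. The only cosmetic difference is that the paper inducts on the number of occurrences of the new transition (handling a maximal $\automate$-only prefix via consistency of $\automate$ in one block), whereas you induct on total path length and treat every step uniformly; your packaging is arguably a bit cleaner.
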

\begin{proof}
  Let $\automate$ be a consistent with $\airr(R)$ automaton whose set of states
  is $Q$, let $\bautomate$ be the automaton resulting from the addition of
  transition $\paire(q_1,p_1)\move\paire(q_2,p_1)$ to $\automate$ due to the
  application of some equation. Note that this is sufficient because of the
  symmetry between $q_1$ and $q_2$. We proceed by induction on $k$, the number
  of times the transition $\paire(q_1,p_1)\move\paire(q_2,p_1)$ occurs in the
  path $c\moveet\bautomate\paire(q,p)$ where $c$ is a configuration and
  $\paire(q,p)$ is a state of $\bautomate$. If there is no occurrence, then
  $c\moveet\automate\paire(q,p)$ and by consistency of $\automate$, $\projdroite
  c\moveet{\airr(R)}p$.
 
  Suppose the property is true for some $k$ and there is a context $C$ on
  $\langtermes(\ensfonc,Q)$ such that $c\moveet\automate
  C[\paire(q_1,p_1)]\moveun\bautomate
  C[\paire(q_2,p_1)]\moveet\bautomate\paire(q,p)$ with the last part of the path
  using less than $k$ times the new transition. First, there is a configuration
  $c_1$ such that $c=C[c_1]$ and $c_1\moveet\automate\paire(q_1,p_1)$, and
  therefore $\projdroite{c_1}\moveet{\airr(R)}p_1$ by consistency of $\automate$
  w.r.t. $\airr(R)$. Second, by induction
  hypothesis, $\projdroite{C[p_1]}\moveet{\airr(R)} p$. Finally, $\projdroite
  c\moveet{\airr(R)} \projdroite{C[p_1]}\moveet{\airr(R)}p$.
\end{proof}

\subsection{Innermost completion and equations}

\begin{defi}[Step of innermost equational completion]\label{completion-algo}
Let $R$ be a left-linear TRS, $\autoInit$ a tree automaton, $E$ a set of equations
and $\automate_0 =\autoInit\times{\airr(R)}$. The automaton $\A_{i+1}$ is
obtained, from $\A_i$, by applying an innermost completion step on $\A_i$
(Definition~\ref{defn-completion}) and solving all situations of applications of
equations of $E$ (Definition~\ref{defn-situation-application}). 
\end{defi}

\section{Correctness}\label{sect-th}

The objective of this part is to prove that, when completion terminates,
the produced tree automaton is closed w.r.t. innermost rewriting, \ie if the
completed automaton $\automate$ recognizes a term $s$ and $s \to_{Rin} t$ then
$\automate$ also recognizes $t$. To prove this property we rely on the notion of
{\em correct automaton}. An automaton is correct if it is closed by
innermost rewriting or if it still contains critical pairs to solve.

\begin{defi}[Correct automaton]\label{defn-correct}
 An automaton $\automate$ is correct w.r.t. $\Rin$ if for all states $\paire(q,\pred)$ of $\automate$, for all $u\in\langrecs\automate{\paire(q,\pred)}$ and for all $v\in \Rin(u)$, either 
 there is a state $p$ of $\airr(R)$ such that $v\in\langrecs\automate{\paire(q,p)}$
 or there is a critical pair $(\ell\to r,\sigma,\paire(q_0,p_0))$ in $\automate$ for some $\paire(q_0,p_0)$ and a context $C$ on $\langtermesclos\ensfonc$ such that $u\moveet\automate C[\ell\sigma]\moveet\automate C[\paire(q_0,\pred)]\moveet\automate \paire(q,\pred)$ and $v\moveet\automate C[r\sigma]$.
\end{defi}

\noindent
First, we show that correction of automata is preserved by equational
simplification. 

\begin{lem}[Simplification preserves correction]
\label{lem-simp-correct}
Let $\A$ be an automaton correct w.r.t. $\Rin$. If $\A'$ is the result of the 
equational simplification of $\A$, then $\A'$ is correct.
\end{lem}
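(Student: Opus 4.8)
The plan is to show that the defining property of correctness (Definition~\ref{defn-correct}) is preserved when we add the equational transitions prescribed by Definition~\ref{defn-situation-equation}. Since simplification only adds $\colour E$-colored epsilon-transitions between states $\paire(q_1,p_1)$ and $\paire(q_2,p_1)$ sharing the \emph{same} second component (and their supplementary analogues), the key observation is that these additions can neither create nor destroy the distinction between reducible and irreducible recognized terms: by Lemma~\ref{lemme-eq-cons-irr} the simplified automaton $\A'$ remains consistent with $\airr(R)$. I would first fix a state $\paire(q,\pred)$ of $\A'$, a term $u\in\langrecs{\A'}{\paire(q,\pred)}$ and a step $v\in\Rin(u)$, and try to exhibit either a recognition of $v$ into some $\paire(q,p)$ or a critical pair in $\A'$ closing the diagram, exactly as the definition requires.

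The main technical move is to transfer information back from $\A'$ to $\A$. Because simplification adds no new states, every state of $\A'$ is already a state of $\A$, and the added transitions are all equational. Using the equivalence-relation framework of the remark following Definition~\ref{defn-quotrep}, I would argue that recognition in $\A'$ factors through recognition in $\A$ up to the $\colour E$-congruence $\moverepcol{\A'}E$: a recognition path $u\moveet{\A'}\paire(q,\pred)$ can be analyzed by counting the occurrences of the newly added transition (an induction on that count, mirroring the proof of Lemma~\ref{lemme-eq-cons-irr}), reducing to a path in $\A$ for some $\colour E$-equivalent state. Since $\A$ is correct, applied to that $\A$-recognition of $u$ (or of a $\colour E$-equivalent representative) we obtain either a recognition of $v$ in $\A$ or a critical pair $(\ell\to r,\sigma,\paire(q_0,p_0))$ with the context decomposition demanded by Definition~\ref{defn-correct}.

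It then remains to lift the conclusion from $\A$ to $\A'$. This direction is easy: $\A\subseteq\A'$ (simplification only adds transitions), so any recognition path or critical-pair witness present in $\A$ survives verbatim in $\A'$, and a recognition $v\in\langrecs{\A}{\paire(q,p)}$ immediately gives $v\in\langrecs{\A'}{\paire(q,p)}$. The delicate point is ensuring that when $u$'s recognition in $\A'$ uses the new equational transition, the state $\paire(q,\pred)$ we started from is still the endpoint of a legitimate $\A$-path for a term $\colour E$-equivalent to $u$, and that the second component stays $\pred$ throughout — this is exactly where consistency with $\airr(R)$ (via Lemma~\ref{lemme-eq-cons-irr}) does the work, guaranteeing the reducible/irreducible status is never crossed by the added transitions.

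The step I expect to be the main obstacle is the inductive descent that strips the added equational transition out of an $\A'$-recognition path while keeping track of \emph{both} components of the pair states: I must verify that whenever the path enters $\paire(q_2,p_1)$ via the new transition from $\paire(q_1,p_1)$, the critical pair or recognition supplied by correctness of $\A$ for the representative still matches the endpoint $\paire(q,\pred)$ required here, rather than a merely $\colour E$-equivalent state. Handling the supplementary transitions of Definition~\ref{defn-situation-equation} (the contextual copies $\paire(q_1,p'_1)\movecol E\paire(q_2,p'_1)$) in the same induction, without enlarging the case analysis beyond what the symmetry between $q_1$ and $q_2$ already affords, is the part that will require the most care.
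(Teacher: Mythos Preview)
The paper's proof is far simpler than your proposal: it just observes that equational simplification only \emph{adds} transitions, so $\moveet{\A}\subseteq\moveet{\A'}$, and therefore any correctness witness already present in $\A$ --- whether a recognition $v\in\langrecs\A{\paire(q,p)}$ or a critical-pair decomposition $u\moveet\A C[\ell\sigma]\moveet\A C[\paire(q_0,\pred)]\moveet\A\paire(q,\pred)$ with $v\moveet\A C[r\sigma]$ --- survives verbatim in $\A'$. There is no transfer from $\A'$ back to $\A$, no induction on occurrences of the new transition, and no tracking of $\colour E$-equivalence classes.

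You are, in effect, trying to handle a case the paper's short argument glosses over: a term $u\in\langrecs{\A'}{\paire(q,\pred)}\setminus\langrecs{\A}{\paire(q,\pred)}$. But your proposed reduction has a genuine gap. You claim that a path $u\moveet{\A'}\paire(q,\pred)$ can be reduced to a recognition of $u$ in $\A$ at some $\colour E$-equivalent state. This fails whenever the new equational transition is used strictly \emph{inside} the term. For instance, if $\A$ contains $a\move\paire(q_a,p)$ and $f(\paire(q_b,p))\move\paire(q_f,\pred)$ but nothing with left-hand side $f(\paire(q_a,p))$, and simplification adds $\paire(q_a,p)\movecol E\paire(q_b,p)$, then $f(a)\moveet{\A'}\paire(q_f,\pred)$ while $f(a)$ is recognized at \emph{no state at all} of $\A$. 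The analogy with Lemma~\ref{lemme-eq-cons-irr} is misleading: that proof only tracks the second component of pair states (which the new transitions never change), whereas here you would need a full $\A$-recognition of the ground term $u$, and that is simply unavailable. Relaxing the claim to ``some term related to $u$ is recognized in $\A$'' does not save the argument either: correctness of $\A$ for that other term constrains \emph{its} $\Rin$-successors, not the given $v\in\Rin(u)$, so you cannot conclude anything about $v$ at the state $\paire(q,\pred)$ you started from.
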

\begin{proof}
By definition of equational simplification, we trivially have $\moveet\A
\subseteq \moveet{\A'}$. Thus, for all states $\paire(q,\pred)$ of $\A$ if
$u\in\langrecs\A{\paire(q,\pred)}$, $v\in\Rin(u)$ and $v \in
\langrecs\A{\paire(q,p)}$ we have $u \in \langrecs{\A'}{\paire(q,\pred)}$ and
$v \in \langrecs{\A'}{\paire(q,p)}$. Similarly, for the other case of
correction, if $u\moveet\automate
C[\ell\sigma]\moveet\automate C[\paire(q_0,\pred)]\moveet\automate
\paire(q,\pred)$ and $v\moveet\automate C[r\sigma]$, we have the same
derivations in $\A'$
\end{proof}

\noindent
Now we can show that any automaton produced by completion is correct. This is 
Lemma~\ref{lemme-correct} proven below. However, to achieve this proof we need the
following intermediate lemma to ensure that if $C[\ell\sigma]$ rewrites to 
$C[r\sigma]$, using a rewrite rule $ \ell\to r$, and if $C[\ell\sigma]
\moveet\automate\paire(q_1,p_1)$, then there will be a state $p_2$ in the
completed automaton $\bautomate$ such that
$C[\paire(q,p_{r\sigma})]\moveet\bautomate\paire(q_1,p_2)$.

\begin{lem}\label{lemme-composition-contexte}
 Let $\automate$ be an automaton consistent with $\airr(R)$, $(\ell\to r, \sigma, \paire(q,p))$ a critical pair in $\automate$, let $p_{r\sigma}$ be the state of $\airr(R)$ such that $\projdroite{r\sigma}\moveet{\airr(R)}p_{r\sigma}$ and $\bautomate$ be the automaton resulting from the completion of this critical pair. Let $C$ be a context on $\langtermesclos\ensfonc$ and $\paire(q_1,p_1)$ a state of $\automate$ such that $C[\paire(q,p)]\moveet\automate\paire(q_1,p_1)$. Then there exists a state $p_2$ of $\airr(R)$ such that $C[\paire(q,p_{r\sigma})]\moveet\bautomate\paire(q_1,p_2)$.
\end{lem}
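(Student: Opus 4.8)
The plan is to prove the statement by structural induction on the ground context $C$, showing that the recognition of $C[\paire(q,\pred)]$ can be replayed in $\bautomate$ along a path that keeps every left component unchanged and merely recomputes the second (the $\airr(R)$) components. Before the induction I would pin down the two right components. Since the tuple is a critical pair, $p=\pred$ (the remark following Definition~\ref{def-cpn-eff}), so the hypothesis reads $C[\paire(q,\pred)]\moveet\automate\paire(q_1,p_1)$. Applying $\projdroite\cdot$ and using that $\automate$ is consistent with $\airr(R)$ gives $C[\pred]\moveet{\airr(R)}p_1$; as $C[\pred]$ contains $\pred$ and $\airr(R)$ is complete and deterministic, Remark~\ref{rem-omit-pred} forces $p_1=\pred$. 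Symmetrically, once we exhibit any path $C[\paire(q,p_{r\sigma})]\moveet\bautomate\paire(q_1,p_2)$, consistency of $\bautomate$ (Lemma~\ref{lemme-projdroite-irr}) forces $p_2$ to be the unique $\airr(R)$-state of $C[p_{r\sigma}]$. Hence the real content is only that the left component $q_1$ is still reachable after substituting $\paire(q,p_{r\sigma})$ for $\paire(q,\pred)$ at the hole.

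For the induction I would decompose a run of $C[\paire(q,\pred)]$ by first reducing all arguments to states and then firing the top transitions. The base case $C=\ctxttriv$ reduces to replaying a run of pure $\varepsilon$-steps out of $\paire(q,\pred)$ (trivial when the run is empty, since then $q_1=q$ and $p_2=p_{r\sigma}$). In the inductive case $C=f(u_1,\dots,C',\dots,u_k)$, the subcontext $C'$ recognizes into some $\paire(q_a,\pred)$ (again $\pred$, by the same consistency argument), the ground $u_j$ recognize into fixed states identically in $\automate$ and $\bautomate$ (as $\automate\subseteq\bautomate$), and a transition $f(\dots,\paire(q_a,\pred),\dots)\move s'$ is followed by $\varepsilon$-steps $s'\moveet\automate\paire(q_1,\pred)$. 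The induction hypothesis on $C'$ supplies $C'[\paire(q,p_{r\sigma})]\moveet\bautomate\paire(q_a,p'_a)$, and the supplementary transitions of Definition~\ref{defn-completion} are tailored precisely so that replacing the argument $\paire(q_a,\pred)$ by its updated copy $\paire(q_a,p'_a)$ in that transition produces one with the same left component and the $\airr(R)$-forced right component; their recursive clause propagates this all the way up the $\delta$-spine.

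I expect the genuine obstacle to be the $\varepsilon$-transitions sitting on that spine (both in the base case and above each top $\delta$-step). Every $\varepsilon$-transition of $\automate$ preserves the second component — the $\colour R$-transition of Definition~\ref{defn-completion} is $\paire(q',p_{r\sigma})\movecol R\paire(q,p_{r\sigma})$ and the $\colour E$-transitions of Definition~\ref{defn-situation-equation} relate $\paire(q_1,p_1)$ and $\paire(q_2,p_1)$ — so on the original run they all join states $\paire(\cdot,\pred)$, whereas in the replayed run they must join the corresponding states $\paire(\cdot,p^*)$ carrying the new second component. The crux is to certify that, for each such $\varepsilon$-transition $\paire(q_a,\pred)\move\paire(q_b,\pred)$ and each updated $p^*$ arising in this step, the copy $\paire(q_a,p^*)\move\paire(q_b,p^*)$ belongs to $\bautomate$. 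I would derive this from the \enquote{for all $p'_1$} clause of Definition~\ref{defn-situation-equation} together with the observation that the supplementary $\delta$-operations create precisely the states $\paire(\cdot,p^*)$ that these $\varepsilon$-transitions have to relate, again using consistency with $\airr(R)$ to pin the second components down uniquely. This matching of the states produced by normalisation and by the supplementary operations against the second components demanded by the $\varepsilon$-steps and by $\airr(R)$ is the delicate bookkeeping on which the whole proof turns.
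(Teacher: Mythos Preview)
Your induction on $C$ is a valid scaffold, and your identification of $p=p_1=\pred$ matches the paper's opening move. The paper, however, does not induct on $C$; it argues directly that every transition on the path $C[\paire(q,\pred)]\moveet\automate\paire(q_1,\pred)$ has a counterpart in $\bautomate$ with updated right components, and for this it invokes a fact you never use: $\automate$ (hence $\bautomate$) still contains the \emph{entire} product $\autoInit\times\airr(R)$. Consequently any $\delta$-transition on the spine whose left components all lie in $\autoInit$ automatically has a counterpart for \emph{any} right components, because $\airr(R)$ is complete and the product has all such combinations. Only transitions involving states created by earlier normalisations require the supplementary operations; the remark immediately after the lemma makes this dichotomy explicit with its clause ``since $q_{N3}\notin\autoInit$''. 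Your plan to let the supplementary operations ``propagate all the way up the $\delta$-spine'' is therefore not self-sufficient: once the spine reaches a transition built solely from $\autoInit$-states, the supplementary mechanism (which targets new states) is not what supplies the counterpart --- the product structure does.

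Your worry about $\varepsilon$-transitions is legitimate, and the paper's proof is indeed terse there, but the same dichotomy dissolves it. Since $\autoInit$ has no $\varepsilon$-transitions, every $\varepsilon$-step on the spine necessarily involves at least one new left component, so only the supplementary operations are relevant (and the remark's example shows that they do extend to $\varepsilon$-transitions, despite the literal $f(\dots)$ shape in Definition~\ref{defn-completion}). Your alternative route through the ``for all $p'_1$'' clause of Definition~\ref{defn-situation-equation} is fragile: that clause quantifies only over states present when the equation was applied, whereas the state $\paire(\cdot,p^*)$ you need may be freshly created by the supplementary operations of the \emph{current} completion step. In short, your structural induction can be made to work, but only after importing the product argument that the paper's shorter proof rests on.
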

\begin{proof}
  Note that we necessarily have $p=p_1=\pred$ since $l\sigma$ is reducible and
  so is $C[l\sigma]$.  We have to show that all the transitions used in the path
  $C[\paire(q,\pred)]\moveet\automate\paire(q_1,\pred)$ have some counterpart
  starting from $C[\paire(q,p_{r\sigma})]$. First, observe that all transitions
  used to recognise subterms at positions of $C$, that are parallel to the
  position of the hole, remain unchanged. Second, if some transition only
  comprises states whose left component are in $\autoInit$, then it has a
  counterpart for any choice of right components in $\airr(R)$ because our
  automaton contains the whole product $\autoInit\times\airr(R)$. It remains to
  show that the transitions involving new states added by the normalisation
  during the completion of the considered critical pair also have their
  counterpart: they exist thanks to the supplementary operations of
  Definition~\ref{defn-completion}.
\end{proof}

\begin{rem}
 Note that the supplementary operations described in the completion algorithm
 are necessary for this lemma to hold. Indeed, take $R=\{g(f(b))\to g(f(a)), f(a)\to
 c\}$, $\autoInit=\{b\move q_b, f(q_b)\move q_{fb}, g(q_{fb})\move
 q_{gfb}\}$. We have 
\begin{align*}
\airr(R)=& \{a\move p_a, b\move p_b, c\move p_c,
 f(p_a)\move\pred, g(p_a)\move p_c, \\ & f(p_b)\move p_{fb}, g(p_{fb})\move\pred,
 f(p_c)\move p_c, g(p_c)\move p_c\}.\end{align*} There is a critical pair $PC_1=(g(f(b))\to
 g(f(a)),\emptyset,\paire(q_{gfb},\pred))$ in $\autoInit\times\airr(R)$, which
 is resolved by adding transitions \begin{align*}
   a\move\paire(q_{N1},p_a) \\
   f(\paire(q_{N1},p_a))\move\paire(q_{N2},\pred) \\
   g(\paire(q_{N2},\pred))\move\paire(q_{N3},\pred) \\
   \paire(q_{N3},\pred)\move\paire(q_{gfb},\pred)
\end{align*} thereby producing automaton
 $\automate_1$. The supplementary operations do not create any new transition
 here. There is a critical pair $PC_2=(f(a)\to c,\emptyset,\paire(q_{N2},\pred))$ in
 $\automate_1$, which is resolved by adding transitions
 $c\move\paire(q_{N4},p_c)$ and $\paire(q_{N4},p_c) \move \paire(q_{N2},p_c)$, thereby producing
 automaton $\automate^\natural_2$. The supplementary operations are detailed
 further down and produce automaton $\automate_2$.
 Now consider that $g(c)\in\Rin (g(f(a)))$ and
 $g(f(a))\in\langrecs{\automate^\natural_2}{\paire(q_{gfb},\pred)}$ because we
 completed $PC_1$. But all what we have is
 $g(c)\moveun{\automate^\natural_2}g(\paire(q_{N4},p_c))\moveun{\automate^\natural_2}g(\paire(q_{N2},p_c))$,
 this last configuration being the left-hand side of no transition. As a result,
 $g(c)\notin\langrecs{\automate^\natural_2}{\paire(q_{gfb},p')}$ for any $p'$.
 
 The supplementary operations are made after completion of $PC_2$. Since there
 is a transition $g(\paire(q_{N2},\pred))\move\paire(q_{N3},\pred)$, we add a
 transition $g(\paire(q_{N2},p_c))\move\paire(q_{N3},p_{c})$. Then, since
 $q_{N3}\notin\autoInit$ and there is a
 $\paire(q_{N3},\pred)\move\paire(q_{gfb},\pred)$, we add
 $\paire(q_{N3},p_{c})\move\paire(q_{gfb},p_{c})$. No further transition needs
 to be added. These transitions allow
 $g(c)\in\langrecs{\automate_2}{\paire(q_{gfb}, p_{c})}$.
\end{rem}


\begin{lem}\label{lemme-correct}\label{lemme-eqcomp-preserve}
 Any automaton produced by innermost completion starting from some $\autoInit\times\airr(R)$ is correct w.r.t. $\Rin$.
\end{lem}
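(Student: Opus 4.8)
The plan is to prove Lemma~\ref{lemme-correct} by induction on the number of innermost completion steps performed. The base case is the automaton $\automate_0 = \autoInit \times \airr(R)$: here I would argue that $\automate_0$ is correct w.r.t. $\Rin$ almost by definition, since any innermost rewriting situation $u \to_{\Rin} v$ with $u \in \langrecs{\automate_0}{\paire(q,\pred)}$ either is already recognised (unlikely at step 0 unless trivially) or exhibits a redex that, because $\automate_0$ is consistent with $\airr(R)$ (Lemma~\ref{lemme-projdroite-irr}) and the redex's strict subterms are normal forms, gives rise to exactly an innermost critical pair in the sense of Definition~\ref{def-cpn-eff}. The key point is that the consistency with $\airr(R)$ lets me read off, from the recognition path of $u$, that the states recognising the immediate subterms of the redex carry second components different from $\pred$, which is precisely condition~(3) of the critical-pair definition.

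For the inductive step, suppose $\automate_i$ is correct and let $\automate_{i+1}$ be obtained by one innermost completion step followed by equational simplification. By Lemma~\ref{lem-simp-correct}, simplification preserves correction, so it suffices to show the automaton $\automate_i \union \bigcup_{pc \in PC} N_{pc}$ obtained before simplification is correct. I would fix a state $\paire(q,\pred)$, a term $u \in \langrecs{\automate_{i+1}}{\paire(q,\pred)}$ and $v \in \Rin(u)$, and split according to where the contracted redex sits. The central tool is Lemma~\ref{lemme-composition-contexte}: if the redex $\ell\sigma$ was recognised via a critical pair that got completed, then for the surrounding context $C$ there is a state $p_2$ with $C[\paire(q',p_{r\sigma})] \moveet{\automate_{i+1}} \paire(q,p_2)$, so the rewritten term $v = C[r\sigma]$ is indeed recognised, landing $u \to_{\Rin} v$ in the first disjunct of Definition~\ref{defn-correct}. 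If instead the critical pair exposed by $u \to_{\Rin} v$ was \emph{not} completed in this step (because it was newly created), then that very critical pair witnesses the second disjunct of correctness.

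The main obstacle I anticipate is the bookkeeping around the recognition path of $u$ in the \emph{enlarged} automaton $\automate_{i+1}$, whose transitions include the freshly normalised states and the supplementary transitions. A recognition path for $u$ may now use transitions added during completion of several distinct critical pairs, so I cannot directly appeal to the inductive hypothesis about $\automate_i$ alone. I would therefore need to argue that any innermost rewriting step on a term recognised in $\automate_{i+1}$ either occurs at a redex whose recognition path lies wholly in $\automate_i$ (where the induction hypothesis applies, possibly producing a critical pair already solved in this step and hence handled by Lemma~\ref{lemme-composition-contexte}) or occurs at a redex involving a newly added state, in which case consistency with $\airr(R)$ (Lemma~\ref{lemme-projdroite-irr}) again guarantees that the strict subterms of the redex are in normal form, so the situation is a genuine innermost critical pair in $\automate_{i+1}$, giving the second disjunct. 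Establishing that the state components tracked by $\airr(R)$ correctly certify the normal-form condition for the newly recognised redexes, and that the supplementary operations of Definition~\ref{defn-completion} supply every context transition needed to invoke Lemma~\ref{lemme-composition-contexte}, is the delicate part; the rest is a careful but routine case analysis threaded through the congruence structure of the completed automaton.
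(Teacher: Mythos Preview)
Your plan structures the proof as an induction on completion steps, and the obstacle you anticipate---recognition paths in $\A_{i+1}$ mixing transitions coming from several solved critical pairs---arises precisely because of that choice. The paper sidesteps it by arguing \emph{directly} on an arbitrary automaton $\A$ produced by completion, with no case split on ``old'' versus ``new'' transitions. Concretely: given $u=C[\ell\mu]\to_{\Rin}v=C[r\mu]$ with $u\in\langrecs\A{\paire(q,\pred)}$, the paper uses \emph{left-linearity of $\ell$}---an ingredient absent from your sketch---to extract a state-valued substitution $\sigma$ satisfying $\ell\mu\moveet\A\ell\sigma\moveet\A\paire(q_0,\pred)$ and $r\mu\moveet\A r\sigma$. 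The case split is then simply whether $r\sigma$ is already recognised at some $\paire(q_0,p')$. If not, consistency with $\airr(R)$ (Lemma~\ref{lemme-projdroite-irr}) together with the fact that strict subterms of $\ell\mu$ are normal forms yields condition~(3) of Definition~\ref{def-cpn-eff}, so $(\ell\to r,\sigma,\paire(q_0,\pred))$ is a critical pair in $\A$ itself and the second disjunct of correctness holds. If yes, the first disjunct follows from the context-lifting property ``$C[\paire(q_0,\pred)]\moveet\A\paire(q,\pred)$ implies $C[\paire(q_0,p_0)]\moveet\A\paire(q,p)$ for some $p$'', which is obvious at $\autoInit\times\airr(R)$ (completeness of $\airr(R)$) and preserved step by step via Lemma~\ref{lemme-composition-contexte}; this narrow sub-property is the \emph{only} place an induction over completion steps enters.

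The omission of left-linearity is a genuine gap: without it you cannot pass from the ground substitution $\mu$ to a state-valued $\sigma$, since a non-linear left-hand side may have its repeated variable recognised into distinct states along the path, and then no single $\sigma$ makes $\ell\mu\moveet\A\ell\sigma$ hold. Your inductive framework is not wrong in spirit, but the ``new-transition'' branch you describe would, once fleshed out, amount to redoing the paper's direct argument anyway; organising the proof around the step count generates exactly the bookkeeping you worry about without buying anything in return.
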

\begin{proof}
  Let $\automate$ be such an automaton; it is consistent with $\airr(R)$. This
  is Lemma~\ref{lemme-projdroite-irr}. Let $\paire(q,\pred)$ be a state of $\automate$, $u\in\langrecs\automate{\paire(q,\pred)}$ and $v\in \Rin(u)$. By definition of innermost rewriting, there is a rule $\ell\to r$ of $R$, a substitution $\mu:\ensvar\to\langtermesclos\ensfonc$ and a context $C$ such that $u=C[\ell\mu]$, $v=C[r\mu]$ and each strict subterm of $\ell\mu$ is a normal form. Let $u_0=\ell\mu$ and $v_0=r\mu$. There is a $\paire(q_0,\pred)$ such that $u_0\in\langrecs\automate{\paire(q_0,\pred)}$ and $C[\paire(q_0,\pred)]\moveet{\automate}{\paire(q,\pred)}$.
  
  Since $\ell$ is linear, there is a $\sigma:\ensvar\to Q_{\automate}$ such that $\ell\mu\moveet\automate\ell\sigma\moveet\automate\paire(q_0,\pred)$ and $r\mu\moveet\automate r\sigma$. This entails that $u\moveet\automate C[\ell\sigma]\moveet\automate C[\paire(q_0,\pred)]\moveet\automate \paire(q,\pred)$ and $v\moveet\automate C[r\sigma]$.
 Assume that 
 there is no $p_0$ such that $v_0\in\langrecs\automate{\paire(q_0,p_0)}$
 and show that $(\ell\to r, \sigma, \paire(q_0,\pred))$ is a critical pair in
 $\automate$. First, by assumption, there is no $p$ such that
 $r\sigma\moveet\automate\paire(q_0,p)$. Conditions~1 and~2 of
 Definition~\ref{def-cpn-eff} are thus met. Suppose
 $\ell=f(\gamma_1,\dots,\gamma_k)$ and show that condition~3 of
 Definition~\ref{def-cpn-eff} holds.\footnote{If $\ell$ is a constant, then
   condition~3 is vacuously true.} For each $i=1\ldots k$, let $\paire(q_i,p_i)$
 be the state of $\automate$ such that
 $\gamma_i\mu\moveet\automate\gamma_i\sigma\moveet\automate\paire(q_i,p_i)$ in
 the path of recognition of $\ell\sigma$. Then, by consistency with $\airr(R)$,
 for each $i=1 \ldots k$, $\gamma_i\mu\moveet{\airr(R)} p_i$. Since strict subterms of $\ell\mu$ are strict subterms of $u$ as well, they are normal forms, thus $p_i\neq\pred$, which validates condition~3 of Definition~\ref{def-cpn-eff}.
 
 Assume now that 
 $v_0\in\langrecs\automate{\paire(q_0,p_0)}$
 and show that 
 there is a $p$ such that $v\in\langrecs\automate{\paire(q,p)}$. This is obvious at the initial step $\autoInit\times\airr(R)$, and this property is conserved by completion as shown by Lemma~\ref{lemme-composition-contexte}.
\end{proof}

\noindent
A direct (practical) consequence of this result is that, if
completion terminates then there is no more critical pair and thus the tree
automaton is closed w.r.t. innermost rewriting. This used in the proof of the following theorem.
In the fixpoint tree automaton, $\automatefixin$, all states are products between states
of an automaton recognizing innermost reachable terms and states of
$\airr(R)$. Thus, $\projgauche\automatefixin$ recognizes all innermost
reachable terms and $\automatefixin$ recognizes all innermost reachable terms
{\em that are irreducible}, \ie normalized terms.




\begin{thm}[Correctness]
\label{thm-correction}
 Assuming $R$ is left-linear, the innermost equational completion procedure
 defined above produces a correct result whenever it terminates and produces
 some fixpoint $\automatefixin$ such that:
 \begin{equation}
  \langrec{\projgauche\automatefixin}\supset \Rin^*(\langrec{\autoInit}).\label{eq1}
 \end{equation}

 \begin{equation}
  \langrec{\automatefixin}\supset \Rin^!(\langrec{\autoInit}).\label{eq2}
 \end{equation}
\end{thm}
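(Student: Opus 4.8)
The plan is to prove the two inclusions together by induction on the length of an innermost derivation, leveraging correctness of the fixpoint automaton. Since $\automatefixin$ is produced when completion terminates, Lemma~\ref{lemme-correct} gives that $\automatefixin$ is correct w.r.t. $\Rin$, and the defining property of a fixpoint --- that no critical pair remains --- forces the \emph{first} disjunct of Definition~\ref{defn-correct} to hold everywhere. I would also record two facts used throughout: by Lemma~\ref{lemme-projdroite-irr}, $\automatefixin$ is consistent with $\airr(R)$; and by Theorem~\ref{thm-airr}, $\airr(R)$ is deterministic and complete with $\langrecs{\airr(R)}{\pred}$ equal to the set of reducible terms. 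Combining these, for any ground term $w$ with $w\moveet{\automatefixin}\paire(q,p)$ the second component $p$ is uniquely determined and equals $\pred$ exactly when $w$ is reducible.

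The core claim I would establish by induction on $n$ is: if $s\moveet{\autoInit}q_f$ for a final state $q_f$ of $\autoInit$ and $s\to_{\Rin}^* t$ in $n$ steps, then $t\in\langrecs{\automatefixin}{\paire(q_f,p)}$ for some state $p$, with the same $q_f$ throughout. For the base case $n=0$, since $\automate_0=\autoInit\times\airr(R)$ is a genuine product and $\airr(R)$ is complete, $s$ is recognised into $\paire(q_f,p)$ where $p$ is the unique $\airr(R)$-state of $s$; this recognition path survives in $\automatefixin$ because completion only adds transitions. For the inductive step, write $s\to_{\Rin}^* u\to_{\Rin} t$; the induction hypothesis gives $u\in\langrecs{\automatefixin}{\paire(q_f,p)}$. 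As $u$ is reducible, consistency and determinism of $\airr(R)$ force $p=\pred$, so $u\in\langrecs{\automatefixin}{\paire(q_f,\pred)}$. Applying correctness to $u$ and $t\in\Rin(u)$, and discarding the critical-pair disjunct (none remain at the fixpoint), yields $t\in\langrecs{\automatefixin}{\paire(q_f,p')}$ for some $p'$, with the same left component $q_f$, closing the induction.

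From this claim, inclusion~\eqref{eq1} is immediate: any $t\in\Rin^*(\langrec{\autoInit})$ lies in some $\langrecs{\automatefixin}{\paire(q_f,p')}$ whose left component $q_f$ is, by component-wise inheritance of finality from $\autoInit\times\airr(R)$, a final state of $\projgauche\automatefixin$; hence $t\moveet{\projgauche\automatefixin}q_f$ (using the easy direction of Remark~\ref{rem-proj}) and $t\in\langrec{\projgauche\automatefixin}$. For inclusion~\eqref{eq2}, take $t\in\Rin^!(\langrec{\autoInit})=\Rin^*(\langrec{\autoInit})\cap\lirr(R)$. The claim again places $t$ in some $\langrecs{\automatefixin}{\paire(q_f,p)}$ with $q_f$ a final left component; but now $t$ is irreducible, so by the uniqueness observation above $p\neq\pred$, i.e. $p$ is a final state of $\airr(R)$. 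Since finality in $\automatefixin$ is inherited component-wise from the final states of $\autoInit$ and the non-$\pred$ states of $\airr(R)$, the pair $\paire(q_f,p)$ is itself a final state of $\automatefixin$, whence $t\in\langrec{\automatefixin}$.

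The main obstacle I anticipate is not the inductive skeleton but the precise bookkeeping of the second component and of finality. Two points must be pinned down rigorously: first, that consistency with $\airr(R)$ together with its determinism really does force $p=\pred$ for reducible terms and $p\neq\pred$ for irreducible ones, so that the correctness hypothesis on $\paire(q,\pred)$ matches the reducible-$u$ case while the final-state conclusion matches the irreducible-$t$ case; and second, that the set of final states of $\automatefixin$ is \emph{exactly} the component-wise product of the final states of $\autoInit$ with the non-$\pred$ states of $\airr(R)$, which is what lets an irreducible reachable term land in a genuinely final state of $\automatefixin$ itself rather than only of its left projection $\projgauche\automatefixin$.
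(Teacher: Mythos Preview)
Your proposal is correct and follows essentially the same route as the paper: establish that $\automatefixin$ is correct w.r.t.\ $\Rin$, use the fixpoint property to eliminate the critical-pair disjunct of Definition~\ref{defn-correct}, then propagate along innermost derivations keeping the left component $q_f$ fixed, and finally separate the two conclusions via the second component (irreducible $t$ forces $p\neq\pred$, hence finality). Your explicit induction on derivation length makes precise what the paper leaves as an implicit transitive-closure argument, and your use of Remark~\ref{rem-proj} for~\eqref{eq1} is in fact tighter than the paper's appeal to Lemma~\ref{lem-proj-eq}, since only the easy inclusion is needed. One small bookkeeping point: Lemma~\ref{lemme-correct} as proved treats only the exact-completion part of a step, so to cover the full equational completion procedure you should also invoke Lemma~\ref{lem-simp-correct} (simplification preserves correctness), exactly as the paper does.
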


\begin{proof}
  Let $\automatefixin$ be the calculated fixpoint automaton. By
  Lemma~\ref{lemme-projdroite-irr}, $\automatefixin$ is consistent with
  $\airr(R)$, and therefore, by Lemma~\ref{lemme-eqcomp-preserve}
  and~\ref{lem-simp-correct}, $\automatefixin$ is correct w.r.t. $\Rin$. Since
  this automaton is a fixpoint, the case of Definition~\ref{defn-correct} where
  there remains a critical pair cannot occur, and therefore, for all states
  $\paire(q,\pred)$ of $\automate$, for all
  $u\in\langrecs\automatefixin{\paire(q,\pred)}$ and for all $v\in \Rin(u)$,
  there is a $p'$ such that $v\in\langrecs{\automatefixin}{\paire(q,p')}$.  The
  case where $u\in\langrecs\automatefixin{\paire(q,p)}$ with $p\neq \pred$ is
  not worth considering because $p\neq \pred$ means that $u$ is
  irreducible. Thus, $\langrecs\automatefixin{\paire(q,p)}$ necessarily contains
  all terms $\Rin$-reachable from $u$. Then, we first prove the~(\ref{eq2})
  case. Thanks to the previous results, for all terms $s,t$ such that $s\in
  \langrec\autoInit$ and $s\to_{\Rin}^* t$, we know that there exists a final
  state $q_f$ of $\autoInit$ and states $\pred,p'$ of $\airr(R)$ such that $s
  \in \langrecs{\automatefixin}{\paire(q_f,\pred)}$ and $t \in
  \langrecs{\automatefixin}{\paire(q_f,p')}$.  If $t$ is irreducible then
  $p'\neq \pred$ and thus $p'$ is a final state of $\airr(R)$. Since $q_f$ is
  final in $\autoInit$ and in $\airr(R)$, the state $\paire(q_f,p')$ is final in
  $\automatefixin$. Thus, $t$ is in $\langrec{\automatefixin}$.  For
  the~(\ref{eq1}) case, we use the same reasoning leading to the fact that if
  $s\in \langrec\autoInit$ and $s\to_{\Rin}^* t$, then $s \in
  \langrecs{\automatefixin}{\paire(q_f,\pred)}$ and $t \in
  \langrecs{\automatefixin}{\paire(q_f,p')}$. Then, we lift this property to
  $\projgauche\automatefixin$ using Lemma~\ref{lem-proj-eq}. Since the
  $\airr(R)$ component of $\automatefixin$ is complete,
  $\langrecs{\projgauche{\automatefixin}}{q_f} = \bigcup_{p\in P}
  \langrecs{\automatefixin}{\paire(q_f,p)}$. Thus,
  $\langrecs{\projgauche{\automatefixin}}{q_f}$ contains, in particular,
  $\langrecs{\automatefixin}{\paire(q_f,\pred)}$ that contains $s$ and
  $\langrecs{\automatefixin}{\paire(q_f,p')}$ that contains $t$.
\end{proof}

\section{Precision theorem}\label{sect-precision}

We just showed that the approximation is correct. Now we investigate its
accuracy on a theoretical point of view. This theorem is technical and difficult
to prove. However, this theorem is important because producing an   
over-approximation of reachable terms is easy (the tree automaton recognising
$\TF$ is a correct over-approximation) but producing an accurate approximation
is hard. To the best of our knowledge, no other work dealing with abstract
interpretation of functional programs or computing approximations of regular
languages can provide such a formal precision guarantee
(except~\cite{GenetR-JSC10} but in the case of general rewriting). Like
in~\cite{GenetR-JSC10}, we formally quantify the accuracy w.r.t. rewriting
modulo $E$, replaced here by {\em innermost} rewriting modulo $E$. The relation
of innermost rewriting modulo $E$, denoted by $\to_{\Rin/E}$, is defined as
rewriting modulo $E$ where $\to_{\Rin}$ replaces $\to_{R}$. We also define
$(\Rin/E)(L)$ and $(\Rin/E)^*(L)$ in the same way as $(R/E)(L)$, $(R/E)^*(L)$ where
$\to_{\Rin/E}$ replaces $\to_{R/E}$. 

The objective of the proof is to show that the completed tree automaton
recognises no more terms than those reachable by $\Rin/E$ rewriting.
The accuracy relies on the $\Rin/E$-coherence
property of the completed tree automaton, defined below. Roughly, a tree automaton $\A$ is
$\Rin/E$-coherent if $\moveet\A$ is coherent w.r.t. $R$ innermost rewriting
steps and $E$ equational steps. More precisely if $s \moveet\A q$ and $t
\moveet\A q$ with no epsilon transitions with color $\colour R$, then $s\equiv_E t$
(this is called separation of $E$-classes for $\A\sseps$).
And, if $t \moveet\A q$ with at least one epsilon transition with color
$\colour R$, then $s \to_{\Rin/E}^* t$ (this is called $\Rin$-coherence of
$\A$). Roughly, a tree automaton separates $E$-classes if all terms
recognized by a state are $E$-equivalent. Later, we will require this property on
$\A_0$ and then propagate it on $\A_i\sseps$, for all completed automata $\A_i$.

\def\classeeq#1#2{{\left[#2\right]}_{#1}}
\def\classeauto#1#2#3{{\left[#3\right]}_{#2}^{#1}}
\begin{defi}[Separation of $E$-classes]
\label{def-sep-classes}
 The pair automaton $\A$ separates the classes of $E$ if for any $q\in\projgauche{Q_\A}$,
there is a term $s$ such that for all $p\in\projdroite{Q_\A}$, $\langrecs\A{\paire(q,p)}\subset\classeeq Es$.
We denote by $\classeauto\A E q$ the class of terms in
$\langrecs\A{\paire(q,\cdot)}$, and extend this to configurations. We say that the
separation of classes by $\A$ is total if $\projgauche\A$ is accessible.
\end{defi}

\noindent
In the following, $\classeauto{\A\sseps} E q$ thus denotes the equivalence class
$[s]_E$, where $s$ is any term such that $s \moveet{\A\sseps} q$.

\begin{defi}[$\Rin/E$-coherence]
\label{rinco}
 An automaton $\A$ is $\Rin/E$-coherent if
 \begin{enumerate}
  \item $\A\sseps$ totally separates the classes of $E$,
  \item $\A$ is accessible, and
  \item for any state $\paire(q,p)$ of $\A$, $\langrecs\A{\paire(q,p)}\subset (\Rin/E)^*\left(\classeauto{\A\sseps}Eq\right)$.
 \end{enumerate}
\end{defi}

\noindent
Then, the objective is to show that the two basic elements of innermost
equational completion: completing a critical pair and applying an equation
preserve $\Rin/E$-coherence. We start by critical pair completion.
The first lemma shows that, if $(\ell\to r,\sigma,\paire(q,\pred))$ is a
critical pair of $\A$, then adding new (normalised) transitions preserves
$\Rin/E$-coherence.

\begin{lem}[Normalisation preserves $\Rin/E$-coherence]\label{lemme-norm-pres}
  Let $\A$ be a $\Rin/E$-coherent automaton, $\Q$ its set of states, $c\in\TFQ$, and $q$ a new state for
  $\A$. 
  The automaton $\A$ completed with new states and 
  transitions of $\normalisation{\A}(c,q)$ is
  $\Rin/E$-coherent. 
\end{lem}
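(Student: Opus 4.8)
The plan is to reduce the claim to the addition of a single transition and then to iterate. By Definition~\ref{def:normalisation}, normalisation works bottom-up: it first normalises an innermost elementary subconfiguration $f(q_1,\dots,q_n)$ of $c$, either by reusing an already-present transition---which leaves the automaton, and hence its $\Rin/E$-coherence, unchanged---or by adding a single transition $f(q_1,\dots,q_n)\move\paire(r,p)$ into a fresh state $\paire(r,p)$ (so $r$ is new, and $p$ is forced by consistency with $\airr(R)$), and then recurses on the remaining context. I would therefore let $\A=\A_0,\A_1,\dots,\A_m$ be the automata in which each $\A_{j+1}$ is $\A_j$ augmented by one such fresh transition, so that $\A_m$ is $\A$ completed with the new states and transitions of $\normalisation{\A}(c,q)$, and prove by induction on $j$ that each $\A_j$ is $\Rin/E$-coherent. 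At step $j+1$ the argument states $q_1,\dots,q_n$ lie in $\A_j$ and are thus accessible by the induction hypothesis (Definition~\ref{rinco}, condition~2).

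The single step is the substance. I would first record the structural fact that in $\B:=\A_{j+1}$ the fresh state $\paire(r,p)$ is a sink: it is the right-hand side of the one new transition and occurs on no left-hand side. Hence recognition into every state of $\A_j$ is unchanged, and $\langrecs\B{\paire(r,p)}=\{f(t_1,\dots,t_n)\mid t_i\in\langrecs{\A_j}{q_i}\}$, with the analogous equalities for $\B\sseps$. Accessibility of $\B$ (condition~2) is then immediate, since old states keep their paths and $\paire(r,p)$ is reached through the accessible $q_1,\dots,q_n$. For total separation of $\B\sseps$ (condition~1), separation at each old left-projection is inherited from $\A_j\sseps$, while at the new left-projection $r$---which occurs only in $\paire(r,p)$---I take the separation representatives $s_1,\dots,s_n$ of $\projgauche{q_1},\dots,\projgauche{q_n}$, so that $\langrecs{\A_j\sseps}{q_i}\subset[s_i]_E$, and invoke that $\equiv_E$ is a congruence to get $\langrecs{\B\sseps}{\paire(r,p)}\subset[f(s_1,\dots,s_n)]_E$; accessibility of $\projgauche{\B\sseps}$ holds because the new transition is uncoloured.

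The real obstacle is condition~3 for the new state, namely $\langrecs\B{\paire(r,p)}\subset(\Rin/E)^*\!\bigl(\classeauto{\B\sseps}E r\bigr)$ with $\classeauto{\B\sseps}E r=[f(s_1,\dots,s_n)]_E$; for old states condition~3 is inherited verbatim, as their languages, their classes and the relation $\to_{\Rin/E}$ are all unchanged. By $\Rin/E$-coherence of $\A_j$ each argument satisfies $t_i\in(\Rin/E)^*([s_i]_E)$, and I must lift these $n$ argument-wise derivations into the single context $f(\cdot,\dots,\cdot)$, showing $f(s_1,\dots,s_n)\to_{\Rin/E}^* f(t_1,\dots,t_n)$ by rewriting the arguments one after another. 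The delicate step is that innermost rewriting survives the surrounding context: when a $\to_{\Rin}$ step contracts a redex located inside argument $i$, each strict subterm of that redex lies within argument $i$ and is therefore untouched by the context, so it remains a normal form and the step stays innermost in $f(\dots)$; the equational halves of each $\to_{\Rin/E}$ step lift because $\equiv_E$ is a congruence. It is essential here that the definition of $\to_{\Rin}$ constrains only the strict subterms of the contracted redex and not the sibling arguments, so the possibly reducible, already-rewritten components $t_1,\dots,t_{i-1}$ do not block the step. Chaining the $n$ liftings yields $f(s_1,\dots,s_n)\to_{\Rin/E}^* f(t_1,\dots,t_n)$, and since $f(s_1,\dots,s_n)\in\classeauto{\B\sseps}E r$ this gives condition~3 and closes the induction.
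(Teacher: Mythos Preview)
Your proof is correct and follows essentially the same approach as the paper: both reduce to the case of adding a single transition into a fresh state and verify the three conditions of $\Rin/E$-coherence in the same way, using congruence of $\equiv_E$ for condition~1 and context-closure of $\to_{\Rin/E}$ for condition~3. You are in fact more explicit than the paper on the ``delicate step'' that an innermost step inside an argument remains innermost in the surrounding $f(\dots)$; the paper handles condition~3 with a one-line ``similar reasoning'', so your spelling this out is an improvement rather than a deviation.
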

\begin{proof}
   We prove that $\A \cup \normalisation{\A}(c,q)$ is $\Rin/E$-coherent by
   induction on the height of $c$.  
   \begin{itemize}
   \item 
   In the base case, $c$ is of height one. Thus, the value of $\normalisation{\A}(c,q)$ has necessarily been computed using
   case~(1) of Definition~\ref{def:normalisation}. Using this definition, we
   know that $c$ is of the form $f(q_1, \ldots, q_n)$, where $q_1, \ldots, q_n$ are
   states of $\A$ and $q$ is a new state for $\A$. Thus, what we need to show is
   that $\A \cup \{f(q_1, \ldots, q_n) \rw q\}$ is $\Rin/E$-coherent. 
   Let us denote by $\B$ the automaton $\A \cup \{f(q_1, \ldots, q_n) \rw
   q\}$. 
   Let us denote by $\Q_\A$ the set of states of $\A$.
   Since $q$ is a new state, adding this unique transition to $\A$ preserves the
   language recognized by all the states of $\Q_\A$ in $\B$. Thus, for any state $q'\in
   \B$ such that $q'\neq q$ the $\Rin/E$-coherence property is preserved: the
   language recognized by $\B\sseps$ in $\paire(\projgauche{q'},p)$ for all
   $p\in\projdroite{\Q_\A}$ is still a subset of $\classeeq E{s'}$ for
   some term $s'$, $q'$ remains accessible, and the language recognized by $q'$
   in $\B$ is still a subset of $(\Rin/E)^*\left(\classeauto{\B\sseps}E{q'}\right)$.
   What remains to be shown is that all the cases of Definition~\ref{rinco} are
   true for the state $q$. We start by showing case~(2) of this definition. The
   language $\langrecs{\B}{q}$ is not empty because $f(q_1,\ldots,q_n)\rw
   q \in \B$ and states $q_1, \ldots, q_n\in\A$ are different from $q$, and thus,
   accessible. Let us now show Case~(1). By definition of new states in normalisation,
   $q=\paire(r,p)$ where $r$ is a new state and $p\in\projdroite{\Q_A}$. For all $q_i$,
   $i=1\ldots n$, let $q_i=\paire(r_i,p_i)$ where $r_i\in \projgauche{\Q_A}$ and
   $p_i\in \projdroite{\Q_A}$. As mentioned above, we know that
   $\B\sseps$ totally separates the classes of~$E$ for all states of $\Q_\A$. We
   thus know that 
   for $r_i$, $i=1\ldots n$, there exist terms $s_i$ s.t. for all
   $p\in\projdroite{\Q_\A}$, $\langrecs{\B\sseps}{\paire(r_i,p)}\subseteq
   \classeeq E{s_i}$. Thus, for all $p\in\projdroite{\Q_\A}$, $\langrecs{\B\sseps}{\paire(r,p)}\subseteq \classeeq E{f(s_1,
     \ldots, s_n)}$. We can do a similar reasoning for case~(3). We know from
   above that the property holds for all $q_i = \paire(r_i,p_i)$, $i=1\ldots n$:
   $\langrecs{\B}{\paire(r_i,p_i)}\subseteq
   (\Rin/E)^*\left(\classeauto{\B\sseps}E{r_i}\right)$. Thus,
   $\langrecs{\B}{\paire(r,p)}= \langrecs{\B}{\paire(f(r_1,\ldots,r_n),p)} \subseteq
   (\Rin/E)^*\left(\classeauto{\B\sseps}E{r}\right)$.

   \item For the inductive case, we assume that the property is true for all
     configurations having a height inferior to the one of $c$. Since height of
     $c$ is greater than one, $\normalisation{\A}(c,q)$ can only be
     processed by case~(2) of Definition~\ref{def:normalisation}. Thus $c$ is of
     the form $C[f(q_1, \ldots, q_n)]$ and what we have to show is that 
     $\A \cup \normalisation{\A}(C[f(q_1, \ldots, q_n),q)$ is
     $\Rin/E$-coherent, {\em i.e.} that $\A \cup \{f(q_1,\ldots,q_n)\rw q'\}
     \cup \normalisation{\A\cup\{f(q_1,\ldots,q_n)\rw q'\}}(C[q'],q)$ is
     $\Rin/E$-coherent. Note that the height of $C[q']$ is strictly smaller to
     the height of $c$, and that all states of $C[q']$ belong to
     $\A\cup\{f(q_1,\ldots,q_n)\rw q'\}$. To apply the induction hypothesis on
     $\A \cup \{f(q_1,\ldots,q_n)\rw q'\} \cup
     \normalisation{\A\cup\{f(q_1,\ldots,q_n)\rw q'\}}(C[q'],q)$, what remains
     to prove is that $\A\cup\{f(q_1,\ldots,q_n)\rw q'\}$ is, itself,
     $\Rin/E$-coherent. For the case where $f(q_1,\ldots,q_n)\rw q'\in\A$ this
     is true because $\A\cup\{f(q_1,\ldots,q_n)\rw q'\}=\A$ and $\A$ is
     $\Rin/E$-coherent by assumption. Otherwise, the state $q'$ is new and the
     proof is exactly the same as in the base case. Finally, we can apply the
     induction hypothesis on $\A \cup \{f(q_1,\ldots,q_n)\rw q'\} \cup
     \normalisation{\A\cup\{f(q_1,\ldots,q_n)\rw q'\}}(C[q'],q)$ that entails
     the result.\qedhere
   \end{itemize}
 \end{proof}

\noindent
The second lemma shows that, if $(\ell\to
r,\sigma,\paire(q,\pred))$ is a critical pair of a tree automaton $\A$, if the
result of $\normalisation{\A}(r\sigma,\paire(q,\pred))$ is already part
of $\A$ and if $\A$ is $\Rin/E$-coherent, then adding
$\paire(q',p_{r\sigma})\movecol R\paire(q,p_{r\sigma})$ to $\A$ preserves
$\Rin/E$-coherence. 

\begin{lem}\label{lemme-heureka1}
  Let $\A$ be a $\Rin/E$-coherent automaton that is consistent with $\airr(R)$,
  let $(\ell\to r,\sigma,\paire(q,\pred))$ be a critical pair that is to be
  completed by adding transition $\paire(q',p_{r\sigma})\movecol
  R\paire(q,p_{r\sigma})$. We suppose that the normalisation steps have just
  been performed and still note $\A$ the resulting automaton. We have
  $\langrecs\A{r\sigma}\subset(\Rin/E)^*\left(\classeauto{\A\sseps} E q\right)$.
\end{lem}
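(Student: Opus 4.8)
The plan is to peel a term $t\in\langrecs\A{r\sigma}$ along the variable positions of $r$, to rebuild from the equivalence classes an instance of $\ell$ that is recognised into $\paire(q,\pred)$, and then to close the gap by a single innermost rewrite step followed by independent reductions of the peeled subterms. First I would fix $t$ with $t\moveet\A r\sigma$ and, for each variable position $\pi$ of $r$ carrying variable $x_\pi$, consider the subterm $\ssterme{t}{\pi}$, which is recognised into $\sigma(x_\pi)=\paire(q_\pi,p_\pi)$. The first point to record is that $p_\pi\neq\pred$ for every such position: this follows from condition~3 of Definition~\ref{def-cpn-eff} together with the upward propagation of $\pred$ in $\airr(R)$ (Remark~\ref{rem-omit-pred}), since a $\pred$ at an argument state would force a depth-$1$ state to be $\pred$. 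Hence, by consistency with $\airr(R)$, each $\ssterme{t}{\pi}$ is a normal form whose $\airr(R)$-state is exactly $p_\pi$, and by $\Rin/E$-coherence of $\A$ (Definition~\ref{rinco}(3)) we obtain $\ssterme{t}{\pi}\in(\Rin/E)^*\!\left(\classeauto{\A\sseps}{E}{q_\pi}\right)$.

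Next I would build the source instance. For each variable $x$ of $\ell$ I choose a representative $m_x$ of $\classeauto{\A\sseps}{E}{q_x}$ that is recognised into the \emph{exact} state $\sigma(x)=\paire(q_x,p_x)$ by $\A\sseps$ (i.e. without any $\colour R$-transition); since $p_x\neq\pred$ such an $m_x$ is a normal form with $\airr(R)$-state $p_x$, and it lies in $\classeauto{\A\sseps}{E}{q_x}$ by separation of classes. Writing $\mu(x)=m_x$ and reusing the recognition path of $\ell\sigma$, I get $\ell\mu\moveet\A\ell\sigma\moveet\A\paire(q,\pred)$ in the full automaton, so coherence~(3) yields $\ell\mu\in(\Rin/E)^*\!\left(\classeauto{\A\sseps}{E}{q}\right)$. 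Moreover every depth-$1$ subterm of $\ell\mu$ has $\airr(R)$-state $p_i\neq\pred$ (the path of $\ell\sigma$ is reused and $\airr(R)$ is deterministic), hence is a normal form; therefore all strict subterms of the redex $\ell\mu$ are normal forms and $\ell\mu\to_{\Rin}r\mu$ is a genuine innermost step, giving $r\mu\in(\Rin/E)^*\!\left(\classeauto{\A\sseps}{E}{q}\right)$.

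It then remains to drive $r\mu$ down to $t$. The two terms agree except at the variable positions of $r$, where $r\mu$ carries $m_{x_\pi}$ and $t$ carries $\ssterme{t}{\pi}$. Since $m_{x_\pi}\in\classeauto{\A\sseps}{E}{q_\pi}$ while $\ssterme{t}{\pi}\in(\Rin/E)^*\!\left(\classeauto{\A\sseps}{E}{q_\pi}\right)$, the target $\ssterme{t}{\pi}$ is reachable by $\to_{\Rin/E}$ from the very class of $m_{x_\pi}$, so $m_{x_\pi}\to_{\Rin/E}^{*}\ssterme{t}{\pi}$ (an initial $\equiv_E$-step is absorbed into $\to_{\Rin/E}$). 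As $\equiv_E$ is a congruence and innermost rewriting is insensitive to the surrounding context, $\to_{\Rin/E}$ is closed under contexts, so performing these reductions at each position independently (this also covers repeated variables of $r$) gives $r\mu\to_{\Rin/E}^{*}t$. Using in addition that $(\Rin/E)^*$ of an $E$-class is closed under $\equiv_E$ to absorb any degenerate zero-step cases, we conclude $t\in(\Rin/E)^*\!\left(\classeauto{\A\sseps}{E}{q}\right)$, which is the claim.

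The hard part is the choice of the representatives $m_x$: they must be \emph{at the same time} inside the class $\classeauto{\A\sseps}{E}{q_x}$ (so that the reductions of the third step start from the class of the peeled subterms) and recognised into the exact state $\paire(q_x,p_x)$ (so that $\ell\mu$ reaches $\paire(q,\pred)$ and the step $\ell\mu\to_{\Rin}r\mu$ is truly innermost). Plain accessibility (Definition~\ref{rinco}(2)) does not suffice, because it only produces a term \emph{reachable from} the class, not one \emph{inside} it. So the genuine obstacle is to establish that $\paire(q_x,p_x)$ is accessible in $\A\sseps$. The way through is to observe that all epsilon-transitions created by completion, of colours $\colour R$ and $\colour E$, preserve the right ($\airr(R)$) component, and that normalisation introduces only uncoloured delta-transitions; hence the argument states of a critical pair, which recognise normal forms, are reached without $\colour R$-transitions, and total separation (Definition~\ref{rinco}(1)) then places $m_x$ in the required class.
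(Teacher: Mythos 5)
Your proof takes a genuinely different route from the paper's, and at the very point you flag as ``the hard part'' it breaks down. The paper's argument avoids class representatives entirely: it defines $\mu$ by letting $x\mu$ be the subterm of $t$ at the position where $x$ occurs in $r$ (and any term of $\langrecs\A{x\sigma}$ for variables of $\ell$ absent from $r$), so that $t=r\mu$ on the nose; it checks via condition~3 of Definition~\ref{def-cpn-eff} and consistency with $\airr(R)$ that every strict subterm of $\ell\mu$ is a normal form, hence $t\in\Rin(\ell\mu)$; and it then applies condition~(3) of Definition~\ref{rinco} to $\ell\mu\in\langrecs\A{\paire(q,\pred)}$ to conclude. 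No term ever needs to be placed \emph{inside} a class $\classeauto{\A\sseps}E{q_x}$ --- it suffices that $\ell\mu$ is recognised by the full automaton into $\paire(q,\pred)$, and coherence of that state does the rest.

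Your route instead requires, for each variable $x$ of $\ell$, a representative $m_x\in\classeauto{\A\sseps}E{q_x}$ with $m_x\moveet{\A\sseps}\paire(q_x,p_x)$, i.e.\ accessibility of $\paire(q_x,p_x)$ in $\A\sseps$. This is not guaranteed: Definition~\ref{rinco} gives accessibility of $\A$ (with its $\colour R$-transitions) and total separation only for $\projgauche{\A\sseps}$. Your justification --- that states recognising normal forms are reached without $\colour R$-transitions --- is false: completion adds transitions $\paire(q',p_{r\sigma})\movecol R\paire(q,p_{r\sigma})$ with $p_{r\sigma}\neq\pred$, so a state with irreducible right component may be reachable \emph{only} through a $\colour R$-transition. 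Concretely, take $R=\{a\to b,\ f(X)\to g(X)\}$ with $\autoInit$ recognising $f(a)$: completing $a\to b$ yields $\paire(q_b,p_b)\movecol R\paire(q_a,p_b)$ and the supplementary transition $f(\paire(q_a,p_b))\move\paire(q_f,\pred)$; the subsequent critical pair on $f(X)\to g(X)$ has argument state $\paire(q_a,p_b)$ with $\langrecs{\A\sseps}{\paire(q_a,p_b)}=\emptyset$, so no $m_X$ of the kind you need exists. The repair is exactly the paper's move: let $\mu(x)$ be the actual subterm of $t$, which need not lie in the class but only in its $(\Rin/E)^*$-closure, and invoke coherence on $\ell\mu\in\langrecs\A{\paire(q,\pred)}$ rather than trying to realise $\ell\mu$ inside $\classeauto{\A\sseps}E q$. (Your per-position reduction does treat repeated variables of $r$ more explicitly than the paper's ``$t=r\mu$'', but that benefit does not survive the failure of the representative construction.)
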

\begin{proof}
  Let $t\in\langtermesclos\ensfonc$ such that $t\moveet\A
  r\sigma\moveet{\A\sseps}\paire(q',p_{r\sigma})$. Let $\mu$ be a substitution
  $\mu: \ensvar \to \TF$ such that $t=r\mu$ and $r\mu \moveet{\A} r\sigma$.
For each variable $x$ of $r$,
  let $x\mu$ be the subterm of $t$ at the position where $x$ occurs in $r$. For
  each variable $y$ appearing in $\ell$ but not in $r$, 
  let $y\mu$ be any term in $\langrecs\A{y\sigma}$. Since $\A$ is consistent
  with $\airr(R)$ and the critical pair fulfills Condition~3 of
  Definition~\ref{def-cpn-eff}, each strict subterm of $\ell\mu$ is a normal
  form. So $t=r\mu\in\Rin(\ell\mu)$. Moreover,
  $\ell\mu\in\langrecs\A{\paire(q,\pred)}$ and $\A$ is $\Rin/E$-coherent, so
  $t\in(\Rin/E)^*\left(\classeauto{\A\sseps} E q\right)$.
\end{proof}

\begin{lem}[Supplementary transitions preserve $\Rin/E$-coherence]
\label{lem-supop}
Let $\A$ be a $\Rin/E$-coherent automaton that is consistent with $\airr(R)$.
Assume that $\A$ has been completed with the transitions to have $s
\moveet\A \paire(q,p)$ and the necessary supplementary transitions. Let $C[\,]$ be a context and $\paire(q_c,p_c)$ a 
state of $\A$ such that $C[s]\moveet\A \paire(q_c,p_c)$. If 
$\langrecs\A{s}\subset(\Rin/E)^*\left(\classeauto{\A\sseps} E
  {q}\right)$ then 
$\langrecs\A{C[s]}\subset(\Rin/E)^*\left(\classeauto{\A\sseps} E {q_c}\right)$.
\end{lem}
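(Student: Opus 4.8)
The plan is to work entirely with ground terms and to lift the reachability statement furnished by the hypothesis through the context $C$. I would fix an arbitrary $u\in\langrecs\A{C[s]}$, so $u\moveet\A C[s]$. Because the hole of $C$ is filled by the configuration $s$, this recognition factors by position: $u=C'[w]$, where $w$ is the ground subterm sitting at the hole, $w\in\langrecs\A{s}$, and $C'$ is the ground context obtained from $C$ by replacing each leaf state $\paire(q_i,p_i)$ by a ground term $v_i\in\langrecs\A{\paire(q_i,p_i)}$. The point of the lemma is that $u$ is one of the \emph{new} terms that $C[s]$ makes recognisable into $\paire(q_c,p_c)$, so I must re-establish coherence for it; the target is to exhibit a term $u_0\in\classeauto{\A\sseps}{E}{q_c}$ with $u_0\to_{\Rin/E}^* u$, which places $u$ in $(\Rin/E)^*(\classeauto{\A\sseps}{E}{q_c})$, and since $u$ is arbitrary this is the claimed inclusion.

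Next I would build $u_0$. For the hole I invoke the hypothesis $\langrecs\A{s}\subset(\Rin/E)^*(\classeauto{\A\sseps}{E}{q})$, obtaining $w_0$ with $w_0\moveet{\A\sseps}q$ and $w_0\to_{\Rin/E}^* w$. For each leaf I invoke $\Rin/E$-coherence of $\A$ (condition~(3) of Definition~\ref{rinco}) on $v_i\in\langrecs\A{\paire(q_i,p_i)}$, obtaining $v_i^0$ with $v_i^0\moveet{\A\sseps}q_i$ and $v_i^0\to_{\Rin/E}^* v_i$. I then let $u_0$ be $C'$ with $w$ replaced by $w_0$ and each $v_i$ replaced by $v_i^0$.

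The rewriting $u_0\to_{\Rin/E}^* u$ rests on one clean observation: innermost-ness of a step is a property of the redex and its strict subterms alone, so $a\to_{\Rin}b$ implies $D[a]\to_{\Rin}D[b]$ for every ground context $D$; since $\equiv_E$ is a congruence, this upgrades to $D[a]\to_{\Rin/E}D[b]$ whenever $a\to_{\Rin/E}b$. I would therefore rewrite the hole and the leaves one position at a time, lifting $w_0\to_{\Rin/E}^* w$ and each $v_i^0\to_{\Rin/E}^* v_i$ into $u_0$; each such step keeps its redex inside a single position, hence stays innermost no matter what occupies the parallel positions, so the order is immaterial and iterating yields $u_0\to_{\Rin/E}^* C'[w]=u$.

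Finally I must show $u_0\in\classeauto{\A\sseps}{E}{q_c}$, and this is where the work concentrates. The idea is to re-trace the upper part of the recognition, the path $C[\paire(q,p)]\moveet\A\paire(q_c,p_c)$, with $w_0$ and the $v_i^0$ substituted in and using only transitions of $\A\sseps$: the delta-transitions growing the context---in particular the colour-free supplementary transitions of Definition~\ref{defn-completion}, whose existence for every admissible choice of right components is exactly what Lemma~\ref{lemme-composition-contexte} guarantees via completeness and determinism of $\airr(R)$---lie in $\A\sseps$, so composing them with $w_0\moveet{\A\sseps}q$ and $v_i^0\moveet{\A\sseps}q_i$ gives $u_0\moveet{\A\sseps}\paire(q_c,p')$ for some $p'$, whence $u_0\in\classeauto{\A\sseps}{E}{q_c}$ by total separation of $E$-classes (Definition~\ref{rinco}(1)). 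I expect the main obstacle to be the case where this upper path traverses a colour-$\colour R$ epsilon transition \emph{strictly inside} the context, inherited from an earlier completion step: such a step has no counterpart in $\A\sseps$ and instead has to be absorbed as one further $\Rin/E$ rewrite in the sequence of the previous paragraph, its correctness coming from the $\Rin/E$-coherence of $\A$ at the source state. Handling this uniformly is cleanest by phrasing the last step as an induction on the length of the recognition path rather than as a single decomposition, the recurring check being that each inserted rewrite remains innermost, again by locality of the innermost condition.
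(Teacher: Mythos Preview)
Your approach is sound in spirit but takes a materially different route from the paper, and the comparison is instructive.

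The paper proves the lemma by a short induction on the \emph{height} of the context $C$. In the inductive step it writes $C[s]=C'[f(t_1,\dots,t_{i-1},s,t_{i+1},\dots,t_n)]$, isolates the transition $f(\paire(q_1,p_1),\dots,\paire(q,p_s),\dots,\paire(q_n,p_n))\move\paire(q',p')$, and argues at the level of states: either this transition was already in the pre-completion automaton (so $\Rin/E$-coherence at $q'$ is inherited), or it is a supplementary transition, in which case there is a matching original transition with $\paire(q,\pred)$ in the $i$-th slot, and the hypothesis $\langrecs\A s\subset(\Rin/E)^*(\classeauto{\A\sseps}E q)$ is exactly what is needed to conclude that the enlarged language at $q'$ is still contained in $(\Rin/E)^*(\classeauto{\A\sseps}E{q'})$. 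One then applies the induction hypothesis to $C'$. No ground terms are ever unpacked; everything happens one context layer at a time.

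Your plan instead fixes a ground $u\in\langrecs\A{C[s]}$, peels it into a ground context $C'$ over ground leaves $v_i$ and a hole filler $w$, manufactures representatives $w_0,v_i^0$ in $\A\sseps$, and then tries to show simultaneously that $u_0\to_{\Rin/E}^* u$ and $u_0\in\classeauto{\A\sseps}E{q_c}$. The first half is fine and your context-closure argument for $\to_{\Rin/E}$ is correct. The second half is where your route becomes heavier than the paper's: to place $u_0$ in $\classeauto{\A\sseps}E{q_c}$ you must re-run the recognition of the whole context in $\A\sseps$, and as you correctly diagnose, any $\colour R$-transition sitting strictly inside that path blocks this and must instead be traded for an extra $\Rin/E$ step. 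You then propose an induction on the recognition path to absorb these. That works, but notice you have essentially reinvented the paper's induction in a less convenient coordinate system: the paper's height-of-$C$ induction handles exactly this propagation, one layer at a time, without ever needing to materialise ground witnesses or worry about re-synchronising right components across parallel leaves.

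Two minor corrections. First, your appeal to Lemma~\ref{lemme-composition-contexte} is misplaced: that lemma lives on the correctness side (it guarantees that enough supplementary transitions \emph{exist} so that recognition of $C[r\sigma]$ goes through), whereas here you are on the precision side and the supplementary transitions are given; the only fact you need about them is that they are uncoloured delta-transitions, hence lie in $\A\sseps$. Second, when you replace each $v_i$ by $v_i^0$, the right component $p_i$ at that leaf may change, so you are not re-tracing the \emph{same} context configuration in $\A\sseps$; you need the product structure (completeness of $\airr(R)$ on the initial part) or, for new states, the supplementary transitions again---which is precisely the case split the paper's induction handles cleanly at each layer.
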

\begin{proof}
  We make a proof by induction on the height of context $C[\,]$. If $C[\,]$ is
  of height $0$ then the result holds because
  $\langrecs\A{s}\subset(\Rin/E)^*\left(\classeauto{\A\sseps} E
    {q}\right)$ is an assumption of the lemma. Assume that the result holds for
  contexts of height strictly smaller to $k$. Let $C[\,]$ be a context of height
  $k$. Let $f$ be the symbol of arity $n$, $C'[\,]$ the context of height $k-1$,
  $1\leq i \leq n$ a natural and $t_1,\ldots,t_n$ the
  terms such that $t_i=s$ and $C[s] =
  C'[f(t_1,\ldots,t_{i-1},s,t_{i+1},\ldots,t_n)$. Since
  $C[s]\moveet\A \paire(q_c,p_c)$, we know that there exists states
  $\paire(q_j,p_j)$, $1\leq j \leq n$ such that $t_j \moveet\A \paire(q_j,p_j)$
  for all $1\leq j \leq n$ with $q_i=q$ and $p_i=p_s$, since $t_i=s$. Furthermore, we know that there exists a state
  $\paire(q',p')$ and a transition
  $f(\paire(q_1,p_1),\ldots,\paire(q_{i-1},p_{i-1}),\paire(q,p_s),\paire(q_{i+1},p_{i+1}),
  \ldots, \paire(q_n,p_n)) \move \paire(q',p')$ in $\A$ such that
  $C'[\paire(q',p')] \moveet\A \paire(q_c,p_c)$. If this transition belongs to
  $\A$ before solving the critical pair, then we can conclude using the
  $\Rin/E$-coherence of the initial $\A$. If the transition is a supplementary
  transition, this means that there exists a transition
  $f(\paire(q_1,p_1),\ldots,\paire(q_{i-1},p_{i-1}),\paire(q,\pred),\paire(q_{i+1},p_{i+1}),
  \ldots, \paire(q_n,p_n)) \move \paire(q',\pred)$ in the initial tree automaton
  which is $\Rin/E$-coherent. So, initially,
  $\langrecs\A{q'}\subset(\Rin/E)^*\left(\classeauto{\A\sseps} E
    {q'}\right)$. Adding the transition where $\paire(q,p_s)$ replaces
  $\paire(q,p_{red})$ preserves this property on $q'$ because we know by
  assumption that
  $\langrecs\A{s}\subset(\Rin/E)^*\left(\classeauto{\A\sseps} E
    {q}\right)$. As $t_i=s$, we get that
  $\langrecs\A{f(t_1,\ldots,t_n)}\subset(\Rin/E)^*\left(\classeauto{\A\sseps} E {q'}\right)$
  remains true with supplementary transitions. Finally, we can use the induction
  hypothesis on $C'[\,]$ that is of height $k-1$ to get that
  $\langrecs\A{C'[f(t_1,\ldots,t_n)]}\subset(\Rin/E)^*\left(\classeauto{\A\sseps} E
    {q_c}\right)$.
\end{proof}

\noindent
The three above lemmas can straightforwardly be lifted to the full completion
algorithm as follows.
\begin{lem}\label{l1}
 Completion of an innermost critical pair preserves $\Rin/E$-coherence.
\end{lem}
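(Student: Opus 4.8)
The plan is to unfold the completion of one innermost critical pair $(\ell\to r,\sigma,\paire(q,\pred))$ into the three elementary operations of Definition~\ref{defn-completion}---normalising $r\sigma$, adding one $\colour R$-colored epsilon transition, and performing the supplementary operations---and to apply one of the three preceding lemmas to each. Throughout, the intermediate automata stay consistent with $\airr(R)$ (Lemma~\ref{lemme-projdroite-irr}), so the hypotheses of those lemmas are always met. I assume $\A$ is $\Rin/E$-coherent and consistent with $\airr(R)$, as is any automaton occurring during completion.

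For the normalisation phase, completion computes $N=\normalisation{\A}(r\sigma,\paire(q',p_{r\sigma}))$ with $q'$ a fresh state; Lemma~\ref{lemme-norm-pres} applies directly and shows that $\A\cup N$ is $\Rin/E$-coherent, with $r\sigma\moveet{\A\cup N}\paire(q',p_{r\sigma})$. Writing $\A$ again for this automaton, I then add the transition $\paire(q',p_{r\sigma})\movecol R\paire(q,p_{r\sigma})$. The crucial point is that this transition carries the color $\colour R$ and hence vanishes in $\A\sseps$: the automaton $\A\sseps$ is literally unchanged, so conditions~(1) and~(2) of Definition~\ref{rinco} (total separation of $E$-classes, accessibility) cannot be disturbed, and the class $\classeauto{\A\sseps}{E}{q}$ is unaffected. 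The addition can therefore affect only condition~(3), and only at the target state $\paire(q,p_{r\sigma})$: its language grows exactly by the terms reaching $\paire(q',p_{r\sigma})$, which---because $q'$ was created solely to normalise $r\sigma$---are precisely those recognised into the configuration $r\sigma$. Lemma~\ref{lemme-heureka1} bounds these by $(\Rin/E)^*\bigl(\classeauto{\A\sseps}{E}{q}\bigr)$, which is exactly the inclusion condition~(3) demands for $\paire(q,p_{r\sigma})$.

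It remains to propagate this through the supplementary operations, which add transitions $f(\dots,\paire(q,p_{r\sigma}),\dots)\move\paire(q'',p''')$ mirroring existing transitions in which $\pred$ stood where $p_{r\sigma}$ now appears. Here I would invoke Lemma~\ref{lem-supop}: with $s$ any term satisfying $s\moveet\A\paire(q,p_{r\sigma})$ and $C[\,]$ ranging over embedding contexts, the lemma lifts the coherence inclusion just obtained for $\paire(q,p_{r\sigma})$ to every $C[s]$, hence to each state $\paire(q'',p''')$ reached through a supplementary transition. Exhausting all contexts gives $\Rin/E$-coherence of the fully completed automaton, which is the claim.

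The three lemmas carry essentially all of the content, so the only real task in this lifting is the bookkeeping of the middle step: one must be sure that the colored transition enlarges the language of no state other than $\paire(q,p_{r\sigma})$ and leaves $\A\sseps$ genuinely intact, since this is what confines the whole perturbation to a single instance of condition~(3). That separation of concerns---colouring the rewrite transition and deferring all context propagation to the supplementary phase governed by Lemma~\ref{lem-supop}---is precisely the design that makes the composition of the three lemmas immediate, with no additional computation.
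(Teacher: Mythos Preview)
Your proposal is correct and follows exactly the paper's approach: the paper's proof is the single line ``Simple combination of Lemmas~\ref{lemme-norm-pres}, \ref{lemme-heureka1} and~\ref{lem-supop},'' and you have spelled out precisely how those three lemmas compose along the three phases of Definition~\ref{defn-completion}. Your added bookkeeping about the $\colour R$-colored transition leaving $\A\sseps$ intact is accurate and makes explicit what the paper leaves implicit.
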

\begin{proof}
  Simple combination of Lemmas~\ref{lemme-norm-pres}, \ref{lemme-heureka1} and~\ref{lem-supop}.
\end{proof}

\noindent
The next theorem aims at showing that applying an equation preserves
$\Rin/E$-coherence. We first prove a lemma showing that the equivalence classes associated to
two states concerned by an equation application are equal.

\begin{lem}\label{lemme-precision-equation}
  Let $\A$ be an automaton that totally separates the classes of $E$. Let
  $(s=t,\theta,\paire(q_1,p),\paire(q_2,p))$ be a situation of application of an
  equation of $E$ in $\A$. Then $\classeauto\A E{q_1}=\classeauto\A E{q_2}$.
\end{lem}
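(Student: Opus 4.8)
The plan is to connect the two classes $\classeauto\A E{q_1}$ and $\classeauto\A E{q_2}$ through the two configurations $s\theta$ and $t\theta$ supplied by the situation of application, and to close the loop using the defining property of the congruence $\equiv_E$ for the equation $s=t$. Since $\A$ totally separates the classes of $E$, every left-state $q'\in\projgauche{Q_\A}$ is accessible, and all terms recognised into $\paire(q',p')$, for any $p'$, lie in one single class, namely $\classeauto\A E{q'}$; thus each such class is non-empty and depends only on the left component. First I would fix, for every such $q'$, a ground term $w_{q'}$ recognised by $\A$ into some state $\paire(q',p')$, so that $\classeauto\A E{q'}=\classeeq E{w_{q'}}$, and then define a ground substitution $\mu:\ensvar\to\TF$ by $\mu(x)=w_{\projgauche{x\theta}}$ for each variable $x$ occurring in $s$ or $t$.

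The core of the argument is the following instantiation step. From $s\theta\moveet\A\paire(q_1,p)$ I want to deduce that the ground term $s\mu$ lies in the class of the terms recognised into $\paire(q_1,p)$, \ie that $s\mu\in\classeauto\A E{q_1}$; symmetrically $t\mu\in\classeauto\A E{q_2}$ from $t\theta\moveet\A\paire(q_2,p)$. Granting this, the proof finishes in one line: since $s=t$ is an equation of $E$ and $\mu$ is a ground substitution, $s\mu\equiv_E t\mu$ by the very definition of $\equiv_E$; hence $\classeauto\A E{q_1}=\classeeq E{s\mu}=\classeeq E{t\mu}=\classeauto\A E{q_2}$, because each class is a single $E$-class containing the corresponding representative. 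Note that the common right component $p$ plays no role here, the classes depending only on $q_1$ and $q_2$.

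To establish the instantiation step I would proceed by induction on the recognition path of $s\theta$ (equivalently, on the structure of $s$). At a variable leaf $x$ the configuration reaches $x\theta=\paire(q_x,p_x)$, and $\mu(x)=w_{q_x}\in\classeeq E{w_{q_x}}=\classeauto\A E{q_x}$, which is exactly what separation asserts for that state; at a function symbol $f$, I combine the inductive hypotheses on the arguments with the fact that every transition $f(\paire(q_1',p_1'),\dots,\paire(q_k',p_k'))\move\paire(q',p')$ of $\A$ identifies the class $\classeauto\A E{f(\paire(q_1',p_1'),\dots,\paire(q_k',p_k'))}$ of its left-hand side with the class $\classeauto\A E{q'}$ of its target (again by separation), and conclude by congruence of $\equiv_E$. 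The main obstacle, and the point where the hypotheses must be handled with care, is precisely this last identification: to know that the class of an elementary configuration matches the class of its target state one must realise the source states by \emph{actual} recognised ground terms carrying the \emph{exact} right components $p_i'$, so that the transition genuinely fires and yields a term recognised into $\paire(q',p')$, hence lying in $\classeauto\A E{q'}$. Total separation alone guarantees accessibility of the left components, whereas matching the right components rests on the accessibility of the full states occurring along the recognitions of $s\theta$ and $t\theta$ (which is available in the setting where this lemma is applied, the completed automata being accessible). Once this is secured, the passage from configurations to classes is routine and the equation does the rest.
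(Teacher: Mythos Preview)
Your proposal is correct and follows essentially the same route as the paper: build a ground substitution $\mu$ from $\theta$ using (left-)accessibility, obtain the instance $s\mu\equiv_E t\mu$ of the equation, and identify $\classeeq E{s\mu}$ with $\classeauto\A E{q_1}$ and $\classeeq E{t\mu}$ with $\classeauto\A E{q_2}$.

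The only noteworthy difference is packaging. The paper collapses your entire inductive ``instantiation step'' into the single line $\classeauto\A E{s\theta}=\classeeq E{s\mu}$ (and symmetrically for $t$), relying on the fact that $\classeauto\A E\cdot$ has already been extended to configurations in Definition~\ref{def-sep-classes}: once the class of a configuration is defined compositionally from the classes of its state-leaves, and those classes depend only on the left component, the reduction $\classeauto\A E{q_1}=\classeauto\A E{s\theta}$ and the equality $\classeauto\A E{s\theta}=\classeeq E{s\mu}$ are immediate, without unfolding the recognition path. Your explicit induction is a faithful unpacking of that one line. As for your worry about realising the \emph{exact} right components $p_i'$ so that transitions ``genuinely fire'': in the paper's framing this never arises, precisely because the configuration-class is computed from left components only and the witnesses $x\mu$ are taken in $\projgauche\A$; so total separation (left-accessibility) alone suffices and you do not need to appeal to accessibility of the full pair states.
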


\begin{proof}
  It suffices to prove that $\classeauto\A E{s\theta}=\classeauto\A
  E{t\theta}$. Since the separation of the classes by $\A$ is total, for each
  $x$ in the domain of $\theta$, there is a term
  $x\mu\in\langrecs{\projgauche\A}{x\theta}$. This builds an instance
  $s\mu\equiv_Et\mu$ of the considered equation. But $\classeauto\A
  E{s\theta}=\classeeq E{s\mu}$ and $\classeauto\A E{t\theta}=\classeeq
  E{t\mu}$.  
\end{proof}

\begin{thm}\label{l2}
 Equational simplification preserves $\Rin/E$-coherence.
\end{thm}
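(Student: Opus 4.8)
The plan is to reduce to a single equation application and then re-establish the three conditions of Definition~\ref{rinco}. Equational simplification iterates the operation of Definition~\ref{defn-situation-equation} until no situation of application remains, and since it introduces no new state it terminates; hence by induction on the number of applications it suffices to show that one application, producing $\A'$ from a $\Rin/E$-coherent $\A$, preserves $\Rin/E$-coherence. Accessibility of $\A'$ (condition~2) is immediate, as $\A'$ has the same states as $\A$ and strictly more transitions. The engine for the other two conditions is Lemma~\ref{lemme-precision-equation}: every transition added (the pair for the situation at hand, together with all its supplementary transitions) is an $\colour E$-transition linking states $\paire(q_1,p'_1)$ and $\paire(q_2,p'_1)$ whose associated classes coincide, $\classeauto{\A\sseps}E{q_1}=\classeauto{\A\sseps}E{q_2}$.

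For total separation (condition~1), I first note that the new transitions carry colour $\colour E$ and therefore live in $\A'\sseps$, so $\A'\sseps\supseteq\A\sseps$ keeps $\projgauche{\A'\sseps}$ accessible. To recover separation I would argue, by induction on the number of new $\colour E$-transitions occurring in a recognition path $u\moveet{\A'\sseps}\paire(q,p)$ (the device used in the proof of Lemma~\ref{lemme-eq-cons-irr}), that $u\in\classeauto{\A\sseps}E q$. The base case is separation of $\A\sseps$. In the inductive step I peel off the first use, writing $u\moveet{\A\sseps}C[\paire(q_1,p'_1)]\movecol E C[\paire(q_2,p'_1)]\moveet{\A'\sseps}\paire(q,p)$, decompose $u=C'[u_1]$ with $u_1\moveet{\A\sseps}\paire(q_1,p'_1)$, and replace $u_1$ by any $w_2$ with $w_2\moveet{\A\sseps}\paire(q_2,p'_1)$; equality of classes gives $u_1\equiv_E w_2$, and $C'[w_2]$ is recognised into $\paire(q,p)$ using one fewer new transition, so the induction hypothesis places $C'[w_2]$ in $\classeauto{\A\sseps}E q$, whence $u\equiv_E C'[w_2]$ lands there too by congruence of $\equiv_E$. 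This argument also shows $\classeauto{\A'\sseps}E q=\classeauto{\A\sseps}E q$ for every $q$.

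Condition~3 is handled by the same induction on the number of new-transition uses, now over paths $u\moveet{\A'}\paire(q,p)$ in the full automaton. After peeling off the first use and decomposing $u=C'[u_1]$ with $u_1\moveet{\A}\paire(q_1,p'_1)$, I replace $u_1$ by a representative $w_2\moveet{\A\sseps}\paire(q_2,p'_1)$ of the class $\classeauto{\A\sseps}E{q_2}=\classeauto{\A\sseps}E{q_1}$. The reductions performed in the surrounding context are unaffected by this substitution, because the new transition fires exactly at the hole and is consumed only afterwards; hence $C'[w_2]\moveet{\A}\paire(q,p)$ with one fewer new transition, and the induction hypothesis gives $C'[w_2]\in(\Rin/E)^*(\classeauto{\A\sseps}E q)$. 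To return to $u$, I apply $\Rin/E$-coherence of $\A$ to $u_1$, obtaining $w_1\in\classeauto{\A\sseps}E{q_1}$ with $w_1\to_{\Rin/E}^* u_1$; since $w_1\equiv_E w_2$ and $\to_{\Rin/E}$ operates modulo $E$ on both sides, this gives a derivation from $w_2$ to $u_1$ modulo $E$, and because an innermost contraction of a subterm stays innermost inside any context it lifts to $C'[w_2]\to_{\Rin/E}^* C'[u_1]=u$. Composing, $u\in(\Rin/E)^*(\classeauto{\A\sseps}E q)=(\Rin/E)^*(\classeauto{\A'\sseps}E q)$, which is exactly condition~3 for $\A'$.

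The main obstacle is condition~3: recognition in $\A'$ may freely interleave the new $\colour E$-transitions with ordinary transitions, and one must certify that each equational detour corresponds to a genuine $\to_{\Rin/E}$ derivation out of the $E$-class rather than enlarging the reachable set. Two facts carry the argument and must be checked with care. First, substituting an $E$-equivalent subterm at the hole leaves the enclosing context reductions valid, since the added transition fires precisely at the hole and is consumed only higher up, so the surrounding derivation is untouched. Second, a subterm derivation lifts to the whole term because $\to_{\Rin}$ is closed under contexts — contracting a redex whose strict subterms are normal forms remains innermost inside any context — a property special to the innermost setting. The side condition $p_1=p_2$ of Definition~\ref{defn-situation-application} is what keeps this compatible with the normal-form bookkeeping: equating only states that share their $\airr(R)$-component means the replaced subterm $w_2$ has the same reducibility status ($p'_1$) as $u_1$, so consistency with $\airr(R)$ and the separation of classes both survive the merge.
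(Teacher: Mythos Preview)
Your overall strategy matches the paper's: reduce to one equation application, invoke Lemma~\ref{lemme-precision-equation} for the class equality $\classeauto{\A\sseps}E{q_1}=\classeauto{\A\sseps}E{q_2}$, and establish the three coherence conditions by inducting on occurrences of the new $\colour E$-transitions in recognition paths. Accessibility and the shape of condition~1 are handled just as in the paper.

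There is, however, a technical gap in your inductive step. You peel off the first new transition, write the path as $u\moveet{\A}C[\paire(q_1,p'_1)]\movecol E C[\paire(q_2,p'_1)]\moveet{\A'}\paire(q,p)$, and then ``replace $u_1$ by a representative $w_2\moveet{\A\sseps}\paire(q_2,p'_1)$''. But total separation only says $\projgauche{\A\sseps}$ is accessible; nothing guarantees that the particular pair state $\paire(q_2,p'_1)$ is accessible in $\A\sseps$. Targets of $\colour R$-transitions---states $\paire(q,p_{r\sigma})$ created during critical-pair completion---are typically reachable only through that $\colour R$-edge, hence not in $\A\sseps$, and such a state may well appear as $\paire(q_2,p'_1)$ in a supplementary $\colour E$-transition of Definition~\ref{defn-situation-equation}. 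Without $w_2$, your substitution argument collapses, since you need both $w_2\in\classeauto{\A\sseps}E{q_2}$ (to link classes) and $C'[w_2]\moveet{\A'}\paire(q,p)$ with one fewer new transition (to invoke the hypothesis); taking $w_2\moveet{\A}\paire(q_2,p'_1)$ instead only gives $w_2\in(\Rin/E)^*(\classeauto{\A\sseps}E{q_2})$, which is too weak for the class comparison.

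The paper circumvents this by never materialising a witness: it reasons directly with the class-of-configuration notation $\classeauto{\A\sseps}E\cdot$, which (being extended compositionally from states) depends only on first components, and shows $C[\classeauto{\A\sseps}E{q_2}]=\classeauto{\A\sseps}E q$ from the recognition path of the \emph{configuration}. For condition~3 the paper also organises the argument differently: rather than your single induction on new $\colour E$-transitions, it performs an inner induction on the number of $\colour R$-transitions in the tail $C[\paire(q_2,p_0)]\moveet{\A}\paire(q,p)$, invoking Lemma~\ref{lemme-heureka1} at each $\colour R$-step. Your route---pushing all $\colour R$-handling into the assumed $\Rin/E$-coherence of $\A$---is more economical in spirit, but to make it rigorous you would need either to switch to the configuration-class formulation, or to prove an auxiliary invariant that every state of the completed automaton is accessible in $\A\sseps$.
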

\begin{proof}
 Let $s=t\in E$, and $(s=t,\theta,\paire(q_1,p_0),\paire(q_2,p_0))$ be a situation of
 application of this equation in $\A$. Let $\B$ be the automaton
 resulting from the application of the equation between $\paire(q_1,p_0)$ and $\paire(q_2,p_0)$. Let $\paire(q,p)$ be a state. 
 
 Show that $\langrecs{\B\sseps}{\paire(q,p)}\subset\classeauto{\A\sseps}E q$. Consider $u=C[u_0]\moveet{\A\sseps}C[\paire(q_1,p_0)]\movecol E C[\paire(q_2,p_0)] \moveet{\A\sseps} \paire(q,p)$. We have $u_0\in\classeauto{\A\sseps}E{q_1}$, thus, by Lemma~\ref{lemme-precision-equation}, $u_0\in\classeauto{\A\sseps}E{q_2}$. Therefore $u\in C[\classeauto{\A\sseps}E{q_2}]$, which is just $\classeauto{\A\sseps}Eq$. Other cases are either trivial, symmetrical or reducible to this one.
 
 Next, show that $\langrecs{\B}{\paire(q,p)}\subset(\Rin/E)^*\left(\classeauto{\A\sseps}Eq\right)$. Consider $u=C[u_0]\moveet{\A}C[\paire(q_1,p_0)]\movecol E C[\paire(q_2,p_0)] \moveet{\A} \paire(q,p)$. We have $u_0\in(\Rin/E)^*\left(\classeauto {\A\sseps}E{q_1}\right)$, \ie $u_0\in(\Rin/E)^*\left(\classeauto {\A\sseps}E{q_2}\right)$. Thus $u\in(\Rin/E)^*\left(C[\classeauto {\A\sseps}E{q_2}]\right)$.
 First, assume that $C[\paire(q_2,p_0)] \moveet{\A\sseps} \paire(q,p)$. Then $C[\classeauto{\A\sseps}E{q_2}]=\classeauto{\A\sseps}E q$, therefore $u\in(\Rin/E)^*\left(\classeauto {\A\sseps}E{q}\right)$.
 Second, assume that $C[\paire(q_2,p_0)] \moveet{\A} \paire(q,p)$ with just one
 $\colour R$-transition, that is
 $C[\paire(q_2,p_0)]\moveet{\A\sseps}D[r\sigma]\moveet{\A\sseps}D[\paire(q'_3,p_3)]\movecol
 R D[\paire(q_3,p_3)]\moveet{\A\sseps}\paire(q,p)$. There is a corresponding
 critical pair $(\ell\to r,\sigma,\paire(q_3,\pred))$ and, by
 Lemma~\ref{lemme-heureka1},
 $\langrecs\A{r\sigma}\subset(\Rin/E)^*\left(\classeauto{\A\sseps} E
   {q_3}\right)$. On the other hand,
 $D[\classeauto{\A\sseps}E{q_3}]=\classeauto{\A\sseps}Eq$. Thus, we have $\langrecs\A{D[r\sigma]}\subset(\Rin/E)^*\left(\classeauto{\A\sseps} E {q}\right)$. Since $(\Rin/E)^*$ is an operator that deals with equivalence classes, every term equivalent to one of $\langrecs\A{D[r\sigma]}$ is also a descendant of $\classeauto{\A\sseps}E q$ by $\Rin/E$. Since $C[\classeauto{\A\sseps}E {q_2}]=D[\classeauto{\A\sseps} E {r\sigma}]$, $u$ is a descendant of such a term, so $u\in(\Rin/E)^*\left(\classeauto {\A\sseps}E{q}\right)$.
 
 Finally, in the paragraph above, it suffices that $D[\classeauto{\A\sseps}E{q_3}]\subset(\Rin/E)^*\left(\classeauto{\A\sseps}Eq\right)$ (we had $D[\classeauto{\A\sseps}E{q_3}]=\classeauto{\A\sseps}Eq$): this allows us to reuse this case as an induction step over the number of $\colour R$-transitions present in the path $C[\paire(q_2,p_0)] \moveet{\A} \paire(q,p)$.
\end{proof}

\noindent
To state the final theorem, we need a last lemma guaranteeing that equivalence
classes associated to states in $\A_i\sseps$ do not change during completion.

\begin{lem}[Completion preserves equivalence classes]\label{lemme-class-pres} Let $\A_0$ be a pair automaton,
  $q$ be a state of $\projgauche{\A_0}$ and $\A_i$ be an automaton obtained after
  some completion steps of $\A_0$. If $\A_0\sseps$ and $\A_i\sseps$ totally
  separate the classes of $E$ then
  $\classeauto{{\A_0}\sseps}E{q} = \classeauto{{\A_i}\sseps}E{q}$.
\end{lem}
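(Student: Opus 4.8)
The plan is to exploit the separation hypothesis directly. Under it, each of $\classeauto{\A_0\sseps}E q$ and $\classeauto{\A_i\sseps}E q$ is, by Definition~\ref{def-sep-classes}, a \emph{single} $E$-class, namely the class of any term recognised into a pair $\paire(q,\cdot)$ in the respective automaton. Hence, to establish the equality, it suffices to produce one term $s$ that is recognised into $\paire(q,\cdot)$ in \emph{both} $\A_0\sseps$ and $\A_i\sseps$: the two classes will then both be forced to equal $[s]_E$. So the whole argument reduces to a monotonicity observation about completion together with the existence of such a witness.

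The monotonicity observation is that $\A_0\sseps$ is a subautomaton of $\A_i\sseps$. Indeed, a completion step (Definition~\ref{defn-completion}) and an equation application (Definition~\ref{defn-situation-equation}) only \emph{add} states and transitions to the current automaton; they never delete or relabel an existing one, so $\A_0$ is a subautomaton of $\A_i$. Passing to the $\sseps$-version discards only the $\colour R$-coloured epsilon-transitions, so any transition surviving in $\A_0\sseps$ is also present in $\A_i$ and, not being $\colour R$-coloured, survives in $\A_i\sseps$ as well. Consequently $\moveet{\A_0\sseps}\subseteq\moveet{\A_i\sseps}$, and in particular $\langrecs{\A_0\sseps}{\paire(q,p)}\subseteq\langrecs{\A_i\sseps}{\paire(q,p)}$ for every state $\paire(q,p)$ of $\A_0$.

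For the witness, I would use that $\A_0\sseps$ \emph{totally} separates the classes of $E$, \ie that $\projgauche{\A_0\sseps}$ is accessible (Definition~\ref{def-sep-classes}). Hence the state $q$ of $\projgauche{\A_0}$ is accessible, so there is a term $s$ with $s\moveet{\projgauche{\A_0\sseps}}q$. Since $\A_0=\autoInit\times\airr(R)$ carries no colour, $\A_0\sseps=\A_0$, and its right component $\airr(R)$ is complete; Lemma~\ref{lem-proj-eq} then gives $\langrecs{\projgauche{\A_0\sseps}}{q}=\bigcup_{p}\langrecs{\A_0\sseps}{\paire(q,p)}$, so $s\in\langrecs{\A_0\sseps}{\paire(q,p)}$ for some $p$. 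By the monotonicity step the same recognition path is valid in $\A_i\sseps$, whence $s\in\langrecs{\A_i\sseps}{\paire(q,p)}$. Applying the definition of $\classeauto{\cdot}E q$ once to $\A_0\sseps$ and once to $\A_i\sseps$ (both separate the classes of $E$) now yields $\classeauto{\A_0\sseps}E q=[s]_E=\classeauto{\A_i\sseps}E q$, as required.

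The only genuinely delicate point, and the step I expect to need the most care, is the passage from accessibility of the \emph{left projection} $\projgauche{\A_0\sseps}$ to the existence of a term recognised into an actual pair $\paire(q,p)$ of $\A_0\sseps$: this is exactly where completeness of the $\airr(R)$ component, via Lemma~\ref{lem-proj-eq}, is indispensable, since without completeness Remark~\ref{rem-proj} gives only an inclusion and the lift could fail. Everything else is routine bookkeeping: that completion is monotone and that the $\sseps$-operation discards only $\colour R$-transitions, both immediate from the definitions. Note that no induction on $i$ is needed, as the inclusion $\A_0\sseps\subseteq\A_i\sseps$ already holds for the final $\A_i$.
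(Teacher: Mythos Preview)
Your proof is correct and follows essentially the same approach as the paper: produce a witness term recognised into $\paire(q,\cdot)$ in $\A_0\sseps$, use the monotonicity of completion to carry it to $\A_i\sseps$, and conclude that both $E$-classes coincide with $[s]_E$. You are in fact more careful than the paper on one point: the paper simply asserts that total separation makes ``$\A_0\sseps$ accessible'' and picks a witness in some $\langrecs{\A_0\sseps}{\paire(q,p')}$, whereas total separation literally only gives accessibility of $\projgauche{\A_0\sseps}$; your explicit appeal to completeness of $\airr(R)$ via Lemma~\ref{lem-proj-eq} is what actually justifies lifting that witness to a pair state (though note this imports the assumption $\A_0=\autoInit\times\airr(R)$, which is implicit in the paper's context but not stated in the lemma itself).
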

\begin{proof}
  Since $q$ is a state of $\projgauche{\A_0}$ and ${\A_0}\sseps$ totally
  separates the classes of $E$, we know by Definition~\ref{def-sep-classes} that
  there is a term $s$ such that for all $p$,
  $\langrecs{{\A_0}\sseps}{\paire(q,p)}\subset\classeeq Es$. Let $t$ be a term
  belonging to $\classeeq Es$ and to $\langrecs{\A_0\sseps}{\paire(q,p')}$, for
  a given $p'$. We
  know that such a term exists because ${\A_0}\sseps$ {\em totally} separates
  the classes, {\em i.e.} ${\A_0}\sseps$ is accessible. Besides, since
  $\A_i\sseps$ also totally separates the classes of 
  $E$, we know from Definition~\ref{def-sep-classes} that there is a term $u$
  such that for all $p$, $\langrecs{{\A_i}\sseps}{\paire(q,p)}\subset\classeeq
  Eu$. However, since completion only adds transitions to automata, we know that
  the term $t$ is also recognized by $\A_i\sseps$, {\em i.e.} $t\in
  \langrecs{{\A_i}\sseps}{\paire(q,p')}$. Thus $t$ belongs to $\classeeq Eu$ and
  since $t$ also belong to $\classeeq Es$, we have $s\equiv_Eu$. Finally $\classeeq Es = \classeeq
  Eu$ and thus $\classeauto{{\A_0}\sseps}E{q} = \classeauto{{\A_i}\sseps}E{q}$.
\end{proof}

\noindent
Finally, if $\A_0$ separates the classes of $E$,
innermost equational completion will never add to the computed
approximation a term that is not a descendant of $\langrec{\A_0}$ through $\Rin$
modulo $E$ rewriting. This is what is stated in this main theorem, which formally
defines the precision of completed tree automata.

\begin{thm}[Precision]\label{thm-precision}
 Let $E$ be a set of equations. Let $\A_0=\autoInit\times\airr(R)$, where $\autoInit$ has designated final states. We prune $\A_0$ of its non-accessible states. Suppose $\A_0$ separates the classes of $E$. Let $R$ be any left-linear TRS. Let $\A_i$ be obtained from $\A_0$ after some steps of innermost equational completion. Then
  \begin{equation}
  \langrec{\projgauche{\A_i}}\subset (\Rin/E)^*(\langrec{\autoInit})). \label{p1}
 \end{equation}

  \begin{equation}
  \langrec{\A_i}\subset (\Rin/E)^!(\langrec{\autoInit})). \label{p2}
 \end{equation}

\end{thm}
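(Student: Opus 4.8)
The plan is to route both inclusions through a single structural invariant, the $\Rin/E$-coherence of Definition~\ref{rinco}, and to exploit that the earlier lemmas already establish that this invariant survives the two atomic operations of completion. Concretely, I would prove by induction on the number of completion steps that every $\A_i$ is $\Rin/E$-coherent, and then read off~(\ref{p1}) and~(\ref{p2}) from condition~(3) of that definition together with the class-preservation Lemma~\ref{lemme-class-pres} and the projection Lemma~\ref{lem-proj-eq}.

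For the base case I would check that the pruned $\A_0=\autoInit\times\airr(R)$ is $\Rin/E$-coherent. Since neither $\autoInit$ nor $\airr(R)$ carries an epsilon transition, $\A_0\sseps=\A_0$. Pruning makes $\A_0$ accessible, which is condition~(2), and it also makes $\projgauche{\A_0}$ accessible, so that the hypothesis that $\A_0$ separates the classes of $E$ becomes \emph{total} separation, condition~(1). Finally condition~(3) is immediate: by separation $\langrecs{\A_0}{\paire(q,p)}\subseteq\classeauto{\A_0\sseps}{E}{q}\subseteq(\Rin/E)^*(\classeauto{\A_0\sseps}{E}{q})$, the last inclusion by reflexivity of $\to_{\Rin/E}^*$. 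For the induction step, a completion step of Definition~\ref{completion-algo} is a critical-pair completion phase followed by equational simplification; iterating Lemma~\ref{l1} over the finitely many critical pairs resolved in the step and then applying Theorem~\ref{l2} carries $\Rin/E$-coherence from $\A_i$ to $\A_{i+1}$. Hence every $\A_i$ is $\Rin/E$-coherent.

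With coherence in hand I would derive~(\ref{p1}) as follows. Take $t\in\langrec{\projgauche{\A_i}}$; it is recognised into some $q_f$ that is final in $\autoInit$. Because the $\airr(R)$-component of $\A_i$ is complete, Lemma~\ref{lem-proj-eq} gives $t\in\langrecs{\A_i}{\paire(q_f,p)}$ for some $p$, and condition~(3) of coherence then yields $t\in(\Rin/E)^*(\classeauto{\A_i\sseps}{E}{q_f})$. Since both $\A_0\sseps$ and $\A_i\sseps$ totally separate the classes, Lemma~\ref{lemme-class-pres} identifies this class with $\classeauto{\A_0\sseps}{E}{q_f}$, which is $\classeeq{E}{s}$ for a term $s$ recognised by $\A_0$ into some $\paire(q_f,p')$; by the product structure of $\A_0$ this $s$ lies in $\langrecs{\autoInit}{q_f}\subseteq\langrec{\autoInit}$. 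Thus $t\in(\Rin/E)^*(\classeeq{E}{s})$ with $s\in\langrec{\autoInit}$, which I would then fold into $(\Rin/E)^*(\langrec{\autoInit})$ using that $\to_{\Rin/E}^*$ already works modulo $E$, so that the equivalence between $s$ and the class representative from which $t$ is actually reached is absorbed.

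For~(\ref{p2}) the only additional ingredient is irreducibility. A final state of $\A_i$ is a pair $\paire(q_f,p_f)$ with $p_f$ final in $\airr(R)$, hence $p_f\neq\pred$; by consistency with $\airr(R)$ (Lemma~\ref{lemme-projdroite-irr}) and the remark following Theorem~\ref{thm-airr}, any $t\in\langrecs{\A_i}{\paire(q_f,p_f)}$ has $\projdroite{t}\moveet{\airr(R)}p_f$ with $p_f\neq\pred$ and is therefore $R$-irreducible. Combining this with the derivation above gives $t\in(\Rin/E)^*(\langrec{\autoInit})\cap\lirr(R)=(\Rin/E)^!(\langrec{\autoInit})$. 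I expect the genuine difficulty to be the last step of the~(\ref{p1}) argument, namely transferring the coherence bound --- which is stated over the whole $\equiv_E$-class $\classeauto{\A_i\sseps}{E}{q_f}$ --- down to the concrete representative $s\in\langrec{\autoInit}$: one must argue carefully that, because $(\Rin/E)^*$ is defined on $\equiv_E$-classes, replacing the class by $s$ neither drops a reachable term nor introduces a spurious one.
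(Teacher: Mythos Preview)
Your proposal is correct and follows essentially the same route as the paper: establish $\Rin/E$-coherence of $\A_0$, propagate it via Lemma~\ref{l1} and Theorem~\ref{l2}, then read off the inclusions from condition~(3) together with Lemma~\ref{lemme-class-pres} and Lemma~\ref{lem-proj-eq}. The subtlety you flag in your last paragraph is exactly the one the paper isolates as the identity $(\Rin/E)^*([s]_E)=(\Rin/E)^*(\{s\})$, and your invocation of Lemma~\ref{lemme-projdroite-irr} for the irreducibility part of~(\ref{p2}) makes explicit what the paper leaves implicit.
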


\begin{proof} 
  We know that $\A_0$ is $\Rin/E$-coherent because (1) $\A_0\sseps$ separates
  the classes of $E$ ($\A_0$ separates the classes of $E$ and $\A_0=\A_0\sseps$
  since none of $\A_{init}$ and $\airr$ have epsilon transitions), and (2)
  $\A_0$ is accessible. Condition (3) of Definition~\ref{rinco} is trivially
  satisfied since $\A_0$ separates classes of $E$, meaning that for all states
  $q$, there is a term $s$ s.t. $\langrecs{\A_0}{\paire(q,p)}\subset\classeeq
  Es$, {\em i.e.} all terms recognized by $q$ are $E$-equivalent to $s$ which is
  a particular case of case (3) in Definition~\ref{rinco}. Then, during
  successive completion steps, by Lemma~\ref{l1} and~\ref{l2}, we know that each
  basic transformation applied on $\A_0$ (completion or equational step)
  preserves the $\Rin/E$-coherence of $\A_0$. Thus, all automata from $\A_0$ to
  $\A_i$ are $\Rin/E$-coherent. Finally, case (3) of $\Rin/E$-coherence of $\A_i$ entails
  that, for all states $\paire(q,p)$ of $\A_i$,
  $\langrecs{\A_i}{\paire(q,p)}\subset
  (\Rin/E)^*\left(\classeauto{{\A_i}\sseps}Eq\right)$. Since completion does not
  add final states, for any final state $\paire(q_f,p_f)$ of $\A_i$,
  we know that $\paire(q_f,p_f)$ is a final state of $\A_0$, and that $q_f$ is
  a final state of $\autoInit$. From the fact that $\paire(q_f,p_f)$
  is a state of $\A_0$ and $\A_0$ is
  accessible, we can deduce that $\classeauto{{\A_0}\sseps}E{q_f}$ is not empty.
  Then, using Lemma~\ref{lemme-class-pres}, we can obtain that $\classeauto{{\A_0}\sseps}E{q_f} = 
  \classeauto{{\A_i}\sseps}E{q_f}$. Besides, since neither $\autoInit$ nor
  $\airr(R)$ have epsilon transitions, ${\A_0}\sseps= \A_0$. Thus, in the above
  inequality, we can choose $q_f$ for $q$ and replace
  $\classeauto{{\A_i}\sseps}E{q_f}$ by $\classeauto{{\A_0}}E{q_f}$. Suming-up we get
  that, for all final state $q_f$ of $\autoInit$, for all state $p$ of $\airr(R)$, $\langrecs{\A_i}{\paire(q_f,p)}\subset
  (\Rin/E)^*\left(\classeauto{\A_0}E{q_f}\right)$. Then using
  Lemma~\ref{lem-proj-eq} and the fact that $\airr(R)$ is complete, we obtain
  that for any final state $q_f$ of $\autoInit$, $\langrecs{\projgauche{\A_i}}{q_f} = \bigcup_p
  \langrecs{\A_i}{\paire(q_f,p)} \subset
  (\Rin/E)^*\left(\classeauto{\A_0}E{q_f}\right)$. Note that, for any $E$-equivalence
  class $[s]_E$, $(\Rin/E)^*([s]_E) = (\Rin/E)^*(\{s\})$. Thus, we can simplify
  the above inequality into $\langrecs{\projgauche{\A_i}}{q_f} \subset
  (\Rin/E)^*\left(\bigcup_p \langrecs{\A_0}{\paire(q_f,p)})\right)$. Using again
  Lemma~\ref{lem-proj-eq} on $\bigcup_p \langrecs{\A_0}{\paire(q_f,p)}$ and the
  fact that $\projgauche{\A_0}=\autoInit$, we get
  that for any final state $q_f$ of $\autoInit$, we have $\langrecs{\projgauche{\A_i}}{q_f} \subset
  (\Rin/E)^*\left(\langrecs{\autoInit}{q_f})\right)$ and finally $\langrec{\projgauche{\A_i}} \subset
  (\Rin/E)^*\left(\langrec{\autoInit})\right)$. This ends the proof of
  case~(\ref{p1}). Case~(\ref{p2}) can be shown using again the property proved
  above, $\langrecs{\A_i}{\paire(q_f,p)}\subset
  (\Rin/E)^*\left(\classeauto{\A_0}E{q_f}\right)$, and by remarking that
  for $\paire(q_f,p)$ to be a final state of $\A_i$ we need $p\neq \pred$. Thus,
  terms recognized by $\paire(q_f,p)$ are innermost reachable modulo $E$ {\em
    and} irreducible.
\end{proof}

\noindent
Note that the fact that $\A_0$ needs to separate the classes of $E$ is not a
strong restriction in practice. In the particular case of 
functional TRS (TRS encoding first order typed functional
programs~\cite{Genet-WRLA14,Genet-JLAMP15}), $E$ is non empty and is inferred
from $R$ (see an example Section~\ref{equation}). In this case, there always exists a tree automaton recognising a
language equal to $\Lang(\A_0)$ and which separates the classes of $E$,
see~\cite{Genet-rep14} for details. However, outside of this particular case,
this is not true in general.
For instance, if $E=\emptyset$ then for $\A_0$ to separate the classes of $E$,
it needs to recognize a finite language. Indeed, if $E=\emptyset$ then for all terms
$t$, $\classeeq Et$ is a singleton, {\em i.e.}  $\classeeq Et= \{t\}$. For
$\A_0$ to separate the classes of $E$, for all states $\paire(q,p)$ of $\A_0$ we need to
have a term $s$ such that $\langrecs{\A_0}{\paire(q,p)} \subseteq \classeeq Es =
\{s\}$. Thus, language of $\A_0$ is necessarily finite. For the particular case
where $E=\emptyset$, we have a specific corollary of the above theorems.
\begin{cor}
\label{corollary-emptyset}
Let $R$ be a left-linear TRS, $E$ an empty set of equations, and
$\A_0=\autoInit\times\airr(R)$. If $\A_0$ recognizes a finite language and 
innermost completion terminates on a fixpoint $\automatefixin$ then:
 \begin{equation}
  \langrec{\projgauche{\automatefixin}} = \Rin^*(\langrec{\autoInit}))
 \end{equation}

  \begin{equation}
  \langrec{\automatefixin} = \Rin^!(\langrec{\autoInit})). 
 \end{equation}
\end{cor}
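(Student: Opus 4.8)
The plan is to obtain each equality by sandwiching the relevant language between the two inclusions already proved, namely the $\supset$ direction from Theorem~\ref{thm-correction} and the $\subset$ direction from Theorem~\ref{thm-precision}, once I have collapsed rewriting modulo $E$ to ordinary innermost rewriting. The first observation I would make is that an empty set of equations makes $\equiv_E$ syntactic equality: for every term $t$ we have $\classeeq E t = \{t\}$. Hence $\to_{\Rin/E}$ and $\to_{\Rin}$ coincide, and therefore $(\Rin/E)^*(L) = \Rin^*(L)$ and $(\Rin/E)^!(L) = \Rin^!(L)$ for every $L$. This identification is precisely what lets the correctness and precision bounds land on the same right-hand side.

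Next I would verify that both theorems actually apply to the fixpoint $\automatefixin$. Since $\automatefixin$ is produced by innermost completion from $\autoInit\times\airr(R)=\A_0$ and $R$ is left-linear, Theorem~\ref{thm-correction} applies and yields $\langrec{\projgauche{\automatefixin}}\supset\Rin^*(\langrec{\autoInit})$ together with $\langrec{\automatefixin}\supset\Rin^!(\langrec{\autoInit})$. For Theorem~\ref{thm-precision} I must check its standing hypothesis that $\A_0$ separates the classes of $E$. For $E=\emptyset$ this requirement is \emph{equivalent} to $\A_0$ recognising a finite language, by exactly the argument sketched just before the corollary: separating the classes of the empty equation set forces every state $\paire(q,p)$ to recognise at most the singleton $\classeeq E s = \{s\}$, hence a finite language overall, and conversely a finite language can be so separated. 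Since the corollary assumes $\langrec{\A_0}$ finite, this hypothesis is met (the pruning of non-accessible states and the designation of final states in $\autoInit$ being harmless bookkeeping), and $\automatefixin$, being one of the $\A_i$, is covered.

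With both theorems in force I would conclude. Precision gives $\langrec{\projgauche{\automatefixin}}\subset(\Rin/E)^*(\langrec{\autoInit})$ and $\langrec{\automatefixin}\subset(\Rin/E)^!(\langrec{\autoInit})$, which by the collapse of the first paragraph read $\langrec{\projgauche{\automatefixin}}\subset\Rin^*(\langrec{\autoInit})$ and $\langrec{\automatefixin}\subset\Rin^!(\langrec{\autoInit})$. Combining these with the correctness inclusions of the second paragraph yields the two claimed equalities.

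I expect the only genuinely delicate point to be the equivalence \emph{$\A_0$ separates the classes of $\emptyset$} $\iff$ \emph{$\langrec{\A_0}$ is finite}, which is why the finiteness assumption is indispensable: it is exactly the hypothesis that unlocks Theorem~\ref{thm-precision} in the equation-free setting. Everything else is a mechanical merge of the two master theorems, once $(\Rin/\emptyset)^*$ and $(\Rin/\emptyset)^!$ have been rewritten as $\Rin^*$ and $\Rin^!$.
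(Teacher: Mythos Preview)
Your proposal is correct and follows essentially the same approach as the paper: combine Theorem~\ref{thm-correction} (for the $\supset$ inclusions) with Theorem~\ref{thm-precision} (for the $\subset$ inclusions), using that $E=\emptyset$ collapses $\Rin/E$ to $\Rin$ and that the finiteness hypothesis supplies the separation-of-classes assumption needed for the precision theorem. The paper's own proof is a two-line summary of exactly this argument.
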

\begin{proof}
This is a consequence of Theorems~\ref{thm-correction} and of
Theorem~\ref{thm-precision} where $E=\emptyset$ and $\A_0$ trivially separates the classes
of $E$. 
\end{proof}

Note that if $\Lang(\A_0)$ is not finite, it is possible to separate classes of
$E$ but at the price of a complete transformation of $R$ and $\A_0$. The new
initial automaton recognizes a single constant term, say $S$. This new automaton
trivially separates the classes of $E$, for any $E$ (empty or not). Besides, we
add to $R$ the necessary rewrite rules (grammar rules in fact) to generate
$\Lang(\A_0)$ by rewriting $S$ (the axiom of the
grammar). See~\cite{Genet-rep14} for details and~\cite{Genet-JLAMP15} to see
how this has been used to show that standard completion {\em exactly} computes 
reachable terms when $E=\emptyset$.

\section{Improving accuracy of static analysis of functional programs}
\label{sect-fun}
We just showed the accuracy of the approximation on a theoretical side. Now we
investigate the accuracy on a practical point of view.
There is a recent and renewed interest for Data flow analysis of higher-order
functional
programs~\cite{KobayashiTU-POPL10,OngR-POPL11,KochemsO-RTA11,KobayashiI-ESOP13}
that      
was initiated by~\cite{JonesA-TCS07}. 
None of those techniques is strategy-aware: on
Example~\ref{ex:completionInnermost}, they all consider the term
$c(a(s(0)),f(n))$ as reachable with innermost strategy, though it is not.
Example~\ref{ex:completionInnermost} also shows that this is not the case with
innermost completion.  

We made an alpha implementation of innermost equational completion. This new version of
\timbuk, named \timbuk STRAT, is available at~\cite{timbuk} along with
several examples. On those examples, innermost equational completion runs within
milliseconds. Sets of approximation equations, when needed, are systematically
defined using~\cite{Genet-WRLA14,Genet-JLAMP15}. More details about this step
can be found in Section~\ref{equation}. They are used to guarantee termination of 
completion. 
Now, we show that accuracy of innermost equational completion
can benefit to static analysis of functional 
programs. As soon as one of the analyzed functions is not terminating
(intentionally or because of a bug), not taking the evaluation strategy into
account may result into an imprecise analysis. Consider the following OCaml
program:

\medskip
{\footnotesize
\begin{lstlisting}[numbers=none]
let hd= function x::_ -> x;;          let tl= function _::r -> r;;
let rec delete e ls=
 if (ls=[]) then [] else 
    if (hd ls=e) then (tl ls) else (hd ls)::(delete e ls);;
\end{lstlisting}
} 

\medskip
\noindent
It is faulty: the recursive call should be \lstinline+(hd l)::(delete e (tl ls))+.
Because of this error, any call \lstinline+(delete e ls)+ will not terminate if \lstinline+ls+ is
not empty and \lstinline+(hd ls)+ is not \lstinline+e+. We can encode the above
program into a TRS.
Furthermore, if we consider only two elements in lists
({\tt a} and {\tt b}), the language $L$ of calls to \lstinline+(delete a l)+,
where \lstinline+l+ is any non empty list of {\tt b}, is regular. 
Thus, 
standard completion can compute an automaton 
over-approximating $R^*(L)$. Besides, the automaton $\airr(R)$ recognising
normal forms of $R$ can be computed since $R$ is left-linear.
Then, by computing the intersection between the two
automata, we obtain the automaton recognising an over-approximation of the set
of reachable terms in normal form\footnote{Computing $\airr(R)$ and the
  intersection can be done using \timbuk.}, {\em i.e.} normalized terms. Assume that we have an abstract
OCaml interpreter performing completion and intersection with $\airr(R)$:

\medskip

\noindent
{\small
\textcolor{OliveGreen}{\tt \# delete a [b+];;}\\
\textcolor{OliveGreen}{\tt -:abst list= empty}
}

\medskip

\noindent
The \lstinline+empty+ result reflects the fact that the \lstinline+delete+
function does not compute any \emph{result}, \ie it is not terminating on all the given input
values. Thus the language of results is empty. Now, assume that we consider
calls like \lstinline+hd(delete e l)+. In 
this case, any analysis technique ignoring the call-by-value
evaluation strategy of OCaml will give imprecise results. This is due to the
fact that, for any non empty list \lstinline+l+ starting with an element
\lstinline+e'+ different from \lstinline+e+, \lstinline+(delete e l)+ rewrites
into \lstinline+e'::(delete e l)+, and so on. Thus \lstinline+hd(delete e l)+,
can be rewritten into \lstinline+e'+ with an outermost rewrite strategy. Thus, if we
use an abstract OCaml interpreter built on the standard completion, we will have
the following interaction:

\medskip

\noindent
{\small
\textcolor{OliveGreen}{\tt \# hd (delete a [b+]);;}\\
\textcolor{OliveGreen}{\tt -:abst list= b}
}

\medskip

\noindent
The result provided by the abstract interpreter is imprecise. It fails to reveal
the bug in the delete function since it totally hides the fact that the delete
function does not terminate! Using innermost equational completion and \timbuk STRAT on the
same example gives the expected result which is\footnote{see files 
  \texttt{nonTerm1} and \texttt{nonTerm1b} in the \timbuk STRAT distribution at~\cite{timbuk}.}: 

\medskip

\noindent
{\small
\textcolor{OliveGreen}{\tt \# hd (delete a [b+]);;}\\
\textcolor{OliveGreen}{\tt -:abst list= empty}
}

\medskip

\noindent 
We can perform the same kind of analysis for the program {\tt sum} given in the
introduction. This program does not terminate (for any input) with call-by-value, 
but it terminates with call-by-name strategy. 
Again, strategy-unaware methods cannot show this: there are
(outermost) reachable terms that are in normal form: the integer results
obtained with a call-by-need or lazy evaluation. 
An abstract OCaml interpreter unaware of strategies would say: 

\medskip

\noindent
{\small
\textcolor{OliveGreen}{\tt \# sum s*(0);;}\\
\textcolor{OliveGreen}{\tt -:abst nat= s*(0)}
}

\medskip

\noindent
where a more precise and satisfactory answer would be {\small
  \textcolor{OliveGreen}{\tt -:abst nat= empty}}. Using \timbuk STRAT, we can
get this answer (see Section~\ref{equation} for  the \timbuk STRAT input file). To over-approximate the set of results of the 
function {\tt sum} for all natural numbers {\tt i}, we can start innermost
equational completion with the initial regular language $\{sum(s^*(0))\}$. Let $\A= ( \F, \Q, \Q_f,
\Delta )$ with $\Q_f=\{q_1\}$ and $\Delta=\{ 0 \move q_0, s(q_0) \move q_0,
sum(q_0) \move q_1 \}$ be an automaton recognising this
language. Innermost equational completion with \timbuk STRAT terminates on an
automaton where the only product state labeled by
$q_1$ is $\paire(q_1,\pred)$. This means that terms of the form
$sum(s^*(0))$ have no innermost normal form, {\em i.e.} the function
{\tt sum} is {\em not terminating} with call-by-value for all input values. On
all those examples, we used initial automata $\A$ that were not separating
equivalences classes of $E$. On those particular examples the precision of innermost
completion was already sufficient for our verification purpose. 
Yet, if accuracy is not sufficient, it is possible to refine
$\A$ into an equivalent automaton separating equivalences classes of
$E$, see~\cite{Genet-rep14}. When necessary, this permits to exploit the full power
of the precision Theorem~\ref{thm-precision} and get 
an approximation of innermost reachable terms, as precise as possible, w.r.t. $E$.
The last example deals with higher-order functions. The following OCaml
defines the {\tt even} and {\tt odd} predicates, the {\tt length}
function  as well as the {\tt map} higher order function.

\begin{tabular}{l|l}
\begin{minipage}{7cm}
{\small
\begin{lstlisting}[escapechar=@,numbers=none]
let rec map f ls=
  match ls with
      [] -> []
    | e::r -> (f e)::(map f r);;

let rec length ls=
  match ls with
      [] -> 0
    | e::r -> 1+(length r);;
\end{lstlisting}
}
\end{minipage}   &
\begin{minipage}{5cm}
{\small
\begin{lstlisting}[escapechar=@,numbers=none]
let rec even x= 
  match x with
      0 -> true
    | _ -> odd (x - 1)
and odd x=
  match x with 
      0 -> false
    | _ -> even (x + 1);;
\end{lstlisting}
}
\end{minipage}
\end{tabular}

\noindent
The problem with the above program is that the \lstinline+odd+ predicate is not terminating on natural
numbers greater than 0. Thus, \lstinline+even+ is not terminating on natural numbers
greater than 1. Thus, calling \lstinline+(map even ls)+ on any list
\lstinline+ls+ containing at least one natural number greater than 1, is not
terminating. However, if we use an abstract OCaml interpreter that do not take
call-by value strategy, we will have the following behavior.

\medskip
\noindent
{\small
\textcolor{OliveGreen}{\tt \# map even [s(s(s*(0)))]} \\
\textcolor{OliveGreen}{\tt -:abst list= empty}\\
\textcolor{OliveGreen}{\tt \# length (map even [s(s(s*(0)))])} \\
\textcolor{OliveGreen}{\tt -:abst list= s(0)}\\
}

\noindent
This is due to the fact that \lstinline+map+ builds a list of the form
\lstinline+(even s(s(s*(0))))::[]+ whose length cannot be computed if we use
call-by-value strategy, but whose length is 1 with call-by-name strategy. Again,
such an interpretation yields very imprecise results if it does not take evaluation
strategies into account. We can deal with this example using innermost
completion. First, we have to encode higher-order functions into first order
terms. This can be done using the encoding of~\cite{Reynolds-IP69}: defined
symbols become constants, constructor symbols 
remain the same, and an additional {\em application} operator \verb|app| of
arity 2 is introduced. The above OCaml program is thus encoded in the following \timbuk\
TRS.
{\small
\begin{alltt}
\Ops map:0 length:0 even:0 odd:0 s:1 o:0 nil:0 cons:2 app:2 true:0 false:0
\Vars F X Y Z Xs

\TRS R1
    app(app(map,F),nil) -> nil
    app(app(map,F), cons(X,Y)) -> cons(app(F,X),app(app(map,F), Y))

    app(even, o) -> true 
    app(even, s(X)) -> app(odd,X)
    app(odd, o) -> false
    app(odd, s(X)) -> app(even,s(s(X)))

    app(length, nil) -> o
    app(length, cons(X,Y)) -> s(app(length,Y))
\end{alltt}
}

\noindent
Using the following tree automaton, we can define the language of terms of the
form {\tt app(length,app(app(map,even),ls))} where {\tt ls} is any list
containing at least one natural number greater than 1.

{\small
\begin{alltt}
\Automaton A0
\States qlen qmap qeven qf qmapeven qmapeven2 qnil ql ql2 ql3 q0 q1 qn
\FinalStates qf
\Transitions
 length -> qlen            map -> qmap                   even -> qeven   
 app(qlen,qmapeven2)->qf   app(qmapeven,ql)->qmapeven2   app(qmap,qeven)->qmapeven
 cons(q0,ql2) -> ql        cons(q1,ql2) -> ql            cons(qn,ql2) -> ql
 cons(q0,ql2) -> ql2       cons(q1,ql2) -> ql2           cons(qn,ql2) -> ql2
 cons(qn,ql3) -> ql2       cons(q0,ql3) -> ql3           cons(q1,ql3) -> ql3
 cons(qn,ql3) -> ql3       cons(q0,qnil) -> ql3          cons(q1,qnil) -> ql3
 cons(qn,qnil) -> ql3      nil -> qnil                   o -> q0
 s(q0) -> q1               s(q1) -> qn                   s(qn) -> qn
\end{alltt}
}
\noindent
Innermost completion of this automaton yields a tree automaton whose set of
irreducible (constructor) terms is empty, meaning that the set of possible
abstract results for this function call, using call-by-value, is empty.

On all the above examples, all aforementioned
techniques~\cite{OngR-POPL11,KochemsO-RTA11,
JonesA-TCS07}, as well as all standard completion techniques~\cite{TakaiKS-RTA00,
GenetR-JSC10,Lisitsa-RTA12}, give a more coarse
approximation and are unable to prove strong non-termination with
call-by-value. 
Indeed, those
techniques approximate all reachable terms, independently of the rewriting
strategy. Their approximation will, in particular, contain the integer results
that are reachable by the call-by-need evaluation strategy. 

\section{Inferring sets of equations}
\label{equation}
Sets of equations are inferred using the technique of~\cite{Genet-JLAMP15}.
We explain the application of this technique on the {\tt sum} example 
used in the introduction. We recall the OCaml program and give its associated TRS.

\noindent
{\small
\begin{tabular}{lll}
\begin{minipage}{.43\linewidth}
\begin{lstlisting}
let rec sumList x y= 
  (x+y)::(sumList (x+y) (y+1));;
\end{lstlisting}
\end{minipage} & ~ &
\begin{minipage}{.4\linewidth}
\begin{lstlisting}
let rec nth i (x::l)= 
  if i<=0 then x else nth (i-1) l;;
let sum x= nth x (sumList 0 0);;
\end{lstlisting}
\end{minipage}
\end{tabular}}

\noindent
{\small
$\begin{array}{lll}
(1)0 + X \rw X && (4)nth(0,cons(X,Y)) \rw X \\
(2)s(X) + Y \rw s(X+Y) && (5)nth(s(X),cons(Y,Z)) \rw nth(X,Z) \\
(3)sumList(X,Y) \rw cons(X+Y, sumList(X+Y,s(Y))) && (6)sum(X) \rw
nth(X,sumList(0,0)) 
\end{array}$}

\medskip
\noindent
TRS under consideration in~\cite{Genet-JLAMP15} are called "functional
TRS". They are typed TRS encoding typed functional programs of the ML family. For
sake of simplicity we omit the type information here.
The set of symbols $\F$ of the TRS is separated into two disjoint sets:
the set $\D$ (defined symbols) appearing on the top of a left-hand side of
a rule, and constructor symbols $\C=\F\setminus\D$.
On the {\tt sum} TRS, $\D=\{+,sumList,nth,sum\}$ and $C=\{0,s,cons,nil\}$.
The set of equations to use is $E=E_R \cup E^r \cup E^c$ where $E_R= \{\ell = r
\sep \ell \rw r \in R\}$, $E^r=\{f(x_1, \ldots, x_n)=f(x_1, \ldots, x_n) \sep
f \in \F, \mbox{ and arity of $f$ is $n$}\}$ and $E^c$ is a set of equations of
the form $u=u|_p$ where $u$ is a term built on $\C$ and $\X$. The sets $E_R$ and
$E^r$ are fixed by $R$, but $E^c$ can be adapted so as to tune the precision of the
approximation. Nevertheless, to have a terminating completion, $E^c$ has to fulfill
the following property: the set of equivalence classes of (well-typed) terms in $\TC$
w.r.t. $=_{E^c}$ has to be finite. On the {\tt sum} example a possible choice
for $E^c$ is $E^c=\{cons(x, cons(y, z)) = cons(x, z),$ $s(s(x))=s(x)\}$, such that
the set of equivalence classes of (well-typed) terms in $\TC$ is finite, {\em
  i.e.} it consists only of five equivalence classes: $0$, $s(0)$, $nil$, $cons(0,nil)$,
$cons(s(0),nil)$. Here is the complete \timbuk\ specification for this example.

{\small
\begin{alltt}
\Ops sum:1 nth:2 sumList:1 cons:2 nil:0 zero:0 s:1 add:2
\Vars X Y Z U
\TRS R1
add(zero,X) -> X                               nth(zero,cons(X,Y)) -> X
add(s(X),Y) -> s(add(X,Y))                     nth(s(X),cons(Y,Z)) -> nth(X,Z)
sumList(X) -> cons(X, sumList(add(X,s(X))))    sum(X) -> nth(X,sumList(X))

\Automaton A0
\States qnat qsum
\FinalStates qsum
\Transitions zero->qnat  s(qnat)->qnat  sum(qnat)->qsum

\Equations Simpl
\Rules
%Ec
cons(X, cons(Y, Z)) = cons(X, Z)              s(s(X))=s(X)

%E_R
add(zero,X)=X                                 nth(zero,cons(X,Y))=X
add(s(X),Y)=s(add(X,Y))                       nth(s(X),cons(Y,Z))=nth(X,Z)
sumList(X)=cons(X,sumList(add(X,s(X))))       sum(X)=nth(X,sumList(X))

%E^r
zero=zero               s(X)=s(X)             nth(X,Y)=nth(X,Y)
add(X,Y)=add(X,Y)       sum(X)=sum(X)         sumList(X)=sumList(X)
nil=nil                 cons(X,Y)=cons(X,Y)
\end{alltt}
}

\noindent
Using this specification \timbuk{} finds reachable irreducible terms, but \timbuk STRAT
succeeds in showing that the set of irreducible innermost reachable terms is
empty.

\section{Extension to leftmost and rightmost innermost strategy}
\label{right}
Real programming languages generally impose an additional strategy on the order
on which arguments at the same level are reduced, {\em e.g.} leftmost or
rightmost. For instance, the evaluation strategy of OCaml is rightmost
innermost. Not taking into account this additional requirement in the
reachability analysis may, again, lead to imprecise analysis in some particular
cases. This could be shown on an OCaml program but for sake of brevity we use a
simple TRS
\begin{exa} Let $R=\{a \rw b, c \rw c\}$. 
Assume that we use the rightmost innermost strategy. From the term
$f(a,c)$ it is not possible to reach the term $f(b,c)$ because rewriting $c$
does not terminate ({\em e.g.} a function call is looping), and rewriting $a$
will not be considered until $c$ is rewritten to a normal form (which is not
possible). 
\end{exa}

The innermost completion technique described above does not
take into account the order of evaluation of redexes at the same level. Thus,
$f(b,c)$ will be considered as reachable because $f(b,c)\in\Rin(\{f(a,c)\})$.
However, this can be improved. The innermost completion technique can be adapted
to tackle this problem, and correctness and precision theorems of
Sections~\ref{sect-th} and~\ref{sect-precision} can be lifted to take order of
evaluation into account. In the following, we instantiate this on the rightmost
innermost strategy but any other order (leftmost or even an order specific to each functional symbol) could be used. We denote by $\Rrin(u)$ the set of
terms reachable by one step of rightmost innermost rewriting of terms of $u$. To closely approximate $\Rrin$, the idea is simply to change the
way supplementary transitions are added by completion. Now, when solving a critical
pair $(\ell\to r, \sigma, q)$, for each transition   
\[f(\paire(q_1,p_1),\ldots, \paire(q_{i-1},p_{i-1}), \paire(q,p_{red}),
\paire(q_{i+1},p_{i+1}),\ldots,\paire(q_n,p_n)) \move 
\paire(q'',p'')\]
we add a supplementary transition \[f(\paire(q_1,p_1),\ldots, \paire(q_{i-1},p_{i-1}), \paire(q,p_{r\sigma}), \paire(q_{i+1},p'_{i+1}),\ldots,\paire(q_n,p'_n)) \move \paire(q'',p''')\]
only if there exists states $\paire(q_{i+1},p'_{i+1}),\ldots,\paire(q_n,p'_n)$
such that $p_j \neq \pred$ for $i+1 \leq j \leq n$\footnote{By choosing different
constraints on the $i's$ such that $p_i\neq \pred$, we can easily adapt this to
get leftmost innermost. And by associating this constraint to the symbol $f$ we
can cover the case of symbol specific normalizing strategy like in
context-sensitive rewriting.}
\begin{exa}[Continued]
On the above example, this would lead to the following completion. Assume that
the initial automaton contains transitions $a \move \paire(q_a,\pred)$, $c \move
\paire(q_c,\pred)$ and $f(\paire(q_a,\pred),\paire(q_c,\pred)) \move
\paire(q_f,\pred)$. Innermost completion would add the transitions $b \move
\paire(q_b,p_b)$ and $\paire(q_b,p_b) \movecol R \paire(q_a,p_b)$.
However, since the only transition having $\paire(q_a,\pred)$ as an argument is 
$f(\paire(q_a,\pred),\paire(q_c,\pred)) \move \paire(q_f,\pred)$ and since 
the only state associated to $q_c$ is $\pred$, we do not add supplementary
transitions. Completing the critical pair on $c \move
\paire(q_c,\pred)$ will not change the situation since it only yields the
transition $\paire(q_c,\pred) \movecol R \paire(q_c,\pred)$.
\end{exa}

\noindent
Now we show how to lift the correctness and precision theorems to the case of
rightmost innermost strategy. Since the only difference between the two
algorithms lies only in the way supplementary transitions are added, the proofs
are essentially the same. The only difference are in the lemmas where supplementary
transitions are considered. Thus, we focus on the differences w.r.t. the (general) innermost
case.

First, we define the notion of an automaton correct w.r.t. $\Rrin$. It is enough to
replace $\Rin$ by $\Rrin$ in Definition~\ref{defn-correct}.
Then we can prove the following lemma, equivalent to Lemma~\ref{lemme-correct}.

\begin{lem}\label{lemme-correct-right}\label{lemme-eqcomp-preserve-right}
 Any automaton produced by innermost completion starting from some $\autoInit\times\airr(R)$ is correct w.r.t. $\Rrin$.
\end{lem}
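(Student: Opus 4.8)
The plan is to reuse, almost verbatim, the skeleton of the proof of Lemma~\ref{lemme-correct}, since the notion of innermost critical pair (Definition~\ref{def-cpn-eff}) and the exact-completion step are unchanged by passing to the rightmost variant: the only difference is in the way supplementary transitions are added. First I would check that consistency with $\airr(R)$ still holds. The rightmost completion adds a \emph{subset} of the supplementary transitions of the general algorithm, and each such transition still chooses its $\airr(R)$-component $p'''$ according to $\airr(R)$; hence the argument of Lemma~\ref{lemme-projdroite-irr} carries over unchanged, and any automaton $\automate$ produced by rightmost innermost completion from $\autoInit\times\airr(R)$ is consistent with $\airr(R)$.

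Next I would run the argument of Lemma~\ref{lemme-correct} with $\Rrin$ in place of $\Rin$. Fix a state $\paire(q,\pred)$, a term $u\in\langrecs\automate{\paire(q,\pred)}$ and some $v\in\Rrin(u)$. By definition of rightmost innermost rewriting, $u=C[\ell\mu]$ and $v=C[r\mu]$ where $\ell\mu$ is an innermost redex \emph{and}, in addition, every subterm of $u$ occurring strictly to the right of the redex position, at each level along the path from the hole of $C$ to its root, is an $R$-normal form. The innermost part yields, exactly as before, a linear matching $\sigma$ with $\ell\mu\moveet\automate\ell\sigma\moveet\automate\paire(q_0,\pred)$ and $r\mu\moveet\automate r\sigma$. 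If no $p_0$ satisfies $r\sigma\moveet\automate\paire(q_0,p_0)$, then $(\ell\to r,\sigma,\paire(q_0,\pred))$ is an innermost critical pair of $\automate$ by the \emph{same} verification of the three conditions of Definition~\ref{def-cpn-eff} (none of which mentions the order of siblings), and the second clause of Definition~\ref{defn-correct} holds.

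The only genuinely new work is the ``already recognised'' case, discharged in the general proof by Lemma~\ref{lemme-composition-contexte}. I would prove a rightmost analog of that lemma, stated for the restricted class of contexts $C$ that can arise from a rightmost innermost rewrite: those in which, at each level, the siblings to the right of the hole are recognised into states whose $\airr(R)$-component is different from $\pred$. The induction on the structure of $C$ is identical to the original, except at the step where one must produce the supplementary transition that replaces $\paire(q,\pred)$ by $\paire(q,p_{r\sigma})$ in some active position $i$, thereby lifting recognition into the completed automaton $\bautomate$. In the rightmost algorithm this transition is added only under the guard that the right siblings admit states $\paire(q_{i+1},p'_{i+1}),\dots,\paire(q_n,p'_n)$ with all $p'_j\neq\pred$. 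I would discharge this guard by noting that, for a rightmost innermost rewrite, those right-sibling subterms of $u$ are themselves normal forms, hence by consistency with $\airr(R)$ they are recognised into states whose $\airr(R)$-component is different from $\pred$; so the guard is met and the needed supplementary transition was indeed created during completion.

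The main obstacle is precisely this matching between the position-theoretic ``rightmost innermost'' condition on the rewrite and the local guard on supplementary transitions. I expect the delicate point to be bookkeeping: being to the right of the redex is a global, position-based notion on $u$, whereas the guard is a local condition on a single transition. I would therefore thread through the induction the invariant that, at the current level of $C$, every right sibling of the active child is recognised into a state whose $\airr(R)$-component is different from $\pred$, and justify its preservation using consistency with $\airr(R)$ together with Theorem~\ref{thm-airr} (the terms that $\airr(R)$ sends to states other than $\pred$ are exactly the normal forms). Once this adapted composition lemma is in hand, the recognised case, and hence correctness w.r.t.\ $\Rrin$, follows exactly as in Lemma~\ref{lemme-correct}.
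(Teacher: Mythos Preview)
Your proposal is correct and follows essentially the same approach as the paper: reuse the skeleton of Lemma~\ref{lemme-correct}, note that consistency with $\airr(R)$ is preserved, and handle the ``already recognised'' case by showing that the rightmost condition on the rewrite guarantees the guard on supplementary transitions is met, via consistency with $\airr(R)$. The paper's proof is terser---it peels off only the innermost layer of $C$ explicitly and then says ``the remainder of the proof can be carried out in the same way''---whereas you spell out the induction on the height of $C$ and the invariant that right siblings are recognised into non-$\pred$ states; but the underlying argument is identical.
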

\begin{proof}
  The proof is similar to the general innermost case, except for what follows.
  Let $\paire(q,\pred)$ be a state of $\automate$ obtained by completion of
  $\autoInit\times\airr(R)$. Let $u\in\langrecs\automate{\paire(q,\pred)}$ and
  $v\in \Rrin(u)$.  By definition of rightmost innermost rewriting, there is a
  rule $\ell\to r$ of $R$, a substitution
  $\mu:\ensvar\to\langtermesclos\ensfonc$ and a context $C$ such that
  $u=C[\ell\mu]$, $v=C[r\mu]$ and each strict subterm of $\ell\mu$ is a normal
  form. If $C[\,]$ is not the empty context, using the rightmost hypothesis, we
  also know that there exists a context $C'[\,]$, a symbol $f$ of arity $n$ and
  terms $t_i$, $1\leq i \leq n$ such that $C[\ell\mu] =
  C'[f(t_1,\ldots,t_{i-1},\ell\mu,t_{i+1},\ldots,t_n)]$ and $t_{i+1},\ldots,t_n$
  are irreducible. Since,
  $C'[f(t_1,\ldots,t_{i-1},\ell\mu,t_{i+1},\ldots,t_n)]$ is recognized by $\A$
  there exists a transition $f(\paire(q_1,p_1), \ldots, \paire(q_n,p_n)) \move
  \paire(q',p')$ in $\A$. Then, since $t_{i-1},\ldots,t_n$ are irreducible and $\A$ is
  consistent with $\airr(R)$ (by Lemma~\ref{lemme-projdroite-irr}), we know that
  $p_{i+1},\ldots, p_n$ are all different from $\pred$. Thus there exists a
  supplementary transition
\[f(\paire(q_1,p_1), \ldots,
  \paire(q_{i-1},p_{i-1}),\paire(q_{\ell\sigma},p_{r\sigma}),
  \paire(q_{i+1},p'_{i+1}),\ldots,\paire(q_n,p'_n)) \move \paire(q',p'')\] 
in
  $\A$. The remainder of the proof can be carried out in the same way because
  this transition ensures that 
  $v=C'[f(t_1,\ldots,t_{i-1},r\sigma,t_{i+1},\ldots,t_n)] \moveet\A \paire(q,p''')$. 
\end{proof}

Then, using this lemma, the proof of the correctness theorem is an easy
adaptation of the initial proof.

\begin{thm}[Correctness]
\label{thm-correction-right}
 Assuming $R$ is left-linear, the innermost equational completion procedure,
 adapted for rightmost strategy, produces a correct result whenever it
 terminates and produces some fixpoint $\afixrin$ such that:
 \begin{equation}
  \langrec{\projgauche\afixrin}\supset \Rrin^*(\langrec{\autoInit}).
 \end{equation}

 \begin{equation}
  \langrec{\afixrin}\supset \Rrin^!(\langrec{\autoInit}).
 \end{equation}
\end{thm}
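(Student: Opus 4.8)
The plan is to replay, essentially verbatim, the proof of the general innermost Correctness theorem (Theorem~\ref{thm-correction}), substituting $\Rrin$ for $\Rin$ throughout and invoking the rightmost-specific Lemma~\ref{lemme-correct-right} wherever the original proof appealed to Lemma~\ref{lemme-eqcomp-preserve}. First I would check that $\afixrin$ is still consistent with $\airr(R)$: Lemma~\ref{lemme-projdroite-irr} transfers because the only change in the rightmost algorithm is \emph{which} supplementary transitions are added, and the transitions that do get added still choose their right component $p'''$ so that $f(\dots,p_{r\sigma},\dots)\moveun{\airr(R)}p'''$, exactly as consistency requires. Combining this with Lemma~\ref{lemme-correct-right} and the $\Rrin$-analogue of Lemma~\ref{lem-simp-correct} (whose argument uses only $\moveet\A\subseteq\moveet{\A'}$ and is therefore insensitive to the evaluation order), I obtain that $\afixrin$ is correct w.r.t.\ $\Rrin$.

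Next I would exploit that $\afixrin$ is a fixpoint. The notion of critical pair (Definition~\ref{def-cpn-eff}) is unchanged, so a fixpoint of the rightmost procedure still contains no critical pair; hence the second disjunct of the ($\Rrin$-)correctness definition cannot occur, and for every state $\paire(q,\pred)$, every $u\in\langrecs\afixrin{\paire(q,\pred)}$ and every $v\in\Rrin(u)$ there is a state $p'$ with $v\in\langrecs\afixrin{\paire(q,p')}$. States whose second component differs from $\pred$ recognise only irreducible terms, which are $\Rrin$-terminal, so iterating this one-step capture shows that $\langrecs\afixrin{\paire(q,\cdot)}$ already contains the whole of $\Rrin^*(u)$.

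Finally I would derive the two inclusions exactly as in Theorem~\ref{thm-correction}. For the normalized case, given $s\in\langrec\autoInit$ with $s\to_{\Rrin}^* t$ and $t$ irreducible, $s$ is recognised into some $\paire(q_f,\pred)$ with $q_f$ final in $\autoInit$; by the previous step $t\in\langrecs\afixrin{\paire(q_f,p')}$, and irreducibility of $t$ forces $p'\neq\pred$, so $p'$ is final in $\airr(R)$ and the pair state $\paire(q_f,p')$ is final in $\afixrin$, giving $t\in\langrec\afixrin$. For the reachable case I would run the same argument and then lift it to $\projgauche\afixrin$ via Lemma~\ref{lem-proj-eq}, using completeness of the $\airr(R)$ component so that $\langrecs{\projgauche\afixrin}{q_f}=\bigcup_p\langrecs\afixrin{\paire(q_f,p)}$ contains both $s$ and $t$.

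I do not expect a genuine obstacle here: all the substance of the rightmost adaptation is already isolated in Lemma~\ref{lemme-correct-right}. The only point worth stating explicitly is that the two infrastructure lemmas — consistency (Lemma~\ref{lemme-projdroite-irr}) and simplification-preserves-correction (Lemma~\ref{lem-simp-correct}) — survive the modified supplementary-transition step; they do, for the reasons noted above, since neither depends on the order in which sibling redexes are contracted.
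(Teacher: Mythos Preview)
Your proposal is correct and matches the paper's own proof, which is even terser: it simply says the argument is similar to the general innermost case, replacing $\Rin$-correctness by $\Rrin$-correctness and Lemma~\ref{lemme-correct} by Lemma~\ref{lemme-correct-right}. Your explicit checks that Lemmas~\ref{lemme-projdroite-irr} and~\ref{lem-simp-correct} carry over unchanged are sound and are precisely the points the paper leaves implicit.
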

\begin{proof}
Similar to the general innermost case, except that the $\Rin$ correctness is
replaced by the $\Rrin$ correctness and Lemma~\ref{lemme-correct} is
replaced by Lemma~\ref{lemme-correct-right}.
\end{proof}

\noindent
For the precision theorem, we use the notion of $\Rrin/E$-coherence which is
defined as $\Rin/E$-coherence where $\Rrin$ replaces $\Rin$.
The precision theorem can seamlessly be extended to the rightmost innermost
case. This is due to the fact that, for a given state $\paire(q,p)$ on which a
completion step occurs, there is {\em no difference} between general and rightmost
innermost. The only difference is observed when building contexts above
$\paire(q,p)$ which is the role of supplementary transitions.
Supplementary transitions only appear in the proof of Lemma~\ref{lem-supop}. This
lemma has to be transformed and proved again. The statement and proof of all the
other lemmas of Section~\ref{sect-precision} can be used as they are, where
$\Rin/E$-coherence is replaced by $\Rrin/E$-coherence.  
Now let us focus on the unique lemma which has to be transformed.

\begin{lem}[Supplementary transitions preserve $\Rrin/E$-coherence]
\label{lem-supop-right}
Let $\A$ be a $\Rrin/E$-coherent automaton that is consistent with $\airr(R)$.
Assume that $\A$ has been completed with the transitions to have $s
\moveet\A \paire(q,p)$ and the necessary supplementary transitions.
Let $C[\,]$ be a context and $\paire(q_c,p_c)$ a
state of $\A$ such that $C[s]\moveet\A \paire(q_c,p_c)$. If 
$\langrecs\A{s}\subset(\Rrin/E)^*\left(\classeauto{\A\sseps} E
  {q}\right)$ then supplementary transitions ensure that 
$\langrecs\A{C[s]}\subset(\Rrin/E)^*\left(\classeauto{\A\sseps} E {q_c}\right)$.
\end{lem}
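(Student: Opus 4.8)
The plan is to prove the statement by induction on the height of the context $C[\,]$, mirroring exactly the proof of Lemma~\ref{lem-supop}. The base case (height $0$, where $C[\,]=\ctxttriv$ and the conclusion is literally the hypothesis $\langrecs\A{s}\subset(\Rrin/E)^*(\classeauto{\A\sseps}E{q})$) and the sub-case of the inductive step in which the relevant transition already belonged to $\A$ before the critical pair was solved are both unchanged, since neither of them appeals to the manner in which supplementary transitions are built: in the latter, one simply invokes the $\Rrin/E$-coherence of the automaton prior to completion. Hence the genuine work lies entirely in the inductive step where a \emph{supplementary} transition is used, and that is where the rightmost side condition must be exploited.

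In that sub-case I would first decompose $C[s]=C'[f(t_1,\dots,t_{i-1},s,t_{i+1},\dots,t_n)]$, with $C'[\,]$ of height $k-1$ and $s$ occurring in the $i$-th argument, and factor the recognition $C[s]\moveet\A\paire(q_c,p_c)$ through a rightmost supplementary transition $f(\paire(q_1,p_1),\dots,\paire(q,p_{r\sigma}),\paire(q_{i+1},p'_{i+1}),\dots,\paire(q_n,p'_n))\move\paire(q'',p''')$ together with $C'[\paire(q'',p''')]\moveet\A\paire(q_c,p_c)$. The crucial point is the defining side condition of this transition: it is added only when the right-sibling states $\paire(q_{i+1},p'_{i+1}),\dots,\paire(q_n,p'_n)$ exist with $p'_j\neq\pred$ for $i+1\le j\le n$. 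Since $\A$ is consistent with $\airr(R)$, Theorem~\ref{thm-airr} then forces every term recognised into $\paire(q_j,p'_j)$ for $i+1\le j\le n$, in particular each $t_j$, to be an $R$-normal form. The companion original transition $f(\paire(q_1,p_1),\dots,\paire(q,\pred),\paire(q_{i+1},p_{i+1}),\dots,\paire(q_n,p_n))\move\paire(q'',p'')$ belongs to the $\Rrin/E$-coherent automaton before completion, hence yields $\langrecs\A{q''}\subset(\Rrin/E)^*(\classeauto{\A\sseps}E{q''})$ at that stage.

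Then I would combine the hypothesis with this normal-form information to conclude $\langrecs\A{f(t_1,\dots,t_n)}\subset(\Rrin/E)^*(\classeauto{\A\sseps}E{q''})$: a term recognised through the supplementary transition differs from one recognised through the original transition only by carrying, at position $i$, a term reachable by $\Rrin/E$ from $\classeauto{\A\sseps}E{q}$, the right siblings $t_{i+1},\dots,t_n$ being normal forms. The rewrite chain witnessing this reachability can therefore be replayed \emph{in situ} inside the $i$-th argument, each of its steps remaining rightmost innermost precisely because everything strictly to its right is irreducible. Applying the induction hypothesis to $C'[\,]$ (of height $k-1$) finally lifts this to $\langrecs\A{C'[f(t_1,\dots,t_n)]}\subset(\Rrin/E)^*(\classeauto{\A\sseps}E{q_c})$, which is the desired conclusion.

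I expect the main obstacle to be the justification that the replayed rewrite chain is genuinely rightmost innermost at every step, and not merely innermost: one must ensure that replacing $\pred$ by $p_{r\sigma}$ at position $i$ while simultaneously fixing the right siblings to irreducible states does not silently leave a redex to the right of the contracted position. This is exactly what the side condition $p'_j\neq\pred$ ($i+1\le j\le n$) on the supplementary transitions rules out, so making this dependence explicit — rather than treating the supplementary transition as a black box as in the general innermost case — is the heart of the adaptation.
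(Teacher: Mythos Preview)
Your proposal is correct and follows essentially the same route as the paper's proof: induction on the height of $C[\,]$, with the only nontrivial sub-case being the one where a supplementary transition is used, handled by exploiting the rightmost side condition $p'_j\neq\pred$ for $i+1\le j\le n$ together with consistency with $\airr(R)$ to ensure the right siblings are irreducible, so that the replayed derivation at position $i$ remains rightmost innermost. Your final paragraph even singles out precisely the point the paper emphasises as the ``small difference'' from Lemma~\ref{lem-supop}.
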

\begin{proof}
  Like in the proof of Lemma~\ref{lem-supop}, we make a proof by induction on
  the height of context $C[\,]$. The only small difference is in the inductive
  case when we know that $C[s] = C'[f(t_1,\ldots,t_{i-1},s,t_{i+1},\ldots,t_n)]$
  and 
  that there exists a supplementary transition
 \[f(\paire(q_1,p_1),\ldots,\paire(q_{i-1},p_{i-1}),\paire(q,p_s),\paire(q_{i+1},p_{i+1}),
  \ldots, \paire(q_n,p_n)) \move \paire(q',p')\] in $\A$ such that
  $C'[\paire(q',p')] \moveet\A \paire(q_c,p_c)$. For this transition to be
  added as a supplementary transition, we know that $p_{i+1},\ldots,p_n$ are all
  different from 
  $\pred$, and that there exists a transition
  \[f(\paire(q_1,p'_1),\ldots,\paire(q_{i-1},p'_{i-1}),\paire(q,p_{red}),\paire(q_{i+1},p'_{i+1}),
  \ldots, \paire(q_n,p'_n)) \move \paire(q',\pred)\] in the initial tree automaton
  which is $\Rrin/E$-coherent. Thus, initially, we trivially have
  $\langrecs\A{q'}\subset(\Rrin/E)^*\left(\classeauto{\A\sseps} E
    {q'}\right)$. Adding the transition where $\paire(q,p_s)$ replaces
  $\paire(q,p_{red})$ and $\paire(q_{i+1},p_{i+1}),\ldots \paire(q_n,p_n)$
  replaces $\paire(q_{i+1},p'_{i+1}),\ldots \paire(q_n,p'_n)$ also preserves
  this property on $q'$ because we know by assumption that
  $\langrecs\A{s}\subset(\Rrin/E)^*\left(\classeauto{\A\sseps} E
    {q}\right)$ and that terms in $\langrecs{\A}{\paire(q_j,p_j)}$ for $i+1 \leq
  j \leq n$ are irreducible (recall that $p_j\neq \pred$, for $i+1 \leq
  j \leq n$).
\end{proof}

With this new lemma, proving that completion steps preserve
$\Rrin/E$-coherence is done in the same way that in the general innermost case.
Finally, the precision theorem can be stated for the rightmost innermost case.
\begin{lem}\label{l1-right}
 Completion of an innermost critical pair, adapted for rightmost strategy
 preserves $\Rrin/E$-coherence. 
\end{lem}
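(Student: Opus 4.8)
The plan is to reproduce, verbatim in structure, the proof of Lemma~\ref{l1}: completion of a single critical pair decomposes into three operations whose preservation of coherence can be established separately, so the result follows by chaining three already-available lemmas. Concretely, completing the critical pair $(\ell\to r,\sigma,\paire(q,\pred))$ consists of (i) normalising $r\sigma$, thereby adding fresh normalised transitions; (ii) adding the closing transition $\paire(q',p_{r\sigma})\movecol R\paire(q,p_{r\sigma})$; and (iii) adding the supplementary transitions that propagate this new $\colour{R}$-step into every embedding context. I would argue that each of the three phases preserves $\Rrin/E$-coherence and then compose them.

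For phases (i) and (ii) I would invoke the transported versions of Lemmas~\ref{lemme-norm-pres} and~\ref{lemme-heureka1}. Since neither normalisation nor the square-closing transition touches supplementary transitions, their statements and proofs carry over word for word with $\Rrin/E$ substituted for $\Rin/E$ --- this is exactly the blanket observation made just before Lemma~\ref{lem-supop-right} that every lemma of Section~\ref{sect-precision} except Lemma~\ref{lem-supop} is reusable as is. Thus normalisation of $r\sigma$ preserves $\Rrin/E$-coherence, and adding $\paire(q',p_{r\sigma})\movecol R\paire(q,p_{r\sigma})$ preserves it while simultaneously establishing the inclusion $\langrecs\A{r\sigma}\subset(\Rrin/E)^*\left(\classeauto{\A\sseps} E q\right)$, which is precisely the hypothesis required to launch the third phase.

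For phase (iii) I would apply Lemma~\ref{lem-supop-right}, the rightmost variant already proved above, with $s=r\sigma$. Its hypothesis is met by the inclusion just obtained, and its conclusion gives $\langrecs\A{C[s]}\subset(\Rrin/E)^*\left(\classeauto{\A\sseps} E {q_c}\right)$ for every embedding context $C[\,]$ with $C[s]\moveet\A\paire(q_c,p_c)$; this is exactly case~(3) of $\Rrin/E$-coherence (Definition~\ref{rinco}) for the states reached through the new supplementary transitions. Chaining the three preservation results then yields the lemma.

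I expect the only point demanding care --- and the reason phase (iii) had to be reproved as Lemma~\ref{lem-supop-right} rather than reused --- to be the rightmost guard: a supplementary transition is created only when the right siblings of the rewritten argument carry second components distinct from $\pred$, \ie recognise irreducible terms. This guard is what makes the lifted derivation correspond to a genuine \emph{rightmost} innermost step rather than an arbitrary innermost one. Fortunately that verification lives entirely inside Lemma~\ref{lem-supop-right}, so at the level of Lemma~\ref{l1-right} the work reduces to assembling the three pieces, exactly as in the general innermost case.
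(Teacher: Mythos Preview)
Your proposal is correct and follows exactly the paper's own approach: the paper's proof simply states that one combines Lemmas~\ref{lemme-norm-pres} and~\ref{lemme-heureka1} (transported verbatim to $\Rrin/E$) with Lemma~\ref{lem-supop-right} in place of Lemma~\ref{lem-supop}. Your three-phase decomposition and the observation that only the supplementary-transition lemma needs reproving are precisely what the paper does.
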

\begin{proof}
  The proof still combines Lemmas~\ref{lemme-norm-pres}, \ref{lemme-heureka1}
  and uses Lemma~\ref{lem-supop-right} instead of Lemma~\ref{lem-supop}.
\end{proof}

\begin{thm}[Precision]\label{thm-precision-right}
 Let $E$ be a set of equations. Let $\A_0=\autoInit\times\airr(R)$, where
 $\autoInit$ has designated final states. We prune $\A_0$ of its non-accessible
 states. Suppose $\A_0$ separates the classes of $E$. Let $R$ be any left-linear
 TRS. Let $\A_i$ be obtained from $\A_0$ after some steps of innermost
 equational completion adapted for rightmost strategy. Then
  \begin{equation}
  \langrec{\projgauche{\A_i}}\subset (\Rrin/E)^*(\langrec{\autoInit})). 
 \end{equation}

  \begin{equation}
  \langrec{\A_i}\subset (\Rrin/E)^!(\langrec{\autoInit})).
 \end{equation}
\end{thm}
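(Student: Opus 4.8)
The plan is to run the proof of Theorem~\ref{thm-precision} essentially verbatim, replacing $\Rin/E$-coherence by $\Rrin/E$-coherence throughout and citing Lemma~\ref{l1-right} in place of Lemma~\ref{l1}. First I would establish the base case, namely that $\A_0=\autoInit\times\airr(R)$ is $\Rrin/E$-coherent. The argument is identical to the general innermost case and independent of the order of evaluation: since neither $\autoInit$ nor $\airr(R)$ has epsilon transitions we have $\A_0\sseps=\A_0$, so the hypothesis that $\A_0$ separates the classes of $E$ gives condition~(1) of Definition~\ref{rinco}; pruning makes $\A_0$ accessible, giving condition~(2); and condition~(3) holds trivially because every term recognised by a state $\paire(q,p)$ lies in a fixed class $\classeeq E{s}$, which is a special case of lying in $(\Rrin/E)^*(\classeauto{\A_0\sseps}E{q})$.

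Next I would propagate the invariant. Using Lemma~\ref{l1-right}, which states that completion of a rightmost innermost critical pair preserves $\Rrin/E$-coherence, together with the $\Rrin/E$-version of Theorem~\ref{l2} (equational simplification preserves coherence, whose proof rests only on strategy-independent lemmas and so carries over unchanged), an induction on the number of completion steps yields that every automaton $\A_i$ obtained from $\A_0$ is $\Rrin/E$-coherent.

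Finally I would extract the two inclusions. Condition~(3) of $\Rrin/E$-coherence applied to $\A_i$ gives, for each state $\paire(q_f,p)$, that $\langrecs{\A_i}{\paire(q_f,p)}\subset(\Rrin/E)^*(\classeauto{\A_i\sseps}E{q_f})$. Since completion adds no final states, any final $\paire(q_f,p)$ of $\A_i$ is already a state of $\A_0$ with $q_f$ final in $\autoInit$; accessibility of $\A_0$ makes $\classeauto{\A_0\sseps}E{q_f}$ nonempty, and Lemma~\ref{lemme-class-pres} identifies it with $\classeauto{\A_i\sseps}E{q_f}$. Using $\A_0\sseps=\A_0$, collapsing each $E$-class to a representative via $(\Rrin/E)^*([s]_E)=(\Rrin/E)^*(\{s\})$, and applying Lemma~\ref{lem-proj-eq} twice (with $\airr(R)$ complete and $\projgauche{\A_0}=\autoInit$) to sum over the second component, I obtain $\langrec{\projgauche{\A_i}}\subset(\Rrin/E)^*(\langrec{\autoInit})$. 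For the normalized case, the observation that a final state $\paire(q_f,p)$ forces $p\neq\pred$, so that its recognised terms are irreducible, upgrades the inclusion to $\langrec{\A_i}\subset(\Rrin/E)^!(\langrec{\autoInit})$.

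The point I expect to require the most care is not in the theorem itself but in the justification that the transcription is legitimate. All the genuinely rightmost-specific reasoning is already isolated in Lemma~\ref{lem-supop-right}, and hence in Lemma~\ref{l1-right}; the precision theorem then introduces no new difficulty. The one subtlety worth checking is that the strategy-independent lemmas — above all Lemma~\ref{lemme-class-pres}, whose statement concerns only $\A_i\sseps$ — remain valid word for word. Since $\sseps$ erases exactly the $\colour R$-coloured transitions, which are precisely the transitions whose management differs between the general and rightmost algorithms, the behaviour of $\A_i\sseps$ is unaffected by the change of strategy, so this reuse is immediate.
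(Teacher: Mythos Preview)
Your proposal is correct and matches the paper's approach exactly: the paper's proof consists of the single sentence that the argument is the same as for Theorem~\ref{thm-precision} with Lemma~\ref{l1-right} substituted for Lemma~\ref{l1}, and you have spelled out precisely that substitution together with the supporting observations (unchanged Theorem~\ref{l2}, Lemma~\ref{lemme-class-pres}, Lemma~\ref{lem-proj-eq}) that the paper leaves implicit.
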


\begin{proof} 
  The proof is exactly the same as the one of the general innermost case, except that
  Lemma~\ref{l1-right} now replaces Lemma~\ref{l1}.
\end{proof}

\section{Related work} 
\label{related}
No tree automata completion-like
techniques~\cite{Genet-RTA98,TakaiKS-RTA00,
  BoichutCHK-IJFCS09,GenetR-JSC10,Lisitsa-RTA12} take evaluation strategies into
account. They compute over-approximations of {\em all} reachable terms.
Nevertheless, some of them~\cite{TakaiKS-RTA00, BoichutCHK-IJFCS09} can handle
non left-linear rules which are out of reach of the innermost completion
technique presented here. One of the main reason is that, if $R$ is non left-linear, the set $\lirr(R)$ may
not be regular. Thus the automaton $\airr(R)$ which is necessary to initiate the
innermost completion, may not exist. Some ways to overcome this limitation are
proposed in Section~\ref{perspectives}.

Dealing with reachable terms and strategies was first addressed
in~\cite{RetyV-RTA02} in the exact case for innermost and outermost strategies
but only for some restricted classes of TRSs, and also
in~\cite{GodoyJacquemard-WRS08}. As far as we know, the technique we propose is the
first to over-approximate terms reachable by innermost rewriting for {\em any}
left-linear TRSs. For instance, Example~\ref{ex:completionInnermost} and
examples of Section~\ref{sect-fun} and~\ref{equation} are in the scope of innermost equational completion but are outside
of the classes of~\cite{RetyV-RTA02,GodoyJacquemard-WRS08}. For instance, the {\tt
  sum} example is outside of classes of~\cite{RetyV-RTA02,GodoyJacquemard-WRS08} because a right-hand side
of a rule has two nested defined symbols and is not shallow.

Data flow analysis of higher-order functional programs is a long standing
and very active research topic~\cite{OngR-POPL11,KochemsO-RTA11,JonesA-TCS07}.
Used techniques range from tree grammars to specific formalisms: HORS, PMRS or
ILTGs and can deal with higher-order functions. We have shown, on an example,
that defining an analysis taking the call-by-value evaluation strategy was also
possible on higher-order functions. However, this has to be investigated more
deeply. Application of innermost completion to higher order function would
provide nice improvements on static analysis techniques. Indeed, state of the
art techniques like~\cite{OngR-POPL11,KochemsO-RTA11,JonesA-TCS07} do not take
evaluation strategies into account,  and analysis results are thus 
coarse when program execution relies on a specific strategy. 

A recent paper~\cite{CirsteaLM-RTA15} shows how to encode a strategy into a
TRS. An attractive alternative to the work presented here is to use standard
completion on the encoding of innermost strategy on a given TRS $R$. 
We experimented with this technique to see if it was possible to enlarge
the family of strategies that completion can deal with. However, this raises several
problems. First, the transformed TRSs are huge and complex. For instance, the
transformed TRS encoding the {\tt sum} example under innermost strategy consists
of: 63 symbols, 706 variables and 620 rules. Completion of this TRS is far more
costly than innermost completion of the initial {\tt sum} TRS that has only 6
rules. Second, and more critical, on the transformed TRS the termination of
completion is impossible to guarantee. On the {\tt sum} example we shown in
Section~\ref{equation} how equations can be generated from the TRS using
termination results of~\cite{Genet-JLAMP15}. This is not possible on the
transformed TRS because it does not conform to the typed functional TRS
schema required by~\cite{Genet-JLAMP15}. To have a termination guarantee on the
transformed TRS one would need an extension of the termination results
of~\cite{Genet-JLAMP15} on general TRS. This extension is ongoing work.

\section{Conclusion}

In this paper, we have proposed a sound and precise algorithm over-approximating
the set of terms reachable by innermost rewriting. As far as we know this is the
first algorithm solving this problem for any left linear TRS and any regular
initial set of terms. It is based on tree automata completion and equational
abstractions with a set $E$ of approximation equations. The algorithm also
minimizes the set of added transitions by completing the product automaton
(between $\autoInit$ and $\airr(R)$). We proposed \timbuk STRAT~\cite{timbuk}, a prototype implementation of this method.

The precision of the approximations have been shown on a theoretical and a
practical point of view. On a theoretical point of view, we have shown that the
approximation automaton recognises no more terms than those effectively
reachable by innermost rewriting modulo the approximation $E$. On the practical
side, unlike other techniques used to statically analyze functional
programs~\cite{OngR-POPL11,KochemsO-RTA11,JonesA-TCS07},
innermost equational completion can take the call-by-value strategy into account. As a result, for
programs whose semantics highly depend on the evaluation strategy, innermost
equational completion yields more accurate results. This should open new ways to
statically analyze functional programs by taking evaluation strategies into
account. 

Approximations of sets of ancestors or descendants can also improve existing
termination techniques~\cite{GeserHWZ-RTA05,Middeldorp-ENTCS02}.
In the dependency pairs setting, such approximations can remove edges in a
dependency graph by showing that there is no rewrite derivation from a
pair to another. Besides, it has been shown that dependency pairs can prove innermost
termination~\cite{GieslT-JAR06}. In this case, {\em innermost} equational completion can
more strongly prune the dependency graph: it can show that there is no {\em
  innermost} derivation from a pair to another. For instance, on the TRS:

{\small
$\begin{array}{l|l|l}
choice(X,Y) \rw X   &   choice(X,Y) \rw Y    &   eq(s(X),s(Y)) \rw eq(X,Y) \\
eq(0,0) \rw \TT      & eq(s(X),0) \rw \FF      & eq(0,s(Y)) \rw \FF \\
g(0,X) \rw eq(X,X)  & g(s(X),Y) \rw g(X,Y)   &  f(\FF,X,Y) \rw f(g(X,choice(X,Y)),X,Y) 
\end{array}$}

\noindent
We can prove that any term of the form $f(g(t_1,choice(t_2,t_3)),t_4,t_5)$
cannot be rewritten (innermost) to a term of the form $f(\FF,t_6,t_7)$ (for all
terms $t_i\in\TF$, $i=1\ldots 7$). This proves that, in the dependency graph,
there is no cycle on this pair. This makes the termination proof of this TRS
simpler than what AProVE~\cite{GieslAPROVE-IJCAR14} does: it needs more complex
techniques, including proofs by induction. Simplification of termination proofs
using innermost equational completion should be investigated more deeply. 

\section{Perspectives}
\label{perspectives}
For further work, we want to improve and expand our implementation of innermost
equational completion in order to design a strategy-aware and higher-order-able
static analyzer for a reasonable subset of a real functional programming
language with call-by-value like OCaml, F\#, Scala, Lisp or Scheme. To translate
OCaml programs to TRS, a possible solution would be to use the translator of
HOCA~\cite{AvanziniLM-ICFP15}. HOCA translates a subset of higher-order OCaml
programs to TRS to perform complexity analysis. HOCA uses the same
encoding as the one we used in Section~\ref{sect-fun}. We already showed
in~\cite{GenetS-rep13} that completion can perform static analysis on examples
taken from~\cite{OngR-POPL11}. We also want to study if the
innermost completion covers the TRS classes preserving regularity
of~\cite{RetyV-RTA02,GodoyJacquemard-WRS08}. Note that 
Corollary~\ref{corollary-emptyset} already ensures that if completion
terminates with $E=\emptyset$ then it exactly computes innermost reachable
terms. Thus, proving that it covers the classes
of~\cite{RetyV-RTA02,GodoyJacquemard-WRS08} would essentially consist in proving
termination of innermost completion on those classes and with $E=\emptyset$. A
similar proof technique has already been used for standard completion and general
rewriting~\cite{FeuilladeGVTT-JAR04,Genet-JLAMP15}.

As explained in Section~\ref{related}, innermost completion cannot handle non
left-linear TRS because the set of irreducible terms may not be recognized by a
tree automaton. A possible research direction would be to
replace, in innermost completion, $\airr(R)$ by a {\em deterministic reduction
  automaton} (see Section~4.4.5 of~\cite{tata}) recognizing $\lirr(R)$. However,
most of the algorithms of the reduction automaton class, needed in completion,
have a very high complexity. A simple workaround would be, instead, to use a tree
automaton $\airr^+(R)$ over-approximating $\lirr(R)$. This would trigger more critical pairs and thus produce a bigger
(though correct) over-approximation. Then, if testing the reducibility can be solved
in an exact or approximated way, dealing with non left-linear rules in
completion may be easier in the innermost than in the general case.  Roughly, to
solve a critical pair between a non left-linear rule $f(x,x) \rw g(x)$ and a
tree automaton transition of the form $f(q_1,q_2) \move q$ it is necessary to
check whether there exist terms recognized by $q_1$ and $q_2$. This test is
necessary because tree automata produced by completion are not deterministic in
general. Then, if the test is true, completion adds a transition of the form
$g(q_3) \move q$ and {\em all the necessary transitions} to have
$\langrecs{\A}{q_3} = \langrecs{\A}{q_1}\cap \langrecs{\A}{q_2}$. Thus, with
non left-linear rules, critical pair solving (and detection) becomes more complex
and may result into a huge number of new transitions (the completed automaton may exponentially
grow-up w.r.t. the number of completion steps). Both~\cite{TakaiKS-RTA00, BoichutCHK-IJFCS09} propose sophisticated
techniques and data structures to limit the blow-up in practice. Surprisingly,
in the innermost case, the situation is likely to be a little bit more
favorable. To check if there is a critical pair between the non left-linear rule
$f(x,x) \rw g(x)$ and a transition of the form
$f(\paire(q_1,p_1),\paire(q_2,p_2)) \move \paire(q,p)$, we know that the
languages recognized by $\paire(q_1,p_1)$ and $\paire(q_2,p_2)$ consist only of
irreducible terms. If the set of transitions recognizing irreducible terms is
deterministic, then to decide if the language is empty we can check if
$\paire(q_1,p_1)$ is {\em syntactically} equal to
$\paire(q_2,p_2)$. Furthermore, to solve the critical pair, it is enough to add
the {\em two} transitions $g(\paire(q_1,p_1)) \move \paire(q_3,p_3)$ and
$\paire(q_3,p_3) \move \paire(q,p_3)$.  For functional TRS~\cite{Genet-JLAMP15}
and for constructor TRS~\cite{Rety-LPAR99}, the set of irreducible terms is
known a priori: they are constructor terms, {\em i.e.} the data terms of the TRS.
Thus the set of transitions recognizing those constructor terms can be defined
in deterministic way and will not be modified by completion. In particular, 
it will remain deterministic during the
completion process\footnote{More precisely, completion may add some epsilon transitions
  in this set of transitions, but the syntactical equality test can be replaced 
  by a test modulo epsilon transition closure.}. We need to check if this makes
it possible to {\em efficiently} approximate innermost reachable terms in the
presence of non left-linear rules.

Another objective is to extend this completion technique to other strategies.
Another strategy of interest for completion is the outermost strategy. This
would improve the precision of static analysis of functional programming
language using call-by-need evaluation strategy, like Haskell.  Extension of
this work to the outermost case is not straightforward but it may use similar
principles, such as running completion on a pair automaton rather than on single
automaton.  States in tree automata are closely related to positions in
terms. To deal with the innermost strategy, in states $\paire(q,p)$, the $p$
component tells us if terms $s$ (or subterms of $s$) recognised by the state
$\paire(q,p)$ are reducible or not. This is handy for innermost completion
because we can decide if a tuple $(\ell \to r, \sigma, \paire(q',p'))$ is an
{\em innermost} critical pair by checking if the $p$ components of the states
recognising strict subterms of $\ell\sigma$ are different from $p_{red}$. For
the outermost case, this is exactly the opposite: a tuple $(\ell \to r, \sigma,
\paire(q',p'))$ is an {\em outermost} critical pair only if all the {\em
  contexts} $C[\,]$ such that $C[\ell\sigma]$ is recognised, are irreducible
contexts. If it is possible to encode in the $p'$ component (using an automaton
or something else) whether all contexts embedding $\paire(q',p')$ are
irreducible or not, we should be able to define outermost critical pairs and,
thus, outermost completion in a similar manner.

\section*{Acknowledgments} The authors thank Ren\'{e} Thiemann for providing
the example of 
innermost terminating TRS for AProVE, Thomas Jensen, Luke Ong, Jonathan Kochems,
Robin Neatherway and the anonymous referees for their valuable comments and suggestions.

\bibliographystyle{plain}
\bibliography{sabbrev,genet,eureca}

\end{document}